\def\fullversionflag{1}
\newcommand{\esm}[1]{\ensuremath{#1}}
\newcommand{\mr}[1]{\esm{\mathrm{#1}}}
\newcommand{\ms}[1]{\esm{\mathsf{#1}}}
\newcommand{\mb}[1]{\esm{\mathbf{#1}}}
\newcommand{\mc}[1]{\esm{\mathcal{#1}}}
\newcommand{\Z}{\esm{\mathbb{Z}}}
\newcommand{\F}{\esm{\mathbb{F}}}
\newcommand{\N}{\esm{\mathbb{N}}}
\newcommand{\abs}[1]{\esm{\left| #1 \right|}}
\newcommand{\set}[1]{\esm{\left\{ #1 \right\}}}
\newtheorem{theorem}{Theorem}[section]
\newtheorem{definition}[theorem]{Definition}
\newtheorem{claim}[theorem]{Claim}
\newtheorem{lemma}[theorem]{Lemma}
\DeclareMathOperator{\sign}{sign}
\newcommand{\zo}{\{0,1\}}
\newcommand{\ord}[1]{\esm{{#1}^{\mr{th}}}}
\newcommand{\hyb}{\ms{Hyb}}
\newcommand{\matrixind}[2]{#1^{(#2)}}
\newcommand{\getsr}{\xleftarrow{\textsc{r}}}
\newcommand{\src}{\mr{src}}
\newcommand{\dst}{\mr{dst}}
\newcommand{\dsrc}{\mc{D}_{\src}}
\newcommand{\ddst}{\mc{D}_{\dst}}
\newcommand{\dnw}{\textsc{nw}}
\newcommand{\dne}{\textsc{ne}}
\newcommand{\dn}{\textsc{n}}
\newcommand{\de}{\textsc{e}}
\newcommand{\dw}{\textsc{w}}
\newcommand{\ds}{\textsc{s}}
\newcommand{\dir}{\esm{\mathsf{dir}}}
\newcommand{\axistodir}{\esm{\mathsf{IndexToDirection}}}
\newcommand{\execr}{\ms{REAL}}
\newcommand{\execi}{\ms{IDEAL}}
\newcommand{\ppt}{\ms{PPT}}
\newcommand{\appc}{\stackrel{c}{\approx}}
\newcommand\numberthis{\addtocounter{equation}{1}\tag{\theequation}}
\newcommand{\view}{\ms{view}}
\newcommand{\viewi}[2]{\matrixind{\view}{#1}_{#2}}
\newcommand{\real}{\ms{real}}
\newcommand{\adv}{\ms{Adv}}
\newcommand{\advsec}{\matrixind{\adv}{\text{sec}}}
\newcommand{\advpriv}{\matrixind{\adv}{\text{priv}}}
\newcommand{\negl}{\ms{negl}}
\newcommand{\pir}{\ms{PIR}}
\newcommand{\hybrid}[1]{\ms{H}_{#1}}
\newcommand{\cunblind}{\esm{C^{\text{unblind}}}}
\newcommand{\tcunblind}{\esm{\tilde C^{\text{unblind}}}}
\newcommand{\bcunblind}{\esm{\bar C^{\text{unblind}}}}
\newcommand{\laffine}{\esm{L^{\text{affine}}}}
\newcommand{\hlaffine}{\esm{\hat L^{\text{affine}}}}
\newcommand{\blaffine}{\esm{\bar L^{\text{affine}}}}
\newcommand{\tlaffine}{\esm{\tilde L^{\text{affine}}}}
\newcommand{\lunblind}{\esm{L^{\text{unblind}}}}
\newcommand{\hlunblind}{\esm{\hat L^{\text{unblind}}}}
\newcommand{\blunblind}{\esm{\bar L^{\text{unblind}}}}
\newcommand{\tlunblind}{\esm{\tilde L^{\text{unblind}}}}
\newcommand{\prfkeylen}{\esm{\rho}}
\newcommand{\enckeylen}{\esm{\ell}}
\newcommand{\vorig}{\esm{n}}
\newcommand{\vcomp}{\esm{d}}
\newcommand{\diri}{\dir}
\newcommand{\para}[1]{\iftoggle{fullversion}{\paragraph{#1}}{\noindent \textbf{#1}}}
\def\yao{\ms{Yao}}
\def\piyao{\Pi_\yao}
\def\yaogarble{\yao.\ms{Garble}}
\def\yaoenc{\yao.\ms{Encode}}
\def\yaoeval{\yao.\ms{Eval}}
\def\affine{\ms{affine}}
\def\sk{\ms{sk}}
\def\poly{\ms{poly}}
\newcounter{protocounter}  %
\begin{document}

\iftoggle{fullversion}{
  \title{Privacy-Preserving Shortest Path Computation \\ {\normalsize (Extended Version)}}
\author{David J. Wu \qquad Joe Zimmerman \qquad J{\'e}r{\'e}my Planul
                    \qquad John C. Mitchell \\ \\
        Stanford University \\
        \texttt{\{dwu4, jzim, mitchell\}@cs.stanford.edu}, \texttt{jeremy.planul@ens-lyon.org}}

\unmarkedfntext{This is the extended version of a paper by the same name
that appeared at the {\it Network and Distributed System Security Symposium (NDSS)}
in February, 2016. Permission to freely reproduce all or part
 of this paper for noncommercial purposes is granted provided that
 copies bear this notice and the full citation on the first
 page. Reproduction for commercial purposes is strictly prohibited
 without the prior written consent of the Internet Society, the
 first-named author (for reproduction of an entire paper only), and
 the author's employer if the paper was prepared within the scope
 of employment.  \\
 NDSS '16, 21-24 February 2016, San Diego, CA, USA\\
 Copyright 2016 Internet Society, ISBN 1-891562-41-X\\
 http://dx.doi.org/10.14722/ndss.2016.23052}

}{
  \title{Privacy-Preserving Shortest Path Computation}
\author{\IEEEauthorblockN{David J. Wu, Joe Zimmerman,
        J{\'e}r{\'e}my Planul, John C. Mitchell}
        \IEEEauthorblockA{Stanford University \\
        \{dwu4, jzim, mitchell\}@cs.stanford.edu,
         jeremy.planul@ens-lyon.org}}

\IEEEoverridecommandlockouts
\makeatletter\def\@IEEEpubidpullup{9\baselineskip}\makeatother
\IEEEpubid{\parbox{\columnwidth}{Permission to freely reproduce all or part
    of this paper for noncommercial purposes is granted provided that
    copies bear this notice and the full citation on the first
    page. Reproduction for commercial purposes is strictly prohibited
    without the prior written consent of the Internet Society, the
    first-named author (for reproduction of an entire paper only), and
    the author's employer if the paper was prepared within the scope
    of employment.  \\
    NDSS '16, 21-24 February 2016, San Diego, CA, USA\\
    Copyright 2016 Internet Society, ISBN 1-891562-41-X\\
    http://dx.doi.org/10.14722/ndss.2016.23052
}
\hspace{\columnsep}\makebox[\columnwidth]{}}
\thispagestyle{empty}
}
\date{}
\maketitle

\begin{abstract}
Navigation is one of the most popular cloud computing services. But in
virtually all cloud-based navigation systems, the client must reveal her
location and destination to the cloud service provider in order to learn the
fastest route. In this work, we present a cryptographic protocol for
navigation on city streets that provides privacy for both the client's
location and the service provider's routing data. Our key ingredient is a
novel method for compressing the next-hop routing matrices in networks such as
city street maps. Applying our compression method to the map of Los Angeles,
for example, we achieve over tenfold reduction in the representation size. In
conjunction with other cryptographic techniques, this compressed
representation results in an efficient protocol suitable for fully-private
real-time navigation on city streets. We demonstrate the practicality of our
protocol by benchmarking it on real street map data for major cities such as
San Francisco and Washington,~D.C.
\end{abstract}

\section{Introduction}
\label{sec:introduction}

Location privacy is a major concern
among smartphone users, and there have been numerous controversies due to
companies tracking users' locations~\cite{AV11,Che11}. Among the
various applications that require location information, navigation is one of the most popular.
For example, companies such as Google, Apple, and Waze
have built traffic-aware navigation apps to provide users
with the most up-to-date routing information. But to use these services,
users must reveal their location and destination to the cloud service provider.
In doing so,
they may also reveal other sensitive information about their personal lives, such as their health
condition, their social and political affiliations, and more.

One way to provide location privacy is for the user to download the entire map
from the cloud service provider and then compute the best route locally
on her own mobile device.
Unfortunately, since service providers invest significant resources
to maintain up-to-date routing information, they are not incentivized to publish
their entire routing database in real-time.
Even in the case of a paid premium service, in which the service provider
does not derive compensation from learning the user's location data,
it is not obvious how to achieve fully-private navigation.
The user does not trust the cloud provider with her location data,
and the cloud provider does not trust the user with its up-to-date routing information,
so neither party has all of the data to perform the computation.
While general-purpose cryptographic tools such as multiparty computation
solve this problem in theory (see Section~\ref{sec:related-work}),
these protocols are prohibitively expensive in practice
for applications such as real-time navigation.

\para{Our results.} 
In this work, we present an efficient cryptographic protocol for
{\em fully-private} navigation:~the user
keeps private her location and destination,
and the service provider keeps private
its proprietary routing information (except for the routing information
associated with the specific path requested by the user and a few generic
parameters pertaining to the network).
We give a complete implementation of our protocol and
benchmark its performance on real street map data (Section~\ref{sec:experiments}).
Since our protocol is real-time (the user continues receiving directions
throughout the route), we benchmark the performance ``per hop'',
where each hop roughly corresponds to an intersection between streets.\footnote{In
a few cases, hops in our construction occur mid-street or in instances such as traffic circles.
These are rare enough that even in large cities such as Los Angeles,
the total number of hops along any route is less than 200.}
For cities such as San Francisco and Washington,~D.C.,
each hop in our protocol requires about 1.5 seconds
and less than $100$~KB of bandwidth.
In addition, before the protocol begins, we execute
a preprocessing step that requires bandwidth
in the tens of megabytes. Since this preprocessing step can be performed at any time,
in practice it would likely be run via a fast Wi-Fi connection,
before the mobile user needs the real-time navigation service,
and thus the additional cost is very modest.
To our knowledge,
ours is the first fully-private navigation
protocol efficient enough to be feasible in practice.

\para{Our technical contributions.}
In our work, we model street-map networks as graphs,
in which the nodes correspond
to street intersections, and edges correspond to streets.
In our model, we assume that the network topology is public (i.e.,
in the case of navigation on city streets,
the layout of the streets is publicly known).
However, only the service provider knows the up-to-date traffic conditions, and thus the
shortest path information. In this case, the server's ``routing information'' consists
of the weights (that is, travel times) on the edges in the network.

By modeling street-maps as graphs, we can easily construct
a straw-man private navigation protocol based on
symmetric private information retrieval (SPIR)~\cite{GIKM00,KO97,NP05}.
Given a graph $\mc{G}$ with $n$ nodes, the server first constructs a database with $n^2$ records, each
indexed by a source-destination pair $(s,t)$. The record indexed $(s,t)$ contains the shortest
path from $s$ to $t$. To learn the shortest path from $s$ to $t$, the client engages in SPIR
with the server for the record indexed $(s,t)$.
Security of SPIR implies that the client just learns the shortest path and the server learns
nothing. While this method satisfies the basic security requirements, its complexity
scales quadratically in the number of nodes in the graph. Due to the computational cost of
SPIR, this solution quickly becomes infeasible in the size of the graph.

Instead,
we propose a novel method to compress the routing information in street-map networks.
Specifically, given a graph $\mc{G}$ with $\vorig$ nodes, we define the next-hop routing
matrix $M \in \Z^{\vorig \times \vorig}$ for $\mc{G}$ to be the matrix
where each entry $M_{st}$
gives the index of the first node on the shortest path from node $s$ to node $t$.
To apply our compression method, we first preprocess the graph
(Section~\ref{sec:graph-compression}) such that each
entry in the next-hop routing matrix $M$ can be specified by two bits:
$M_{st} = (\matrixind{M}{\dne}_{st}, \, \matrixind{M}{\dnw}_{st})$ where
$\matrixind{M}{\dne}, \matrixind{M}{\dnw} \in \set{-1,1}^{\vorig \times \vorig}$.
We then compress $\matrixind{M}{\dne}$ by computing
a \emph{sign-preserving decomposition}: two matrices
$\matrixind{A}{\dne}, \matrixind{B}{\dne} \in \Z^{\vorig \times \vcomp}$ 
where $d \ll n$ such that
$\matrixind{M}{\dne} = \sign(\matrixind{A}{\dne} \cdot (\matrixind{B}{\dne})^T$).
We apply the same procedure to compress
the other component $\matrixind{M}{\dnw}$.
The resulting compression
is lossless, so there is no loss in accuracy in the
shortest paths after applying our transformation.
When applied to the road network for the city of Los Angeles, we obtain
over 10x reduction in the size of the representation.
Our compression method is highly parallelizable and by running our computation on
GPUs, we can compress next-hop matrices with close to 50 million elements
(for a 7000-node network) in under ten minutes.

Moreover, our compression method enables an efficient protocol for a fully-private
shortest path computation.
In our protocol, the rounds of interaction correspond to the nodes in the shortest path.
On each iteration of the protocol, the client learns the next
hop on the shortest path to its requested destination. Abstractly, if the client is currently
at a node $s$ and navigating to a destination $t$, then after one round of the 
protocol execution, the client should learn the next hop given by
$M_{st} = (\matrixind{M}{\dne}_{st}, \, \matrixind{M}{\dnw}_{st})$. Each round
of our protocol thus reduces to a two-party computation of the components
$\matrixind{M}{\dne}_{st}$ and $\matrixind{M}{\dnw}_{st}$.
Given our compressed representation of the next-hop routing matrices,
computing $\matrixind{M}{\dne}_{st}$ reduces to computing the sign of the inner product between
the $\ord{s}$ row of $\matrixind{A}{\dne}$ and the $\ord{t}$ row of $\matrixind{B}{\dne}$,
and similarly for $\matrixind{M}{\dnw}$.
In our construction, we give an efficient method for inner product evaluation based
on affine encodings, and use Yao's garbled circuits~\cite{Yao86,BHR12} to evaluate
the sign function. An important component of our protocol design is a novel way
of efficiently combining affine encodings and garbled circuits. Together, these methods
enable us to construct an efficient, fully-private navigation protocol.

\para{Other approaches.}
An alternative method for private navigation is to use
generic tools for two-party computation such as
Yao's garbled circuits~\cite{Yao86,BHR12} and Oblivious RAM (ORAM)~\cite{GO96,SDSFRYD13}.
While these approaches are versatile, they are often prohibitively expensive for city-scale
networks (in the case of Yao circuits), or do not provide strong security guarantees against
malicious clients (in the case of ORAM). For instance, the garbled-circuit approach
by Carter et al.~\cite{CMTB13,CLT14}
requires several minutes of computation to answer a single shortest path query in a road network with
just 100 nodes. Another generic approach combining garbled circuits and ORAM~\cite{LWNHS15}
requires communication on the order of GB and run-times ranging from tens of minutes to several hours
for a single query on a network with 1024 nodes. Thus, current
state-of-the-art tools for general two-party computation do not give a viable solution for
private navigation in city-scale networks. We survey other
related methods in Section~\ref{sec:related-work}.

\section{Preliminaries and Threat Model}
\label{sec:prelim}

We begin with some notation. 
For a positive integer $n$, let $[n]$ denote the
set of integers $\set{1,\ldots,n}$.
For two $\ell$-bit strings $x,y \in \zo^\ell$, we write $x \oplus y$ to denote
their bitwise XOR. For a prime $p$,
we write $\F_p$ to denote the finite field with $p$ elements,
and $\F_p^*$ to denote its multiplicative group.
Let $\mc{D}$ be a probability distribution. We write
$x \gets \mc{D}$ to denote that $x$ is drawn from $\mc{D}$.
Similarly, for a finite set $S$ we write $x \getsr S$
to denote that $x$ is drawn uniformly at random from $S$.
A function $f(\lambda)$ is negligible in a security parameter $\lambda$ if
$f = o(1/\lambda^c)$ for all $c \in \N$.

For two distribution ensembles
$\set{\mc{D}_1}_\lambda, \set{\mc{D}_2}_\lambda$, we write
$\set{\mc{D}_1}_\lambda \appc \set{\mc{D}_2}_\lambda$ to denote that
$\set{\mc{D}_1}_\lambda$ and $\set{\mc{D}_2}_\lambda$ are computationally
indistinguishable (i.e., no probabilistic polynomial-time algorithm
can distinguish them, except with probability negligible in
$\lambda$).
We write $\set{\mc{D}_1}_\lambda \equiv \set{\mc{D}_2}_\lambda$
to denote that $\set{\mc{D}_1}_\lambda$ and $\set{\mc{D}_2}_\lambda$
are identically distributed for all $\lambda$.
For a predicate $\mc{P}(x)$, we write $\mb{1}\{ \mc{P}(x) \}$ to denote the indicator
function for $\mc{P}(x)$, i.e., $\mb{1}\{ \mc{P}(x) \} = 1$ if and only
if $\mc{P}(x)$ is true, and otherwise, $\mb{1}\{ \mc{P}(x) \} = 0$. If
$\mc{G}$ is a directed graph, we write $(u,v)$ to denote the edge from
node $u$ to node $v$.

A function $F: \mc{K} \times \mc{X} \to \mc{Y}$ with key-space $\mc{K}$,
domain $\mc{X}$, and range $\mc{Y}$ is a PRF~\cite{GGM86} if no efficient adversary can
distinguish outputs of the PRF (with key $k \getsr \mc{K}$, evaluated on inputs
chosen adaptively by the adversary) from
the corresponding outputs of a truly random function from $\mc{X} \to \mc{Y}$.

\para{Threat model.}
We give a high-level survey of our desired security properties, and defer the details
to Section~\ref{sec:security-model}.
We operate in the two-party setting where both parties know the network topology as well
as a few generic parameters about the underlying graph structure (described concretely
in Section~\ref{sec:security-model}), but only
the server knows the weights (the routing information). The client
holds a source-destination pair.
At the end of the protocol execution, the client
learns the shortest path between its requested source and destination, while the server
learns nothing. The first property we require is privacy for the client's location.
Because of the sensitivity of location information, we require privacy to hold even against malicious
servers, that is, servers whose behavior can deviate from the protocol specification.

The second requirement is privacy for the server's routing information, which may contain
proprietary or confidential information. The strongest notion we can impose
is that at the end of the protocol execution,
the client does not learn anything more about the graph other than the shortest path between its
requested source and destination and some generic parameters associated with the underlying network.
While this property is not difficult to achieve
if the client is semi-honest (that is, the client adheres to the
protocol specification), in practice
there is little reason to assume that the client will behave this way.   %
Thus, we aim to achieve security against malicious clients. 
In our setting, we will show that a malicious client learns only the shortest path
from its requested source to its requested destination, except
for failure events that occur with probability at most $\approx 2^{-30}$.
For comparison, $2^{-30}$ is the probability that an adversary running in time $\approx 2^{50}$
is able to guess an 80-bit secret key.\footnote{
Even in the case of these low-probability failure events,
one can show that a malicious client only learns a bounded-length path
emanating from its requested source, though it may not be a shortest path
to any particular destination.}

To summarize, we desire a protocol that provides privacy against a malicious server and security 
against a malicious client. We note that our protocol does not protect against the
case of a server corrupting the map data; in practice, we assume that the map provider is trying to provide a useful
service, and thus is not incentivized to provide misleading or false navigation information.

\section{Graph Processing}
\label{sec:graph-compression}

As described in Section~\ref{sec:introduction}, we model street-map networks as
directed graphs,
where nodes correspond to intersections, and edges correspond to streets. To enable an
efficient protocol for fully-private shortest path computation, we first develop
an efficient method
for preprocessing and compressing the routing information in the network.
In this section, we first describe our preprocessing procedure, which consists of
two steps: introducing dummy nodes to constrain the out-degree of the graph,
and assigning a cardinal direction to each edge. Then,
we describe our method for compressing the routing information in the graph;
here, we exploit the geometric structure of the graph.

\begin{figure*}[t]
  \begin{center}
    \fbox{\includegraphics[height=5.2cm]{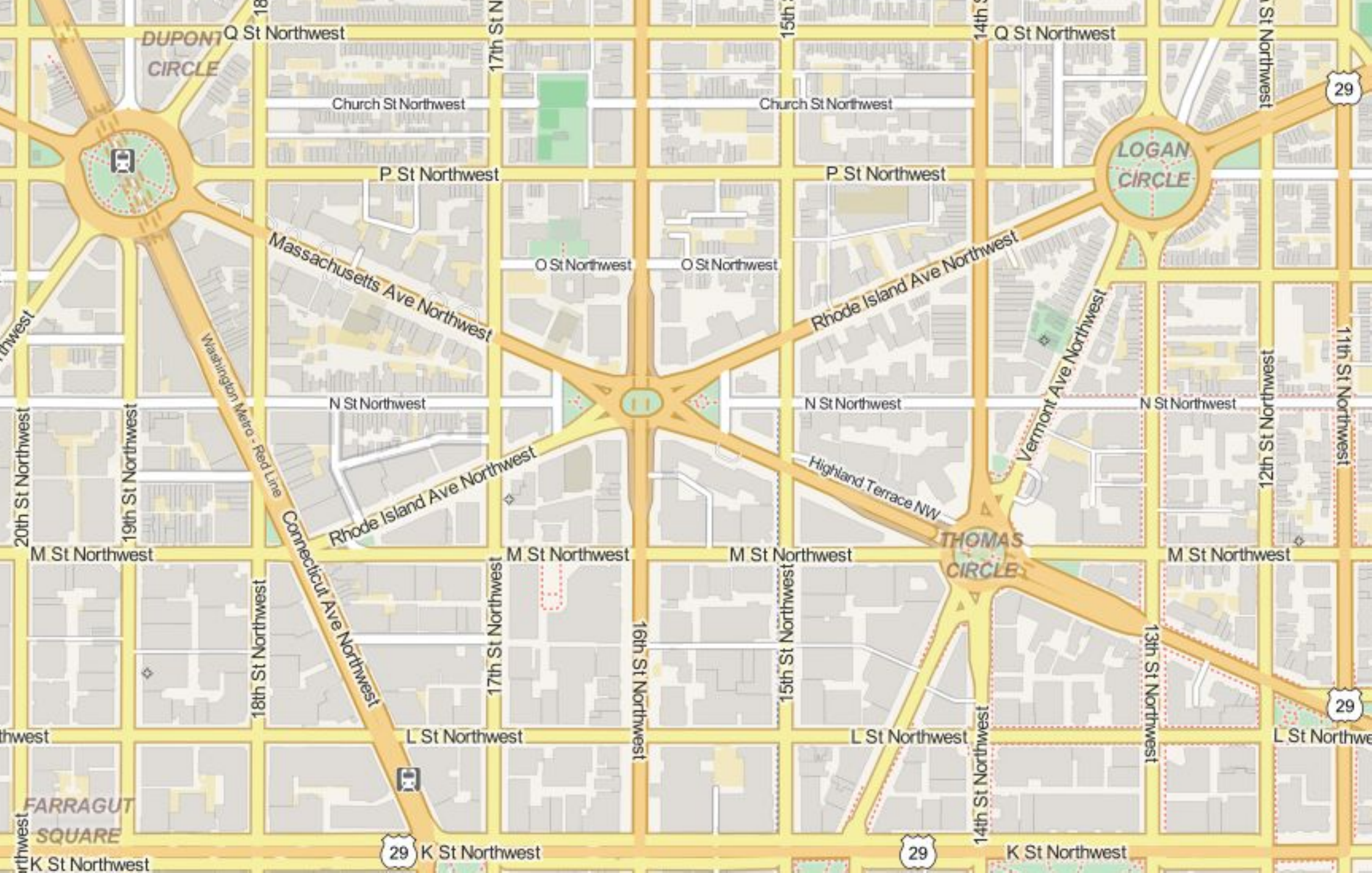}}
    \fbox{\includegraphics[height=5.2cm]{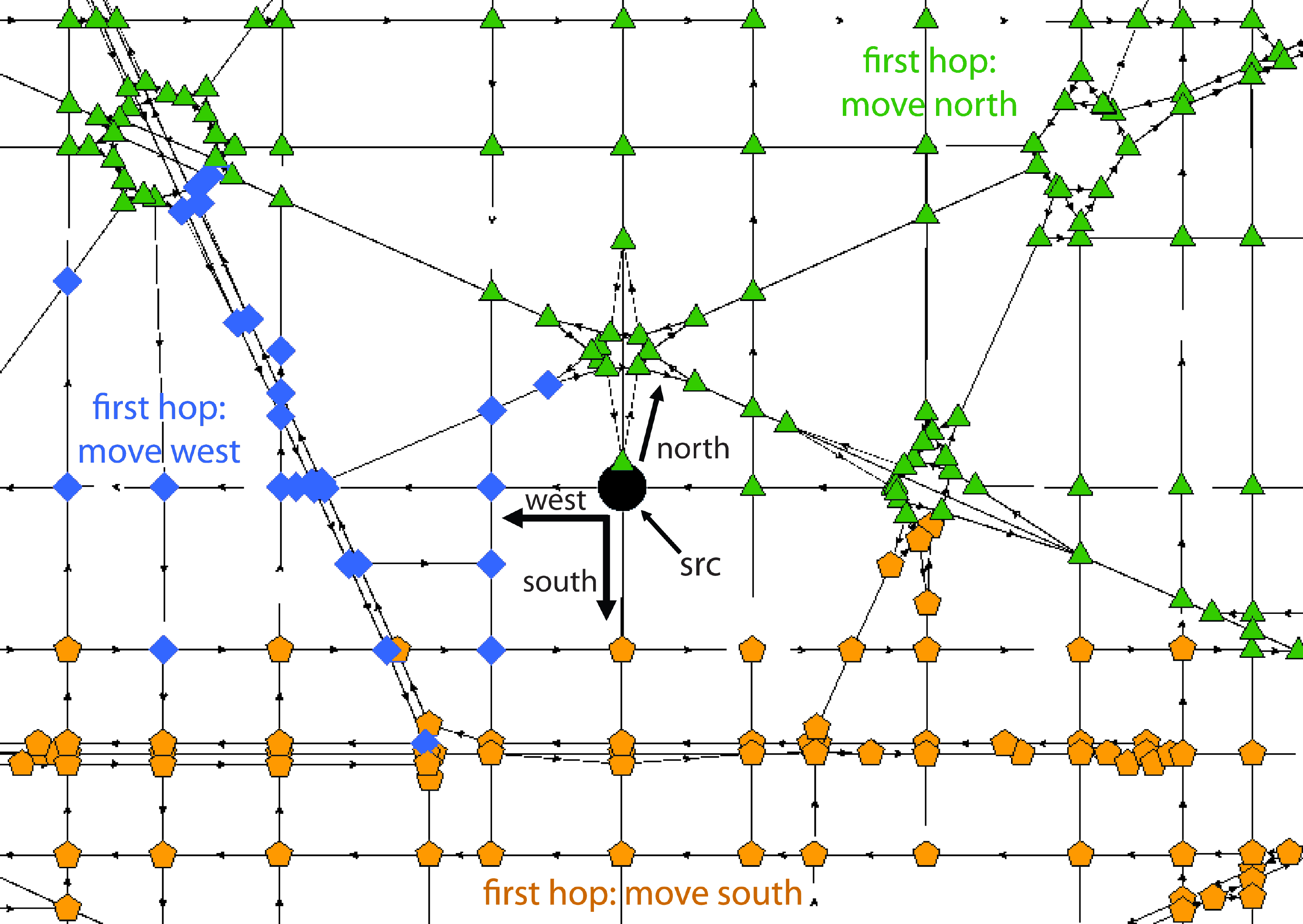}}
  \end{center}
  \caption{Subsection of map of Washington,~D.C. from OpenStreetMap~\cite{OpenStreetMap} (left)
           and visualization of the routing network after preprocessing (right). The visualization on
           the right shows the first hop of the shortest path from the source node (denoted by
           the circle) to all other nodes in the graph (denoted by a polygon). In this example,
           the source node has three neighbors: to the north, west, and south (as indicated in the diagram).
           If the first hop in the shortest path from the source to a node is to move north, then the node is
           represented by a green triangle. If the first hop is to move west, then the node is
           represented by a blue diamond, and if the first hop is to move south, then the node is
           represented by an orange pentagon.}
  \label{fig:map-preprocessing}
\end{figure*}

\para{Bounding the out-degree.} Let $\mc{G}$ be the directed graph representing
the road network. We assume that the nodes in $\mc{G}$ have
low out-degree. In a road network (see Figure~\ref{fig:map-preprocessing} for an example),
the nodes correspond to street intersections, and thus
typically have at most four outgoing edges, one in each cardinal direction. 
In the first step of our preprocessing procedure,
we take a weighted, directed graph $\mc{G}$ and transform it into a weighted, directed
graph $\mc{G}'$ where each node has maximum out-degree 4. Specifically, we start
by setting $\mc{G}' = \mc{G}$. Then, as long as there is a node $u \in \mc{G}'$
with neighbors $v_1,\ldots,v_\ell$ and $\ell>4$, we do the following. First, we add a new node $u'$ to $\mc{G}'$.
For $i \ge 4$, we add the edge $(u', v_i)$ to $\mc{G}'$ and remove the edge $(u,v_i)$ from
$\mc{G}'$. We also add a zero-weight edge from $u$ to $u'$ in $\mc{G}'$. By construction,
this transformation preserves the shortest-path between nodes in $\mc{G}$
and constrains the out-degree of all nodes in $\mc{G}'$ to 4.

\para{Orienting the edges.} In a road network,
we can associate each node by an $(x,y)$ pair in the coordinate plane
(for example, the node's latitude and longitude).
Consider a coordinate system aligned with
the cardinal directions: the $x$-axis corresponds to the east-west axis
and the $y$-axis corresponds to the north-south axis.
Then, for each node $u$ in the graph $\mc{G}$,
we associate each of its neighbors $v_i$ ($0 \le i < 4$)
with a direction $\dir_i \in \set{\textsc{n}, \textsc{e}, \textsc{s}, \textsc{w}}$
(for north, east, south, west, respectively)
relative to $u$. For a concrete example, refer
to the visualization of the preprocessed graph in Figure~\ref{fig:map-preprocessing}. Here,
the center node (labeled ``src'') has three neighbors, each of which is associated with
a cardinal direction: north, west, or south in this case. We define the orientation
of an edge to be the direction associated with the edge.

To determine the orientation of the edges in $\mc{G}$, we proceed as follows.
For each node $u \in \mc{G}$, we associate a \emph{unique} direction
$\dir_i \in \set{\textsc{n}, \textsc{e}, \textsc{s}, \textsc{w}}$
with each neighbor $v_i$ of $u$.
In assigning the four cardinal directions to each node's neighbors,
we would like to approximate the true geographical locations
of the nodes.
In our setting, we formulate this assignment as a
bipartite matching problem for each node $u$,
with $u$'s neighbors (at most 4) forming
one partition of the graph, and the four cardinal directions $\set{\textsc{n}, \textsc{e}, \textsc{s}, \textsc{w}}$
forming the other. 
We define the cost of a matching between a neighbor $v_i$ of $u$ and a direction $\dir_j$ to be
the angle formed by the vector from $u$ to $v_i$ and the unit vector aligned in the direction $\dir_j$.
In assigning directions to neighbors, we desire a matching
that minimizes the costs of the matched neighbors. Such
a matching can be computed efficiently using the Hungarian method \cite{KY55}.
In this way,
we associate a cardinal direction with each edge in $\mc{G}$.

\para{Compressing shortest paths.} Next, we describe a method for compressing
the next-hop routing matrix for a road network. Let $\mc{G}$ be a directed graph
with $n$ nodes and maximum out-degree 4. Using the method
described above, we associate a direction 
$\dir \in \set{\textsc{n}, \textsc{e}, \textsc{s}, \textsc{w}}$ with
each edge in $\mc{G}$.
Since there are four possible values for $\dir$, we can encode the direction
using exactly two bits $b_\dne$ and $b_\dnw$, where $b_\dne = 0$ if and only if
$\dir \in \set{\textsc{n}, \textsc{e}}$, and $b_\dnw = 0$ if and only if
$\dir \in \set{\textsc{n}, \textsc{w}}$. Intuitively, $b_\dne$ encodes the
direction with respect to the northwest-southeast axis while $b_\dnw$
encodes the direction with respect to the northeast-southwest axis.
Thus, for each node $u \in \mc{G}$, we associate a unique two-bit index
$(b_\dne, b_\dnw)$ with each of its outgoing edges.
For notational convenience, we define a function
$\axistodir$ that maps an index $(b_\dne, b_\dnw)$ to the corresponding
direction $\dir \in \set{\dn, \de, \ds, \dw}$.
\iftoggle{fullversion}{
  Specifically,
  \begin{equation}
  \begin{aligned}
    \axistodir(0, 0) &= \dn \\
    \axistodir(1, 0) &= \dw
  \end{aligned} \qquad
  \begin{aligned}
    \axistodir(0, 1) &= \de \\
    \axistodir(1, 1) &= \ds.
  \end{aligned}
  \label{eq:index-to-direction}
  \end{equation}
}{
  For example, $\axistodir(0, 0) = \dn$.
}

We next compute the shortest path $p_{st}$ between all source-destination pairs $(s,t)$ in $\mc{G}$.
In our implementation, we run Dijkstra's algorithm~\cite{Dij59} on each node in $\mc{G}$,
but the precise choice of shortest-path algorithm does not matter for our compression procedure,
as its cost is dominated by the other steps.
After computing all-pairs shortest paths in $\mc{G}$, we define the next-hop routing matrices
$\matrixind{M}\dne,\ \matrixind{M}\dnw \in \set{0,1}^{n \times n}$
for $\mc{G}$, where $(\matrixind{M}\dne_{st},\ \matrixind{M}\dnw_{st})$
encodes the direction of the first edge in the shortest path $p_{st}$. 

Just as the geometry of road networks enables us to
orient the edges, the geometry also suggests a method for compressing the next-hop
routing matrices. Take for example the road network
in Figure~\ref{fig:map-preprocessing}. From the visualization, we observe that
when the destination~$t$ lies to the north of the source~$s$, the first hop on the shortest
path is usually to take the edge directed north. In our framework, this means that
both $\matrixind{M}\dne_{st}$
and $\matrixind{M}\dnw_{st}$ are more likely to be~$0$ rather than~$1$.   %
Thus, by orienting the edges in the graph
consistently, we find that the resulting routing matrices
$\matrixind{M}\dne$ and $\matrixind{M}\dnw$
have potentially compressible structure.

To compress a matrix $M \in \zo^{\vorig \times \vorig}$, we first rescale the elements in 
$M$ to be in $\set{-1,1}$. Our goal is to find two matrices~$A,B \in \Z^{\vorig \times \vcomp}$
such that $\sign(A B^T) = M$ with $\vcomp < \vorig$.\footnote{This is not the same
as computing a low-rank approximation of $M$. Our goal
is to find low-rank matrices whose product
preserves the \emph{signs} of the entries of $M$.
In practice, the matrix $M$ is full-rank, and not well-approximated by a low-rank product.}
We can formulate the problem of computing $A$ and $B$ as an optimization problem
with objective function $J(A,B)$:
\begin{equation}
\label{eq:lbfgs-objective}
  J(A, B) = \sum_{j = 1}^\vorig \sum_{k = 1}^\vorig
    \ell \left( \left(A B^T \right)_{jk},\, M_{jk} \right),
\end{equation}
where $\ell(x,t)$ is a loss function.
A simple loss function
is the \mbox{0-1} loss function $\ell(x,t) = \mb{1}\{ \sign(x) \ne t \}$, which assigns
a uniform loss of 1 whenever the sign of the predicted value $x$ does not match the target value $t$.
However, from an optimization perspective, the \mbox{0-1} loss is not a good loss function
since it is non-convex and neither continuous nor differentiable.
Practitioners have instead used continuous convex approximations to the \mbox{0-1} loss,
such as the
SVM hinge loss $\ell_{\mr{hinge}}(x,t) = \max(0, 1 - tx)$~\cite{RVCPV04} 
and its quadratically smoothed variant, the modified Huber hinge loss~\cite{Zha04}:
\begin{equation}
\label{eq:huber-loss}
  \ell_{\mr{huber}}(x,t) = \begin{cases} \max(0, 1 - tx)^2 & t x \ge -1 \\ 
    -4 \cdot tx & \mr{otherwise}.
  \end{cases}
\end{equation}
In our setting, we use the modified Huber hinge loss $\ell_{\mr{huber}}$.
While $\ell_{\mr{huber}}$ is convex in the input $x$, it is
not convex in the optimization parameters $A,B$ (due to the matrix product), and so the
objective function $J(A,B)$ is not convex in $A,B$. Thus,
standard optimization algorithms like LBFGS~\cite{BLNZ95} are not guaranteed to find the
global optimum. The hope is that even a local optimum will correspond to
a low-rank, sign-preserving decomposition of the matrix $M$, and indeed, we
confirm this empirically.

When we perform the optimization using LBFGS, the matrices $A,B$
are real-valued. To obtain matrices over the integers, we scale the 
entries in $A,B$ by a constant factor and round.
The scaling factor is empirically chosen so as to preserve the relation
$\sign(AB^T) = M$. We describe this in greater detail in
Section~\ref{sec:experiments}.

\section{Private Navigation Protocol}
\label{sec:shortest-paths-protocol}

In this section, we describe our protocol for privately computing shortest
paths. First, we describe the cryptographic building blocks we employ
in our construction.

\para{Private information retrieval.} A computational private
information retrieval~(PIR)~\cite{CMS99,CGKS95,KO97,Cha04,GR05,Lip05,OS07} 
protocol is a two-party protocol between a sender who holds a database
$\mc{D} = \set{r_1, \ldots, r_n}$ and a receiver who holds an index
$i \in [n]$. At the conclusion of the PIR protocol, the receiver
learns $r_i$ while the sender learns nothing. A PIR protocol only ensures
privacy for the receiver's index (and not for the remaining records in
the sender's database).

\para{Oblivious transfer.} Similar to PIR, an 1-out-of-$n$ oblivious transfer~(OT)
protocol~\cite{NP99,NP01,NP05,Rab05} is a two-party protocol that allows the receiver to
privately retrieve a record $r_i$ from the sender who holds a database $\set{r_1,\ldots,r_n}$. 
In contrast with PIR, an OT protocol also provides privacy for the sender:
the receiver only learns its requested record $r_i$, and nothing else
about the other records.
Closely related is the notion of symmetric PIR (SPIR)~\cite{KO97,GIKM00,NP05}, which is functionally
equivalent to OT.

\para{Garbled circuits.}
Yao's garbled circuits~\cite{Yao86,LP09,BHR12} were initially developed for secure two-party computation.
The core of Yao's construction is an efficient transformation that takes a Boolean
circuit $C: \zo^n \to \zo^m$ and produces a garbled circuit $\tilde C$ along
with $n$ pairs of encodings $\set{k_i^0, k_i^1}_{i \in [n]}$. Then, for any input $x \in \zo^n$, the
combination of the garbled circuit $\tilde C$ and the encodings $S_x = \set{k_i^{x_i}}_{i \in [n]}$
(where $x_i$ denotes the $\ord{i}$ bit of $x$) enable one to compute $C(x)$,
and yet reveal nothing else about $x$.

\subsection{Protocol Design Overview}
We first give an intuitive overview of our fully-private navigation protocol.
As described in Section~\ref{sec:graph-compression}, we first preprocess
the network $\mc{G}$ to have maximum out-degree $d=4$ and then associate a cardinal direction
with each of the edges in $\mc{G}$. As in Section~\ref{sec:graph-compression},
let $(\matrixind{M}\dne,\matrixind{M}\dnw)$
be the precomputed next-hop routing matrices for $\mc{G}$, and let
$(\matrixind{A}{\dne},\matrixind{B}{\dne}), (\matrixind{A}{\dnw}, \matrixind{B}{\dnw})$
be the compressed representation 
of $\matrixind{M}{\dne}, \matrixind{M}{\dnw}$, respectively.

Our private shortest paths protocol is an iterative
protocol that reveals the shortest path from a source $s$ to a destination $t$
one hop at a time. When the client engages in the protocol with input
$(s,t)$, it learns which neighbor $v$ of $s$ is the next node on the shortest path from
$s$ to $t$. Then, on the next round of the protocol,
the client issues a query $(v,t)$ to learn the next node in the path, and
so on, until it arrives at the destination node $t$.
With this iterative approach, each round of our protocol
can be viewed as a two-party computation of the entry
$(\matrixind{M}\dne_{st}, \matrixind{M}\dnw_{st})$ from the next-hop
routing matrices. We give the full description of our private navigation
protocol in Figure~\ref{fig:protocol}, and sketch out
the important principles here. To simplify the presentation, we first
present the core building blocks that suffice for semi-honest security.
We then describe additional consistency checks that we introduce to obtain
security against a malicious client and privacy against a malicious
server.

\subsubsection{Semi-honest Secure Construction}
\label{sec:semi-honest-construction}

Abstractly, we can view each round of our protocol as computing the following two-party
functionality twice (once for $\matrixind{M}\dne$ and once for $\matrixind{M}\dnw$).
The server has two matrices $A,B \in \Z^{\vorig \times \vcomp}$, which we will
refer to as the source and destination matrices, respectively,
and the client has two indices $s,t \in [\vorig]$.
At the end of the protocol, the client should learn $\sign(\langle A_s, B_t \rangle)$,
where $A_s$ and $B_t$ are the $\ord{s}$ and $\ord{t}$ rows of $A$ and $B$, respectively.
The client should learn nothing else about $A$ and $B$, while
the server should not learn anything. Our protocol can thus be decomposed into
two components:
\begin{enumerate}
  \item Evaluation of the inner product $\langle A_s, B_t \rangle$ 
  between the \emph{source vector} $A_s$ and the
  \emph{destination vector} $B_t$.
  \item Determining the sign of $\langle A_s, B_t \rangle$.
\end{enumerate}
In the following, we will work over a finite field $\F_p$ large enough to contain the entries
in $A,B$. In particular, we view $A,B$ as $\vorig \times \vcomp$ matrices over $\F_p$.

\para{Evaluating the inner product.}
The first step in our protocol is evaluating the inner product between
the source vector $A_s$ and the destination vector $B_t$.
Directly revealing the value of $\langle A_s, B_t \rangle$ to the client, however,
leaks information about the entries in the compressed routing matrices 
$A,B$. To protect against this leakage, we instead reveal
a blinded version of the inner product.
Specifically, on each round of the protocol, the server chooses blinding factors
$\alpha \getsr \F_p^*$ and $\beta \getsr \F_p$. We then construct the protocol such that
at the end of the first step, the client learns the blinded value
$\alpha \langle A_s, B_t \rangle + \beta$ instead of
$\langle A_s, B_t \rangle$.

One candidate approach
for computing the blinded inner product is to use a garbled circuit. However,
while Yao's garbled circuits suffice for private evaluation of any two-party
functionality, when the underlying operations are more naturally expressed as 
addition and multiplication over $\F_p$, it is more convenient
to express the functionality in terms of an arithmetic circuit. In an arithmetic
circuit (over $\F_p$), the ``gates'' correspond to field operations (addition
and multiplication), and the values on the wires correspond to field elements.

In recent work, Applebaum et al.~\cite{AIK14} construct the analog of Yao's garbling
procedure for arithmetic circuits.
In particular, evaluating a function of the form
$f(x,y) = \langle x,y \rangle + \sum_{i \in [\vcomp]} z_i$,
where $x,y \in \F_p^\vcomp$ and each $z_i \in \F_p$ is a constant
can be done efficiently using the affinization gadgets
from~\cite[\S 5]{AIK14}. Specifically, for each $x_i,y_i$, we define
the following affine encoding functions $\laffine_{x_i}(x_i), \laffine_{y_i}(y_i)$:
\begin{align*}
  \laffine_{x_i}(x_i) &=
    \left( x_i - r_i^{(1)},\  x_i r_i^{(2)} + z_i + r_i^{(3)} \right) \\
  \laffine_{y_i}(y_i) &= 
    \left( y_i - r_i^{(2)},\  y_i r_i^{(1)} - r_i^{(1)} r_i^{(2)}- r_i^{(3)} \right), \numberthis \label{eq:arith-encoding}
\end{align*}
where $r_i^{(1)},r_i^{(2)},r_i^{(3)}$ are chosen uniformly from $\F_p$.
We will also write $\laffine_{x_i}(x_i; r_i), \laffine_{y_i}(y_i; r_i)$
to denote affine encodings of $x_i$ and $y_i$ using randomness $r_i \in \F_p^3$.
Given $\laffine_{x_i}(x_i)$ and $\laffine_{y_i}(y_i)$ for all $i \in [n]$, evaluating $f(x,y)$
corresponds to evaluating the expression
\begin{equation}
\label{eq:arithmetic-circuit-eval}
\sum_{i \in [n]} \left[ \laffine_{x_i}(x_i) \right]_1 \cdot \left[ \laffine_{y_i}(y_i) \right]_1 
   + \left[ \laffine_{x_i}(x_i) \right]_2  + \left[ \laffine_{y_i}(y_i) \right]_2,
\end{equation}
where we write $[\cdot]_i$ to denote the $\ord{i}$ component of a tuple. For notational
convenience, we also define $\laffine_x(x)$ and $\laffine_y(y)$ as
\begin{align*}
  \laffine_x(x) &=
    \left( \laffine_{x_1}(x_1), \ldots, \laffine_{x_\vcomp}(x_\vcomp) \right) \\
  \laffine_y(y) &= 
    \left( \laffine_{y_1}(y_1), \ldots, \laffine_{y_\vcomp}(y_\vcomp) \right). \numberthis
    \label{eq:arith-encoding-vector}
\end{align*}
Similarly, we write $\laffine_{x}(x; r), \laffine_{y}(y; r)$ to denote the affine
encoding of vectors $x, y \in \F_p^\vcomp$ using randomness $r \in \F_p^{3\vcomp}$.
The affine encodings $\laffine_x(x), \laffine_y(y)$ provides statistical
privacy for the input vectors $x,y$~\cite[Lemma 5.1]{AIK14}.

Next, we describe how these affine encodings can be used to compute the blinded
inner product in the first step of the protocol.
At the beginning of each round, the server chooses blinding factors
$\alpha \getsr \F_p^*$ and $\beta \getsr \F_p$.
Then, it constructs the affine encoding functions $\laffine_x, \laffine_y$
for the function $f_{\alpha, \beta}(x,y) = \langle \alpha x, y \rangle + \beta$
according to Eq.~\eqref{eq:arith-encoding}. Next, the server prepares
two encoding databases $\dsrc$ and $\ddst$ where the $\ord{s}$ record in $\dsrc$ 
consists of the affine encodings $\laffine_x(A_s)$ of each source vector, and the
$\ord{t}$ record in $\ddst$ consists of $\laffine_y(B_t)$ of
each destination vector.
To evaluate the blinded inner product, the client performs two
SPIR queries: one for the $\ord{s}$ record in $\dsrc$ to obtain the encodings of $A_s$ and
one for the $\ord{t}$ record in $\ddst$ to obtain the encodings of~$B_t$.\footnote{The
databases $\dsrc$ and $\ddst$ are each databases over $n$ records (as
opposed to $n^2$ in the straw-man protocol from Section~\ref{sec:introduction}).}
The client then evaluates the arithmetic circuit using
Eq.~\eqref{eq:arithmetic-circuit-eval} to obtain 
$z = f_{\alpha,\beta}(A_s, B_t)$.
To a malicious client, without knowledge of $\alpha$ or $\beta$,
the value
$f_{\alpha, \beta}(A_s, B_t)$ appears uniform over $\F_p$ and
independent of $A_s,B_t$.

\para{Determining the sign.}
To complete the description, it remains to describe a way for the client to learn
the sign of the inner product $\langle A_s, B_t \rangle$. The client has
the value $z = \alpha \langle A_s , B_t \rangle + \beta$ from the output of the arithmetic
circuit while the server knows the blinding factors $\alpha, \beta$. Since computing
the sign function is equivalent to performing a comparison, arithmetic circuits are
unsuitable for the task. Instead, we construct a separate Yao circuit to unblind the inner
product and compare it against zero. More specifically, let
$g(x, \gamma, \delta) = \mb{1}\{ [\gamma x + \delta]_p > 0\}$, where $[\cdot]_p$
denotes reduction modulo $p$, with output in the interval
$(-p/2, p/2)$. Then,
\[ g(z, \alpha^{-1}, -\alpha^{-1}\beta) = \sign(A_s, B_t). \]
To conclude the protocol, the server garbles a Boolean circuit $\cunblind$ for
the unblinding function $g$ to obtain a garbled circuit $\tcunblind$ along with a set
of encodings $\lunblind$. It sends the garbled circuit to the client, along with
encodings of the unblinding coefficients
$\gamma = \alpha^{-1}, \delta = \alpha^{-1} \beta$ to the
client. The client engages in 1-out-of-2 OTs to obtain the input
encodings of $z$, and evaluates the garbled circuit
$\tcunblind$ to learn $\sign(\langle A_s, B_t \rangle)$.

\subsubsection{Enforcing Consistency for Stronger Security}
\label{sec:ensuring-consistency}

As described, the protocol
reveals just a single edge in the shortest path. Repeated iteration of the protocol
allows the client to learn the full shortest path. Moreover, since the server's view of the protocol execution
consists only of its view in the PIR and OT protocols, privacy of these underlying 
primitives ensures privacy of the client's location, even against a malicious
server.\footnote{While a malicious
server can send the client malformed circuits or
induce selective failure attacks, the server does not receive any output during
the protocol execution nor does the client abort the protocol when malformed input
is received. Thus, we achieve privacy against a malicious server.}

Security for the server only holds if the client
follows the protocol and makes consistent queries on each round.
However, a malicious client can request the shortest path
for a different source and/or destination
on each round, thereby allowing it to learn edges along arbitrary
shortest paths of its choosing. To protect against a malicious client, we
bind the client to making \emph{consistent} queries across consecutive
rounds of the protocol. We say that a sequence of source-destination queries
$(s_1,t_1),\ldots,(s_\ell,t_\ell)$ is \emph{consistent}
if for all $i \in [\ell]$, $t_1 = t_i$, and $s_{i+1} = v_i$ where $v_i$ is the first
node on the shortest path from $s_i$ to $t_i$.

\para{Consistency for the destinations.}
To bind the client to a single destination, we do the following. At the beginning
of the protocol, for each
row $i \in [\vorig]$ in $\ddst$, the server chooses a symmetric encryption key $k_{\dst, i}$.
Then, on each round of the protocol, 
it encrypts the $\ord{i}$ record in $\ddst$ with the key $k_{\dst, i}$.
Next, at the beginning of the protocol, the client OTs for the key $k_{\dst, t}$
corresponding to its destination $t$. Since this step is performed only once at
the beginning of the protocol, the only record in $\ddst$ that the client can decrypt is
the one corresponding to its original destination. Because each record in $\ddst$ is
encrypted under a different key, the client can use a PIR protocol
instead of an SPIR protocol when requesting the record from $\ddst$.

\para{Consistency for the sources.}
Maintaining consistency between the source queries is more challenging because the source
changes each round. We use the fact that the preprocessed
graph has out-degree at most four. Thus,
on each round, there are at most
four possible sources that can appear in a consistent query in the next round.

Our construction uses a semantically-secure symmetric encryption scheme
$(\ms{Enc}, \ms{Dec})$ with
key-space $\zo^\enckeylen$, and a PRF $F$ with domain $\set{\dn, \de, \ds, \dw}$
and range $\zo^\enckeylen$. On each round of the protocol, the server
generates a new set of source keys
$k_{\src, 1}, \ldots, k_{\src, n} \in \zo^\enckeylen$ for
encrypting the records in $\dsrc$ in the \emph{next} round of the protocol.
The server also chooses four PRF keys $k_\dne^0, k_\dne^1, k_\dnw^0, k_\dnw^1$,
which are used to derive directional keys $k_\dn, k_\de, k_\ds, k_\dw$.
Next, for each node $v \in [n]$ in $\dsrc$, let $v_\dir$ be the neighbor
of $v$ in direction $\dir \in \set{\dn, \de, \ds, \dw}$ (if there is one).
The server augments the $\ord{v}$ record in $\dsrc$ with an encryption of
the source key $k_{\src, v_{\dir}}$ under the directional key $k_\dir$.

When the client requests record $v$
from $\dsrc$, it also obtains encryptions of the keys of the neighbors of $v$
for the \emph{next} round of the protocol. By ensuring the client
only learns one of the directional keys, it will only be able to
learn the encryption key for a single source node on the next round of the protocol.
We achieve this by including the PRF keys $k_\dne^0, k_\dne^1, k_\dnw^0, k_\dnw^1$
used to derive the directional keys as input to the garbled circuit.
Then, in addition to outputting the direction, the garbled circuit also outputs the
subset of PRF keys needed to derive exactly one of the
directional keys $k_\dn, k_\de, k_\ds, k_\dw$.
This ensures that the client has at most one source key in the next round of the
protocol.

\para{Consistency within a round.} In addition to ensuring consistency between
consecutive rounds of the protocol, we also require that the client's input to
the garbled circuit is consistent with the output it obtained from evaluating 
the affine encodings. To enforce this, we use the fact that the entries of the
routing matrices $A,B$ are bounded: there exists $\tau$ such that
$\langle A_s, B_t \rangle \in [-2^\tau, 2^\tau]$ for all~$s,t \in V$.
Then, in our construction, we choose the size of the finite field $\F_p$ to be much larger
than the size of the interval $[-2^{\tau}, 2^\tau]$. Recall that the arithmetic circuit computes
a blinded inner product $z \gets \alpha \langle A_s, B_t \rangle + \beta$ where $\alpha, \beta$ are
uniform in $\F_p^*$ and $\F_p$, respectively. To unblind the inner product, the server constructs
a garbled circuit that first evaluates the function $g_{\gamma,\delta}(z) = \gamma z + \delta$
with $\gamma = \alpha^{-1}$ and $\delta = -\alpha^{-1}\beta$. By construction,
$\gamma$ is uniform over $\F_p^*$ and $\delta$ is uniform over $\F_p$. Thus, using
the fact that $\set{g_{\gamma,\delta}(z) \mid \gamma \in \F_p^*, \delta \in \F_p}$ is a pairwise independent
family of functions, we conclude that
the probability that $g_{\gamma,\delta}(z') \in [-2^\tau, 2^\tau]$ is precisely
$2^{\tau+1} / p$ for all $z' \in \F_p$. By choosing $p \gg 2^{\tau + 1}$, we can ensure that
the adversary cannot successfully cheat except with very small probability.

Lastly, we remark that when the client issues a query $(s,t)$ where $s = t$, the protocol should
not reveal the key for any other node in the graph. To address this, we also introduce an equality
test into the garbled circuit such that on input $s = t$, the output is $\perp$. 
We give a complete specification of the neighbor-computation function that incorporates these
additional consistency checks in Figure~\ref{fig:neighbor-computation}.

\begin{figure} \small
\begin{framed}
\noindent \textbf{Inputs:} Tuples 
$(z_\dne, \gamma_\dne, \delta_\dne), (z_\dnw, \gamma_\dnw, \delta_\dnw) \in \F_p^3$,
PRF keys $k_\dne^0, k_\dne^1, k_\dnw^0, k_\dnw^1 \in \zo^{\prfkeylen}$,
and the source and destination nodes $s,t \in [\vorig]$.
The bit-length $\tau$ is public and fixed (hard-wired into $g$).
\\ \\
\noindent \textbf{Operation of $g$:}
\begin{itemize}
\item If $s = t$, then output $\perp$.
\item If $[\gamma_\dne z_\dne + \delta_\dne]_p \notin [-2^\tau, 2^\tau]$
or $[\gamma_\dnw z_\dnw + \delta_\dnw]_p \notin [-2^\tau, 2^\tau]$, output $\perp$.
\item Let $b_\dne = \mb{1}\{ [\gamma_\dne z_\dne + \delta_\dne]_p > 0 \}$,
and let $b_\dnw = \mb{1}\{ [\gamma_\dnw z_\dnw + \delta_\dnw]_p > 0 \}$.
Output $(b_\dne, b_\dnw, k_\dne^{b_\dne}, k_\dnw^{b_\dnw})$.
\end{itemize}
\end{framed}
\caption{Neighbor-computation function $g$ for the private routing protocol.}
\label{fig:neighbor-computation}
\end{figure}

\begin{figure*}
  \iftoggle{fullversion}{
    \begin{minipage}{\linewidth}
    \setcounter{mpfootnote}{\value{footnote}}
    \renewcommand{\thempfootnote}{\arabic{mpfootnote}}
  }{}
  \begin{framed}
  \small
  Fix a security parameter $\lambda$ and a statistical security parameter $\mu$.
  Let $\mc{G} = (V,E)$ be a weighted directed graph with $n$ vertices,
  such that the out-degree of every vertex is at most~4.
  The client's input to the protocol consists of two nodes, $s, t \in V$,
  representing the source and destination of the shortest path the client is requesting.
  The server's inputs are the compressed routing matrices
  $\matrixind{A}\dne, \matrixind{B}\dne, \matrixind{A}\dnw, \matrixind{B}\dnw \in \Z^{\vorig \times \vcomp}$
  (as defined in Section \ref{sec:graph-compression}).
  \\\\
  We assume the following quantities are public and known to both the client and the
  \iftoggle{fullversion}{server:
    \begin{itemize}[noitemsep]
      \item The structure of the graph $\mc{G}$, but not the edge weights.
      \item The number of columns $\vcomp$ in the compressed routing matrices.
      \item A bound on the bit-length $\tau$ of the values in the products
      $\matrixind{A}\dne \cdot (\matrixind{B}\dne)^T$ and
      $\matrixind{A}\dnw \cdot (\matrixind{B}\dnw)^T$.
      \item The total number of rounds $R$.
    \end{itemize}
  }{server:~the structure of the graph $\mc{G}$ (but not the edge weights); the
     number of columns $\vcomp$ in the compressed routing matrices; a 
     bound on the bit-length $\tau$ of the values in the products
     $\matrixind{A}\dne \cdot (\matrixind{B}\dne)^T$ and
     $\matrixind{A}\dnw \cdot (\matrixind{B}\dnw)^T$;  and the total number of rounds $R$.
     \\
  }

  In the following description,
  let $(\ms{Enc}, \ms{Dec})$ be a CPA-secure symmetric
  encryption scheme with key space $\zo^\enckeylen$, and
  let $F: \zo^\prfkeylen \times \set{\dn, \de, \ds, \dw} \to \zo^{\enckeylen}$
  be a PRF (where $\enckeylen,\prfkeylen = \poly(\lambda)$).
  Fix a prime-order finite field $\F_p$ such that $p > 2^{\tau + \mu + 1}$.
  \\\\
  \textbf{Setup:}
  \begin{enumerate}[noitemsep]
  \item For each $i \in [n]$, the server chooses independent symmetric encryption keys
  $\matrixind{k}{1}_{\src,i},\ k_{\dst,i} \getsr \zo^{\enckeylen}$.

  \item The client and the server engage in two 1-out-of-$n$ OT protocols with the client
  playing the role of the receiver:
  \begin{itemize}[noitemsep]
  \item The client requests the $\ord{s}$ record from the server's database
  $(\matrixind{k}{1}_{\src,1}, \ldots, \matrixind{k}{1}_{\src,n})$, receiving
  a value   %
  $\matrixind{\hat k}{1}_{\src}$.
  \item The client requests the $\ord{t}$ record from the server's database $(k_{\dst,1}, \ldots, k_{\dst,n})$,
  receiving a value $\hat k_{\dst}$.
  \end{itemize}
  \end{enumerate}

  \textbf{For each round $r = 1, \ldots, R$ of the protocol:}

  \begin{enumerate}[noitemsep]
  \item The server chooses blinding factors $\alpha_\dne, \alpha_\dnw \getsr \F_p^*$
  and $\beta_\dne, \beta_\dnw \getsr \F_p$. Next, let $\gamma_\dne = \alpha_\dne^{-1}$
  and $\delta_\dne = -\alpha_\dne^{-1} \beta_\dne \in \F_p$. Define $\gamma_\dnw$ and
  $\delta_\dnw$ analogously.

  \item Let $f_\dne, f_\dnw: \F_p^\vcomp \times \F_p^\vcomp \to \F_p$ where
  $f_\dne(x, y) = \langle \alpha_\dne x, y \rangle + \beta_\dne$ and
  $f_\dnw(x, y) = \langle \alpha_\dnw x, y \rangle + \beta_\dnw$.
  The server then does the following:
  \begin{itemize}
  \item Apply the affine encoding algorithm
  (Eq.~\ref{eq:arith-encoding}) to $f_\dne$ to obtain encoding functions
  $\laffine_{\dne, x}, \laffine_{\dnw, y}$, for the inputs $x$ and $y$, respectively.
  \item Apply the affine encoding algorithm to $f_\dnw$ to obtain
  encoding functions $\laffine_{\dnw, x}, \laffine_{\dnw, y}$.
  \end{itemize}

  \item Let $\cunblind$ be a Boolean circuit for computing the neighbor-computation
  function in Figure~\ref{fig:neighbor-computation}. The server runs Yao's garbling
  algorithm on $\cunblind$ to obtain a garbled circuit $\tcunblind$ along with
  encoding functions $\lunblind_{x}$, for each of the inputs $x$ to the neighbor-computation
  function in Figure~\ref{fig:neighbor-computation}.

  \item
    The server chooses symmetric encryption keys
    $\matrixind{k}{r+1}_{\src,1},\ldots,\matrixind{k}{r+1}_{\src,\vorig} \getsr \zo^{\enckeylen}$.
    These are used to encrypt the
  contents of the source database on the next round of the protocol.

  \item \label{enum:neighbor-key-derivation} The server chooses four PRF
  keys $k_\dne^0, k_\dne^1, k_\dnw^0, k_\dnw^1 \getsr \zo^{\prfkeylen}$,
  two for each axis. Then, the server defines the encryption keys for each direction
  as follows:\iftoggle{fullversion}{{\scriptsize \footnote{An alternative approach
  is for the server to choose the encryption keys $k_\dn, k_\de, k_\ds, k_\dw$
  uniformly at random from $\zo^\enckeylen$ instead of using the key-derivation
  procedure. In this case, the neighbor-computation function
  (Figure~\ref{fig:neighbor-computation}) would be modified
  to take as input the encryption keys $k_\dn, k_\de, k_\ds, k_\dw$ rather than the PRF keys
  $k_\dne^0, k_\dne^1, k_\dnw^0, k_\dnw^1$, and would output a single encryption
  key. While this approach is conceptually simpler, the resulting neighbor-computation
  circuits are slightly larger. In our implementation, we use the key-derivation procedure
  shown in the figure.
  }}}{}
  \iftoggle{fullversion}{
    \[
      \begin{aligned}
        k_\dn &= F(k_\dne^0, \dn) \oplus F(k_\dnw^0, \dn) \\
        k_\ds &= F(k_\dne^1, \ds) \oplus F(k_\dnw^1, \ds)
      \end{aligned} \qquad \qquad
      \begin{aligned}
        k_\de &= F(k_\dne^0, \de) \oplus F(k_\dnw^1, \de) \\
        k_\dw &= F(k_\dne^1, \dw) \oplus F(k_\dnw^0, \dw).
      \end{aligned}
    \]
  }{
    \[
        k_\dn = F(k_\dne^0, \dn) \oplus F(k_\dnw^0, \dn), \quad
        k_\de = F(k_\dne^0, \de) \oplus F(k_\dnw^1, \de), \quad
        k_\ds = F(k_\dne^1, \ds) \oplus F(k_\dnw^1, \ds), \quad
        k_\dw = F(k_\dne^1, \dw) \oplus F(k_\dnw^0, \dw).
    \]
  }

\iftoggle{fullversion}{
  \setcounter{protocounter}{\theenumi}

  \end{enumerate}
  \end{framed}

  \caption{The fully-private routing protocol,
  as outlined in Section~\ref{sec:shortest-paths-protocol}. The protocol description continues
  on the next page.}
  \label{fig:protocol}
  \ContinuedFloat

  \setcounter{footnote}{\value{mpfootnote}}
  \end{minipage}
  \end{figure*}

  \renewcommand{\thefigure}{\arabic{figure} (Continued)}
  \begin{figure*}
  \small
  \begin{framed}
  \begin{enumerate}
  \setcounter{enumi}{\theprotocounter}
}{}

  \item The server prepares the source database $\dsrc$ as follows. For each node $u \in [n]$, 
  the $\ord{u}$ record in $\dsrc$ is an encryption under $\matrixind{k}{r}_{\src, u}$ of the following:
  \begin{itemize}[noitemsep]
    \item The arithmetic circuit encodings $\laffine_{\dne, x}(\matrixind{A}\dne_u$),
    $\laffine_{\dnw,x}(\matrixind{A}\dnw_u)$ of the source vectors
    $\matrixind{A}\dne_u$ and $\matrixind{A}\dnw_u$.

    \item The garbled circuit encodings $\lunblind_s(u)$ of the source node $u$.

    \item Encryptions of the source keys for the neighbors of $u$ in the next round
    of the protocol under the direction keys:
    \[ \kappa_\dn = \ms{Enc}(k_\dn, \matrixind{k}{r+1}_{\src, v_\dn}), \quad
       \kappa_\de = \ms{Enc}(k_\de, \matrixind{k}{r+1}_{\src, v_\de}), \quad
       \kappa_\ds = \ms{Enc}(k_\ds, \matrixind{k}{r+1}_{\src, v_\ds}), \quad
       \kappa_\dw = \ms{Enc}(k_\dw, \matrixind{k}{r+1}_{\src, v_\dw}), \]
    where $v_\dn, v_\de, v_\ds, v_\dw$ is the neighbor of $u$ to the north, east,
    south, or west, respectively. If $u$ does not have a neighbor in a
    given direction $\diri \in \set{\dn, \de, \ds, \dw}$, then define
    $\matrixind{k}{r+1}_{\src, v_\diri}$ to be the all-zeroes string $0^\enckeylen$.
  \end{itemize}
  
  \item The server prepares the destination database $\ddst$ as follows.
  For each node $u \in [n]$, the $\ord{u}$ record in $\ddst$ is an encryption
  under $k_{\dst, u}$ of the following:
  \begin{itemize}[noitemsep]
    \item The arithmetic circuit encodings $\laffine_{\dne, y}(\matrixind{B}\dne_u$),
    $\laffine_{\dnw,y}(\matrixind{B}\dnw_u)$ of the destination vectors
    $\matrixind{B}\dne_u$ and $\matrixind{B}\dnw_u$.

    \item The garbled circuit encodings $\lunblind_t(u)$ of the destination node $u$.
  \end{itemize}

  \item The client and server engage in two PIR protocols with the client playing role of receiver:
  \begin{itemize}[noitemsep]
    \item The client requests record $s$ from the server's database
    $\dsrc$ and obtains a record $\hat c_{\src}$.
    \item The client requests record $t$ from the server's database
    $\ddst$ and obtains a record $\hat c_{\dst}$.
  \end{itemize}

\item The client decrypts the records:
$\hat r_{\src} \gets \ms{Dec}(\matrixind{\hat k}{r}_{\src}, \hat c_{\src})$ and
$\hat r_{\dst} \gets \ms{Dec}(\hat k_{\dst}, \hat c_{\dst})$:
\begin{itemize}[noitemsep]
\item  It parses $\hat r_{\src}$ into two sets of
arithmetic circuit encodings $\hlaffine_{\dne, x}$ and
$\hlaffine_{\dnw, x}$, a set of garbled circuit encodings $\hlunblind_s$,
and four encryptions
$\hat \kappa_\dn, \hat \kappa_\de, \hat \kappa_\ds, \hat \kappa_\dw$
of source keys for the next round.

\item It parses $\hat r_{\dst}$ into two sets of arithmetic circuit
encodings for $\hlaffine_{\dne, y}$ and $\hlaffine_{\dnw, y}$,
and a set of garbled circuit encodings $\hlunblind_t$.
\end{itemize}

Using the encodings $\hlaffine_{\dne, x}$ and $\hlaffine_{\dne, y}$,
the client evaluates the arithmetic circuit (Eq.~\ref{eq:arithmetic-circuit-eval})
to learn $\hat z_\dne$. Similarly, using the encodings
$\hlaffine_{\dnw,x}$ and $\hlaffine_{\dnw,y}$, the server evaluates to learn
$\hat z_\dnw$. If the parsing of $\hat r_{\src}$
or $\hat r_{\dst}$ fails or the arithmetic circuit encodings are malformed, the client
sets $\hat z_\dne, \hat z_\dnw \getsr \F_p$.

\iftoggle{fullversion}{}{
    \setcounter{protocounter}{\theenumi}

  \end{enumerate}
  \end{framed}

  \caption{The fully-private navigation protocol,
  as outlined in Section~\ref{sec:shortest-paths-protocol}. The protocol description continues
  on the next page.}
  \label{fig:protocol}
  \ContinuedFloat
  \end{figure*}

  \renewcommand{\thefigure}{\arabic{figure} (Continued)}
  \begin{figure*}
  \small
  \begin{framed}
  \begin{enumerate}
  \setcounter{enumi}{\theprotocounter}
}

\item The client engages in a series of 1-out-of-2 OTs with the server to
obtain the garbled circuit encodings $\lunblind_{z_\dne}(\hat z_\dne)$
and $\lunblind_{z_\dnw}(\hat z_\dnw)$
of $\hat z_\dne$ and $\hat z_\dnw$, respectively. Let
$\hlunblind_{z_\dne}$ and $\hlunblind_{z_\dnw}$ denote the encodings
the client receives.

\item The server sends to the client the garbled circuit $\tcunblind$ and
encodings of the unblinding coefficients
\[ \lunblind_{\gamma_\dne}(\gamma_\dne),\ 
   \lunblind_{\gamma_\dnw}(\gamma_\dnw),\ 
   \lunblind_{\delta_\dne}(\delta_\dne),\ 
   \lunblind_{\delta_\dnw}(\delta_\dnw),
\]
as well as encodings of the PRF keys
\[ 
  \lunblind_{k_\dne^0}(k_\dne^0), \
  \lunblind_{k_\dne^1}(k_\dne^1), \
  \lunblind_{k_\dnw^0}(k_\dnw^0), \
  \lunblind_{k_\dnw^1}(k_\dnw^1).
\]

\iftoggle{fullversion}{
  \setcounter{protocounter}{\theenumi}

  \end{enumerate}
  \end{framed}

  \caption{The fully-private routing protocol,
    as outlined in Section~\ref{sec:shortest-paths-protocol}.
    The protocol description continues on the next page.}
  \ContinuedFloat
  \end{figure*}

  \begin{figure*}
  \small
  \begin{framed}
  \begin{enumerate}
  \setcounter{enumi}{\theprotocounter}
}{}

\item The client evaluates the garbled circuit $\tcunblind$.
If the garbled circuit evaluation is successful and the client
obtain outputs $(\hat{b}_\dne, \hat{b}_\dnw, \hat{k}_\dne, \hat{k}_\dnw)$,
then the client computes a direction
$\dir = \axistodir(\hat b_\dne, \hat b_\dnw) \in \set{\dn, \de, \ds, \dw}$
(\iftoggle{fullversion}{Eq.~\eqref{eq:index-to-direction}}
{Section~\ref{sec:graph-compression}}).
\begin{enumerate}[noitemsep]
  \item The client computes the direction key
  $\hat k_\dir = F(\hat k_\dne, \dir) \oplus F(\hat k_\dnw, \dir)$.
  Next, the client decrypts the encrypted source key $\hat \kappa_\dir$
  to obtain the source key 
  $\matrixind{\hat k}{r+1}_\src = \ms{Dec}(\hat k_\dir, \hat \kappa_\dir)$
  for the next round of the protocol.

  \item Let $v_\dir$ be the neighbor of $s$
  in the direction given by $\dir$ (define $v_\dir$ to be $\perp$ if $s$ does
  not have a neighbor in the direction $\dir$). If $v_\dir \ne\ \perp$, the
  client outputs $v_\dir$ and updates $s = v_\dir$. Otherwise, if 
  $v_\dir =\ \perp$, the client outputs $\perp$ and leaves $s$ unchanged.
\end{enumerate}
If the OT for the input wires to the garbled circuit fails,
the garbled circuit evaluation fails, or the output of the garbled
circuit is $\perp$, then the client outputs $\perp$, but continues with the 
protocol: it leaves $s$ unchanged and sets $\matrixind{\hat k}{r+1}_{\src} \getsr \zo^\enckeylen$.

\end{enumerate}
\end{framed}

\caption{The fully-private routing protocol,
  as outlined in Section~\ref{sec:shortest-paths-protocol}.}
\end{figure*}

\renewcommand{\thefigure}{\arabic{figure}}

\subsection{Security Model}
\label{sec:security-model}

In this section, we formally specify our security model.
To define and argue the security of our protocol, we compare the protocol execution in the real-world (where
the parties interact according to the specification given in Figure~\ref{fig:protocol}) to
an execution in an ideal world where the parties have access to a trusted party that computes
the shortest path. Following the conventions
in~\cite{Can06}, we view the protocol execution as occurring in the
presence of an adversary $\mc{A}$ and coordinated by an environment
$\mc{E} = \set{\mc{E}}_\lambda$ (modeled as a family of polynomial
size circuits parameterized by a security parameter $\lambda$). The
environment $\mc{E}$ is responsible for choosing the inputs to the protocol
execution and plays the role of distinguisher between the real and ideal experiments.

As specified in Figure~\ref{fig:protocol}, we assume
that the following quantities are public to the protocol execution:
the topology of the network $\mc{G} = (V,E)$, the number of columns $\vcomp$
in the compressed routing matrices,
a bound on the bit-length $\tau$ of the values in the matrix products
$A^{(\dne)} \cdot (B^{(\dne)})^T$ and $A^{(\dnw)} \cdot (B^{(\dnw)})^T$, and
the total number of rounds $R$ (i.e., the number of hops in the longest possible
shortest path). We now define the real and ideal models of execution.

\begin{definition}[Real Model of Execution]
\label{def:real-model}
\normalfont
Let $\pi$ be a private navigation
protocol. In the real world, the parties interact
according to the protocol specification $\pi$. Let $\mc{E}$ be the environment
and let $\mc{A}$ be an adversary that corrupts either the client or the server. The
protocol execution in the real world proceeds as follows:
\begin{enumerate}
\item \textbf{Inputs:} The environment $\mc{E}$ chooses a source-destination
pair $s, t \in V$ for the client and compressed next-hop routing matrices
$A^{(\dne)}, B^{(\dne)}, A^{(\dnw)}, B^{(\dnw)} \in \Z^{\vorig \times \vcomp}$
for the server.
The bit-length of all entries in the matrix products
$A^{(\dne)} \cdot (B^{(\dne)})^T$ and $A^{(\dnw)} \cdot (B^{(\dnw)})^T$
must be at most $\tau$. Finally, the environment gives the input of the corrupted
party to the adversary.

\item \textbf{Protocol Execution:} The parties begin executing the
protocol. All honest parties behave according to the protocol specification.
The adversary $\mc{A}$ has full control over the behavior of the corrupted party
and sees all messages received by the corrupted party.

\item \textbf{Output:} The honest party computes and gives its output to the
environment $\mc{E}$. The adversary computes a function of its view of the
protocol execution and gives it to $\mc{E}$.
\end{enumerate}
At the conclusion of the protocol execution, the environment $\mc{E}$ outputs
a bit $b \in \zo$. Let $\execr_{\pi, \mc{A}, \mc{E}}(\lambda)$ be the random variable
corresponding to the value of this bit. 
\end{definition}

\begin{definition}[Ideal Model of Execution]
\label{def:ideal-model}
\normalfont
In the ideal world, the client and server have access to a trusted
party $\mc{T}$ that computes the shortest paths functionality $f$.

\begin{enumerate}
\item \textbf{Inputs:} Same as in the real model of execution.

\item \textbf{Submission to Trusted Party:} If a party is honest, it
gives its input to the trusted party. If a party is corrupt, then
it can send any input of its choosing to $\mc{T}$, as directed by the
adversary $\mc{A}$.

\item \textbf{Trusted Computation:} From the next-hop routing matrices,
the trusted party computes the first $R$ hops on the shortest path
from $s$ to $t$: $s = v_0 , v_1, \ldots, v_R$. If $v_i = t$ for some
$i < R$, then the trusted party sets $v_{i+1}, \ldots, v_R$ to $\perp$.
If the next hop in the path at $v_i$ for some $i$ refers to a node not in $\mc{G}$,
then the trusted party sets $v_{i+1}, \ldots, v_R$ to $\perp$. The trusted
party sends the path $v_0, \ldots, v_R$ to the client. The server receives
no output.

\item \textbf{Output:} An honest party gives the sequence of messages
(possibly empty) it received from $\mc{T}$ to $\mc{E}$. The adversary
computes a function of its view of the protocol execution and gives it
to $\mc{E}$.
\end{enumerate}
At the conclusion of the protocol execution, the environment $\mc{E}$ outputs
a bit $b \in \zo$. Let $\execi_{f, \mc{A}, \mc{E}}(\lambda)$ be the random variable
corresponding to the value of this bit. 
\end{definition}

To state our security theorems, we now define the environment's distinguishing advantage.
Informally, we will say that a protocol is secure if no polynomial-size
environment is able to distinguish the real execution from the ideal execution with
non-negligible probability.

\begin{definition}[Distinguishing Advantage --- Security]
\label{def:distinguishing-advantage}
\normalfont
  Let $\pi$ be a private navigation protocol, and let $f$ be the shortest path functionality.
  Fix an adversary $\mc{A}$, simulator $\mc{S}$, and an environment $\mc{E}$.
  \iftoggle{fullversion}{
    Then, the distinguishing advantage $\advsec_{\pi, f, \mc{A}, \mc{S}, \mc{E}}$
    of $\mc{E}$ in the security game is given by
    \[ \advsec_{\pi, f, \mc{A}, \mc{S}, \mc{E}}(\lambda) = 
       \abs{\Pr[\execr_{\pi, \mc{A}, \mc{E}}(\lambda) = 0] - 
            \Pr[\execi_{f, \mc{A}, \mc{E}}(\lambda) = 0]}. \]
  }{
    The distinguishing advantage $\advsec_{\pi, f, \mc{A}, \mc{S}, \mc{E}}(\lambda)$
    of $\mc{E}$ in the security game is given by
      \[ \abs{\Pr[\execr_{\pi, \mc{A}, \mc{E}}(\lambda) = 0] - 
            \Pr[\execi_{f, \mc{A}, \mc{E}}(\lambda) = 0]}. \]
  }
\end{definition}

We will also work with a weaker notion of \emph{privacy} against a malicious adversary.
Informally, we say that the protocol is private if an adversary is unable to learn anything
about the inputs of the other party beyond what is explicitly leaked by the
inputs and outputs of the computation. To formalize this notion, we use the conventions in
\cite{IKKLP11} and define the distinguishing advantage in the privacy game.

\begin{definition}[Distinguishing Advantage --- Privacy]
\normalfont
  Let $\pi$ be a private navigation protocol, and let $f$ be the shortest path functionality.
  Fix an adversary $\mc{A}$, simulator $\mc{S}$, and an environment $\mc{E}$.
  Define $\execr'_{\pi,\mc{A},\mc{E}}(\lambda)$ exactly as
  $\execr_{\pi, \mc{A}, \mc{E}}(\lambda)$ (Definition~\ref{def:real-model}),
  except in the final step of the protocol execution, the environment
  only receives the adversary's output (and \emph{not} the honest party's output).
  Define $\execi'_{f,\mc{S},\mc{E}}(\lambda)$ analogously.
  \iftoggle{fullversion}{
    Then, the distinguishing advantage $\advpriv_{\pi, f, \mc{A}, \mc{S}, \mc{E}}$
    of $\mc{E}$ in the privacy game is given by
    \[ \advpriv_{\pi, f, \mc{A}, \mc{S}, \mc{E}}(\lambda) = 
       \abs{\Pr[\execr'_{\pi, \mc{A}, \mc{E}}(\lambda) = 0] - 
            \Pr[\execi'_{f, \mc{A}, \mc{E}}(\lambda) = 0]}. \]
  }{
    The distinguishing advantage $\advpriv_{\pi, f, \mc{A}, \mc{S}, \mc{E}}(\lambda)$
    of $\mc{E}$ in the privacy game is given by
    \[ \abs{\Pr[\execr'_{\pi, \mc{A}, \mc{E}}(\lambda) = 0] - 
          \Pr[\execi'_{f, \mc{A}, \mc{E}}(\lambda) = 0]}. \]
  }
\end{definition}

\subsection{Security Theorems}

The first requirement is that our protocol provides security against
a malicious client. This captures the notion that a malicious client
does not learn anything more about the server's routing information beyond the shortest
path between its requested endpoints and the publicly available information. In our setting,
we allow a privacy-performance trade-off where the client has a small probability
($R \cdot 2^{-\mu}$, where $\mu$ is the statistical security parameter)
of learning additional information about the routing information.
Since the order $p$ of the finite field must satisfy
$p > 2^{\tau + \mu + 1}$, using larger finite fields will decrease the failure probability, but
at the expense of performance.
In our experiments, $R \cdot 2^{-\mu} \approx 2^{-30}$.
We now state the formal security guarantee, but defer its formal proof to
\iftoggle{fullversion}{Appendix~\ref{app:malicious-client-security-proof}}
{the extended version of this paper}.

\begin{theorem}[Security Against a Malicious Client]
\label{thm:malicious-client-security}
Let $\pi$ be the protocol in Figure~\ref{fig:protocol} instantiated with
a CPA-secure encryption scheme $(\ms{Enc},\ms{Dec})$, a secure PRF $F$,
and an OT scheme secure against a malicious client. Let $\lambda, \mu$ be the
security parameter and statistical security parameter, respectively.
Let $f$ be the ideal
shortest-paths functionality. Then, for all $\ppt$ adversaries
$\mc{A}$, there exists a $\ppt$ adversary $\mc{S}$ such that for every
polynomial-size circuit family $\mc{E} = \set{\mc{E}}_\lambda$,
\[ \advsec_{\pi, f, \mc{A}, \mc{S}, \mc{E}}(\lambda) \le \negl(\lambda) + R \cdot 2^{-\mu}, \]
where $\negl(\lambda)$ denotes a negligible function in $\lambda$.
\end{theorem}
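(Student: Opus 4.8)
The plan is to build the simulator $\mc{S}$ via a standard sequence of hybrid games, starting from the real-world execution $\execr_{\pi,\mc{A},\mc{E}}$ and ending with the ideal-world execution $\execi_{f,\mc{A},\mc{E}}$, where $\mc{A}$ corrupts the client. First I would have $\mc{S}$ use the OT simulator (against a malicious receiver) to extract the client's effective source index $s$ and destination index $t$ from the two setup-phase $1$-out-of-$n$ OT invocations; these extracted indices are exactly what $\mc{S}$ submits to the trusted party $\mc{T}$. The trusted party returns the honest path $v_0=s,v_1,\ldots,v_R$ (with $\perp$ padding), which $\mc{S}$ then uses to drive the per-round simulation. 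In each round, $\mc{S}$ must produce: the simulated PIR responses for $\dsrc$ and $\ddst$, the garbled circuit $\tcunblind$ with its input encodings, and the $1$-out-of-$2$ OT transcripts for the $\hat z$ wires. The key observation making this go through is that at every round the only record the client can actually decrypt in $\dsrc$ is the one under $\matrixind{k}{r}_{\src}$, which by the source-key-chaining argument (Section~\ref{sec:ensuring-consistency}) corresponds to $v_{r-1}$; similarly only the $t$-th record of $\ddst$ is decryptable. So $\mc{S}$ can replace all other records with encryptions of zeros (CPA-security), and use the affine-encoding privacy lemma (\cite[Lemma 5.1]{AIK14}) to simulate the $v_{r-1}$-th source encoding and $t$-th destination encoding knowing only that the garbled-circuit output should be the direction from $v_{r-1}$ to $v_r$.

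The hybrid sequence I would use, in order: (i) replace the two setup OTs by their simulated transcripts and extract $(s,t)$; (ii) replace the per-round $1$-out-of-$2$ OTs for $\hat z_\dne,\hat z_\dnw$ by simulated transcripts; (iii) replace $\tcunblind$ and its wire encodings by the output of the Yao simulator, programmed so that $\cunblind$ evaluates to $(b_\dne,b_\dnw,k_\dne^{b_\dne},k_\dnw^{b_\dnw})$ where $(b_\dne,b_\dnw)=\axistodir^{-1}$ of the true next-hop direction from $v_{r-1}$ to $v_r$ (or $\perp$ when $v_{r-1}=t$ or $v_{r-1}=\perp$); (iv) replace the PIR responses by PIR-simulator output, which requires nothing since PIR hides the index and here the server sends the whole database — actually the relevant move is to replace the \emph{contents} of all non-decryptable records in $\dsrc,\ddst$ by encryptions of $0$ under freshly sampled keys, justified by CPA-security in a sub-hybrid over the $\le n$ records and $R$ rounds; (v) replace the remaining (decryptable) affine encodings $\laffine_{x}(\matrixind{A}{\cdot}_{v_{r-1}}),\laffine_{y}(\matrixind{B}{\cdot}_t)$ by the statistically-simulated encodings from \cite[Lemma 5.1]{AIK14}, which are independent of the actual matrix rows; (vi) replace the $k_{\dir}$-encryptions $\kappa_\dir$ of source keys: the one for the true direction must decrypt to $\matrixind{k}{r+1}_{\src,v_r}$ (which feeds round $r+1$), while the other three are encryptions of $0^\enckeylen$ under keys the client never learns (the garbled circuit only outputs $k_\dne^{b_\dne},k_\dnw^{b_\dnw}$ for one $b$ each, and PRF security makes the other three directional keys pseudorandom), so by PRF security plus CPA-security these three become encryptions of zeros.

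The main obstacle is step~(vi) together with bounding the failure probability. I would argue the source-consistency chain holds \emph{unconditionally} on the protocol side — the client obtains at most one directional key per round because the garbled circuit outputs exactly one of $\{k_\dne^0,k_\dne^1\}$ and one of $\{k_\dnw^0,k_\dnw^1\}$ — but a malicious client can still feed the garbled circuit a mangled $\hat z_\dne$ or $\hat z_\dnw$ obtained by evaluating the affine encodings on inconsistent inputs. Here is where the ``consistency within a round'' analysis enters: for any value $z'\in\F_p$ the client could plug in, the probability over the server's choice of $(\alpha_\dir,\beta_\dir)$ that $[\gamma_\dir z'+\delta_\dir]_p\in[-2^\tau,2^\tau]$ yet $z'$ does not equal the honestly-computed blinded inner product is at most $2^{\tau+1}/p<2^{-\mu}$, using the pairwise-independence of $\{g_{\gamma,\delta}\}$ and $p>2^{\tau+\mu+1}$. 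Union-bounding over the $R$ rounds gives the $R\cdot 2^{-\mu}$ term; conditioned on this bad event not occurring, the garbled circuit's output in every round is exactly the honest next-hop direction (or $\perp$), so $\mc{S}$'s programming of the Yao simulator is consistent with the real execution and the two worlds are computationally indistinguishable. Collecting the negligible terms from OT security, PIR security, CPA-security, PRF security, and Yao's garbling security into $\negl(\lambda)$, and adding the statistical $R\cdot 2^{-\mu}$ slack from steps (v) and (vi), yields the claimed bound.
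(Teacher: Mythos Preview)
Your overall strategy matches the paper's: extract $(s,t)$ from the setup OTs, query the trusted party for the path $v_0,\ldots,v_R$, then simulate each round using CPA security for the unreachable records, PRF security for the unused directional keys, the affine-encoding simulator of \cite[Lemma~5.1]{AIK14}, and Yao input-privacy, with the $R\cdot 2^{-\mu}$ term coming from the pairwise-independence bound on $g_{\gamma,\delta}$. The paper organizes this as three coarse hybrids ($\hyb_0$ real, $\hyb_1$ real-with-abort on a successful inconsistent query, $\hyb_2$ ideal) and proves $\hyb_1\appc\hyb_2$ first, then uses that to bound the abort probability in $\hyb_0$ versus $\hyb_1$; your finer decomposition (i)--(vi) is equivalent in spirit.

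There is one genuine subtlety you gloss over in step~(iii). You propose to invoke the Yao simulator $\mc{S}_\yao(1^\lambda,\cunblind,\cunblind(x))$ programmed with the true next-hop direction. But in the protocol the encodings $\lunblind_s(u)$ and $\lunblind_t(t)$ are embedded in $\dsrc,\ddst$ and delivered via PIR \emph{before} the client OTs for $\hat z_\dne,\hat z_\dnw$, while the circuit's output depends on $\hat z$: if the client submits $\hat z\neq z$ and the consistency check fails (which happens with probability $1-\varepsilon$, not the ``bad event''), the real circuit outputs $\perp$, not the honest direction. So you cannot fix the Yao simulator's target output before seeing $\hat z$, and the standard simulator of Definition~\ref{def:garbling-input-privacy} produces circuit and encodings jointly. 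The paper sidesteps this by having $\mc{S}$ garble $\cunblind$ \emph{honestly} (so all encodings are available up front for the PIR records), and then, after observing $\hat z$, choosing the server-side inputs $\bar\gamma_\dne=\bar\gamma_\dnw=0$ and $\bar\delta_\dne,\bar\delta_\dnw\in\{\pm 1,\xi\}$ so that the honestly-garbled circuit outputs the correct value (the true direction if $\hat z=\bar z$, else $\perp$, with a matching abort probability $\varepsilon$ or $\varepsilon^2$). Input-privacy then says these fake server inputs are indistinguishable from the real $\gamma,\delta$ since the outputs coincide. Your argument becomes correct once you replace ``invoke the Yao simulator'' with this honest-garble-plus-programmed-server-inputs trick; everything else in your outline goes through.
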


In addition to security against a malicious client, we require our protocol to
provide privacy against a malicious server. In other words, while a malicious server
might be able to cause the client to receive an invalid path, it still
cannot learn any information about the client's source or destination. We formalize
this in the following theorem.
\iftoggle{fullversion}{We defer the formal proof to
Appendix~\ref{app:malicious-server-privacy-proof}.}{}

\begin{theorem}[Privacy Against a Malicious Server]
\label{thm:malicious-server-privacy}
Let $\pi$ be the protocol in Figure~\ref{fig:protocol} instantiated with PIR and OT
primitives that provide privacy against a malicious server. Let $\lambda$ be a security
parameter and let $f$ be the ideal
shortest-paths functionality. Then, for all $\ppt$
adversaries $\mc{A}$, there exists a $\ppt$ adversary $\mc{S}$ such that
for every polynomial-size circuit family $\mc{E} = \set{\mc{E}}_\lambda$,
\[ \advpriv_{\pi, f, \mc{A}, \mc{S}, \mc{E}}(\lambda) \le \negl(\lambda), \]
where $\negl(\lambda)$ denotes a negligible function in $\lambda$.
\end{theorem}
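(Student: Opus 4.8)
The plan is to exhibit a simulator $\mc{S}$ that runs the malicious server $\mc{A}$ as a black box and emulates the server's interaction with the honest client without ever learning the client's source--destination pair $(s,t)$. The key structural observation is that in the protocol of Figure~\ref{fig:protocol}, every message the client \emph{sends} to the server is a receiver message in one of the OT or PIR subprotocols --- the two $1$-out-of-$n$ OTs in the setup phase and, in each of the $R$ rounds, the two PIR queries plus the OTs that deliver the garbled-circuit input encodings for $\hat z_\dne,\hat z_\dnw$ --- while every other protocol message flows from the server to the client (garbled circuits, input encodings for the server's own values, and OT/PIR sender messages). Moreover the server's database records are prepared identically for every index, so they reveal nothing about $(s,t)$. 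Hence the entire view of $\mc{A}$, and therefore the adversary output it hands to $\mc{E}$, is a function of $\mc{A}$'s own input and randomness together with the client's OT/PIR query messages alone. The simulator will simply run the honest client algorithm on an arbitrary fixed dummy input (say $s=t=1$) to produce those query messages, forward the server's input to the trusted party $\mc{T}$ (which yields nothing back, since the server receives no output), and output whatever $\mc{A}$ outputs.

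To show $\execr'_{\pi,\mc{A},\mc{E}} \appc \execi'_{f,\mc{S},\mc{E}}$, I would run a hybrid argument over the $N=\poly(\lambda,R)$ OT and PIR invocations $I_1,\dots,I_N$ in the order they occur. Given the inputs that $\mc{E}$ chooses for both parties (which a reduction running $\mc{E}$ and $\mc{A}$ as subroutines sees), the index the honest client requests in invocation $I_k$ is a well-defined value $i_k^\ast$; fix this whole sequence in advance. Let $H_k$ be the experiment identical to $\execr'$ except that invocations $I_1,\dots,I_k$ request the dummy index while $I_{k+1},\dots,I_N$ request $i_{k+1}^\ast,\dots,i_N^\ast$. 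Then $H_0$ is distributed as $\execr'_{\pi,\mc{A},\mc{E}}$ and $H_N$ as $\execi'_{f,\mc{S},\mc{E}}$. A distinguisher between $H_{k-1}$ and $H_k$ breaks privacy of the OT or PIR primitive used in $I_k$: the reduction simulates every other invocation itself (dummy indices before $I_k$, the fixed $i_j^\ast$ after it), plants the external challenge transcript --- a receiver interaction for index $i_k^\ast$ versus the dummy index --- in place of $I_k$, and runs $\mc{A}$ and $\mc{E}$. Privacy against a malicious server bounds each step by a negligible quantity, and summing over the $N$ steps gives $\advpriv_{\pi,f,\mc{A},\mc{S},\mc{E}}(\lambda)\le\negl(\lambda)$.

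It is worth noting why nothing more is needed --- in particular, no reasoning about the garbled circuit, the PRF, or the symmetric encryption scheme. In the privacy experiment $\mc{E}$ receives only the adversary's output, never the honest client's, so we never have to argue that the client recovers a correct path or that its view is simulatable; correctness and the consistency checks of Section~\ref{sec:ensuring-consistency} are irrelevant here and are exactly what Theorem~\ref{thm:malicious-client-security} handles instead. The delicate point, such as it is, will be the hybrid bookkeeping: because the client's query index in a later round depends on the outputs of earlier rounds (hence on the server's routing data), one must fix the full sequence $i_1^\ast,\dots,i_N^\ast$ up front rather than recompute it inside each hybrid, and check that swapping one invocation at a time is sound. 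That, together with a careful line-by-line confirmation that Figure~\ref{fig:protocol} contains no client-to-server message outside the OT/PIR subprotocols, is the only subtlety; once it is in place, the theorem follows by composing the privacy guarantees of the OT and PIR building blocks.
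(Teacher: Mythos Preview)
Your approach is correct and follows the same strategy as the paper: observe that the server's view consists entirely of OT and PIR receiver messages, build a simulator that issues dummy queries, and invoke the privacy of the underlying primitives via a hybrid over invocations. The paper streamlines the argument by working in the OT-hybrid model---replacing every OT call with an ideal functionality so the malicious server sees nothing at all from those subprotocols---and by having the simulator issue PIR queries for the fixed index~$0$ on every round rather than running the honest client on a dummy input $(1,1)$. This sidesteps exactly the bookkeeping you flag (later indices depending on earlier server responses): with constant queries the simulator carries no state, so the paper can appeal directly to PIR privacy round by round without needing to pre-compute and fix the sequence $i_1^\ast,\dots,i_N^\ast$ from a shadow real execution.
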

\iftoggle{fullversion}{}{
  \begin{proof}[Proof (Sketch)]
  We give a sketch of the proof, and defer the full argument to the extended
  version of this paper. We argue that the server's view
  of the protocol execution can be simulated independently of the client's input.
  At a high-level, this follows from the fact that the server's view in the protocol
  execution consists only of its view in OT and PIR protocols. By assumption, privacy
  of the OT and PIR protocols implies the existence of a simulator that can simulate
  the server's view of the OT or PIR protocol independently of the client's input.
  Thus, for any adversarial server $\mc{A}$ in the real-world,
  we can construct a simulator $\mc{S}$ that is able to simulate a view computationally
  indistinguishable from that of $\mc{A}$ in the real protocol (by invoking the 
  underlying PIR and OT simulators for each sub-protocol).
  \end{proof}
}

\section{Experiments}
\label{sec:implementation}

In this section, we describe our implementation of the private routing protocol
from Figure~\ref{fig:protocol}.
Then, we describe our procedure for preprocessing and compressing actual road networks
for major cities taken from OpenStreetMap~\cite{OpenStreetMap}. Finally, we give
concrete performance benchmarks for our preprocessing and compression
pipeline as well as our private routing protocol on actual road networks.

\subsection{Protocol Implementation}
\label{sec:protocol-implementation}

To evaluate the performance of the protocol in Figure~\ref{fig:protocol},
we implemented the complete protocol in C++. In this section, we describe the building blocks
of our implementation. For each
primitive, we choose the parameters to guarantee a minimum of 80 bits of security. The complete
protocol implementation contains approximately 4000 lines of code.

\para{PIR.} We implemented the (recursive) PIR
protocol based on additive homomorphic
encryption from~\cite{KO97,OS07}. We instantiate the additive
homomorphic encryption scheme
with Paillier's cryptosystem~\cite{Pai99}, and use NTL~\cite{NTL}
over GMP~\cite{GMP} to implement the necessary modular
arithmetic. We use a 1024-bit RSA modulus for the plaintext space in the Paillier cryptosystem, which provides
80 bits of security. We use two levels of recursion in the PIR protocol, so the communication
scales as $O(\sqrt[3]{n})$ for an $n$-record database.

\para{OT.} We instantiate the OT protocol with the protocol from
\cite[\S 7.3]{HL10} which provides security against malicious clients and privacy
against malicious servers. This protocol is a direct generalization
of the Naor-Pinkas OT protocol~\cite{NP01}
based on the decisional Diffie-Hellman (DDH) assumption. Security
against a malicious client is enforced by having the client
include a zero-knowledge proof of knowledge (specifically, a Schnorr proof~\cite{Sch89})
with its OT request. To decrease the number
of rounds of communication, we apply the Fiat-Shamir heuristic~\cite{FS86} to
transform the interactive proof of knowledge into a non-interactive one
by working in the random oracle model. We instantiate the random
oracle with the hash function SHA-256.
For improved performance, we implement the Naor-Pinkas OT protocol
over the 256-bit elliptic curve group \texttt{numsp256d1} from~\cite{BCLN14a}.
We use the MSR-ECC~\cite{BCLN14a} library for the implementation of the
underlying elliptic curve operations. The 256-bit curve provides
128 bits of security.

\para{Arithmetic and Yao's circuits.} We implement our arithmetic circuits over
the finite field $\F_p$ where $p = 2^{61} - 1$ is a Mersenne prime. Then,
reductions modulo~$p$ can be performed using just two $p$-bit additions.
We use NTL~\cite{NTL} over GMP~\cite{GMP} for the finite field arithmetic.

For the garbled circuit implementation, we use
JustGarble~\cite{BHKR13} with the ``free XOR''~\cite{KS08} and
row-reduction optimizations~\cite{PSSW09}. We set the parameters of the
garbling framework to obtain 80-bits of security.
We use the optimized addition, comparison, and multiplexer circuits from~\cite{KSS11}
to implement the neighbor-computation function shown in Figure~\ref{fig:neighbor-computation}.
For multiplication, we implement the basic ``school method.''

\para{Record encryption and PRF.} We instantiate the CPA-secure encryption scheme
in Figure~\ref{fig:protocol} with AES in counter mode.  We also instantiate the PRF
used for deriving the neighbor keys (Step~\ref{enum:neighbor-key-derivation} in
Figure~\ref{fig:protocol}) with AES. We use the implementation of AES
from OpenSSL~\cite{OpenSSL}.

\subsection{Preprocessing and Map Compression.}
\label{sec:preprocessing-implementation}
We extract the street maps for four major cities
(San Francisco, Washington,~D.C., Dallas, and Los Angeles) from OpenStreetMap~\cite{OpenStreetMap}.
For each city, we take its most important roadways based on annotations in OpenStreetMap,
and construct the resulting graph $\mc{G}$.
Specifically, we introduce a node for each
street intersection in the city and an edge for each roadway. We assign edge weights
based on the estimated time needed to traverse the associated road segment (computed by taking the
length of the segment and dividing by the approximated speed limit along the segment).
Using the procedure described in Section~\ref{sec:graph-compression},
we preprocess the graph to have out-degree at most $4$. We then
associate each edge of $\mc{G}$ with a cardinal direction
by solving the assignment problem from Section~\ref{sec:graph-compression}.
We use Stachniss' implementation~\cite{Sta04}
of the Hungarian method~\cite{KY55} to solve this assignment problem.

Given the graph $\mc{G}$ corresponding to the road network for each city, we run
Dijkstra's algorithm~\cite{Dij59} on each node $s$ in $\mc{G}$ to
compute the shortest path between all pairs of nodes. Then, using the all-pairs shortest
paths information, we construct the next-hop routing matrices
$(\matrixind{M}\dne, \matrixind{M}\dnw)$ for $\mc{G}$. We remark that we can
substitute any all-pairs shortest path algorithm for Dijkstra's in this step.
The underlying principle we exploit in the construction of our protocol
is the fact that next-hop routing matrices for road networks have a simple compressible structure
amenable to cryptography.

Finally, we implement the optimization-based compression approach described in
Section~\ref{sec:graph-compression} to compress the next-hop routing matrices
$\matrixind{M}\dne$ and $\matrixind{M}\dnw$. We minimize the objective function
from Eq.~\eqref{eq:lbfgs-objective} with the loss function set to
the modified Huber hinge loss from Eq.~\eqref{eq:huber-loss}.
Because of the highly 
parallelizable nature of the objective function,
we write specialized CUDA kernels to evaluate the objective function and its derivative on the GPU.
In our experiments, we use the LBFGS optimization algorithm~\cite{BLNZ95}
from the Python scientific computation libraries NumPy and SciPy~\cite{ADHHO01}
to solve the optimization problem.

\subsection{Experiments}
\label{sec:experiments}

\begin{table}[t]
  \begin{center}
  \begin{tabular}{l | c | c | c}
    City & $n$ & Preprocessing Time (s) & Compression Time (s) \\ \hline
    San Francisco & 1830 & 0.625 & 97.500 \\
    Washington,~D.C. & 2490 & 1.138 & 142.431 \\
    Dallas & 4993 & 4.419 & 278.296 \\
    Los Angeles & 7010 & 9.188 & 503.007
  \end{tabular}
  \end{center}
  \caption{Average time to preprocess and compress the next-hop routing matrices for different
  networks. The second column gives the number of nodes $n$ in each city's road network.
  The preprocessing time column gives the average time needed to orient the
  edges, compute all-pairs shortest paths, and construct
  the next-hop routing matrix for the network. The compression
  time column gives the average time needed to compress the $\dne$ or $\dnw$ component
  of the next-hop routing matrices.
  \label{tab:preproc-times}}
\end{table}

\para{Graph preprocessing and compression.}
We first measure the time needed to preprocess and compress
the next-hop routing matrices for several road networks.
The preprocessing time includes the time needed to orient
the edges, compute all-pairs shortest paths, and
construct the next-hop routing matrix for the network
(as described in Section~\ref{sec:preprocessing-implementation}).

We also measure the time needed to compress the resulting next-hop routing matrices
for the different networks.
Recall that our compression method takes a matrix $M \in \set{-1,1}^{\vorig \times \vorig}$ and
produces two matrices $A,B \in \Z^{\vorig \times \vcomp}$ such that $\sign(AB^T)$ is a good
approximation of $M$. Since the modified Huber hinge loss (Eq.~\ref{eq:huber-loss})
is an upper bound
on the \mbox{0-1} loss function $\ell(x,t) = \mathbf{1}\set{\sign(x) = t}$,
when the objective value $J(A,B)$ is less than $1$
(where $J(A,B)$ is the objective function in
Eq.~\ref{eq:lbfgs-objective}), we have $\sign(AB^T) = M$, i.e., the matrices
$A,B$ \emph{perfectly reconstruct} $M$. The parameter $\vcomp$ is the number of columns in the matrices $A$
and $B$. Because our objective function is non-convex in the variables
$A$ and $B$, LBFGS is neither guaranteed to find the globally optimal solution, nor even
to converge in a reasonable number of iterations.
As a heuristic for deciding whether a candidate value of $\vcomp$ admits a feasible solution that
perfectly reconstructs $\matrixind{M}\dne$ and $\matrixind{M}\dnw$,
we run up to 5000 iterations of LBFGS and check whether the resulting solution gives a perfect
reconstruction of $M$. To determine the most compact representation,
we search over a range of possible values for $\vcomp$, and
choose the smallest value $\vcomp$ that yields a perfect reconstruction
of $M$.

We apply our compression method to the routing matrices for road networks 
from four cities of varying size. Then, we compare
the size of the original matrix $M$ to the size of its compressed representation $A,B$. The number
of bits needed to represent $A,B$ is determined by two factors: the number of
columns $\vcomp$ in each matrix $A,B$ and the precision $\nu$ (measured in number of bits)
needed to represent each
entry in $A,B$. Recall that the optimization procedure outputs two \emph{real-valued}
matrices such that $\sign(AB^T) = M$. To obtain a representation over the integers
(as required by the arithmetic circuits),
we scale the entries of $A,B$ by a constant factor
and round each of the resulting entries to the nearest integer. The precision
$\nu$ is the number of bits needed to represent each integer
component of $A,B$ after rescaling. 
We choose the smallest scaling factor such that the rescaled matrices
perfectly reconstruct the routing matrix $M$.

We run the preprocessing and compression experiments on a machine running Ubuntu 14.04 with
an 8-core 2.3 GHz Intel Core i7 CPU, 16 GB of RAM, and an Nvidia GeForce GT
750M GPU. The preprocessing and compression times for the different networks
are summarized in Table~\ref{tab:preproc-times}. A description of the compressed
representation of the routing matrices for each network is
given in Table~\ref{tab:city-parameters}.

In Figure~\ref{fig:compression}, we show the time needed to compress a single component
of the next-hop routing matrix, as well as the resulting compression factor, for subgraphs
of the road network for Los Angeles. The compression is quite effective,
and the achievable compression factor increases with the size of the network.
Moreover, even though the sizes of the next-hop routing matrices increase quadratically in
the number of nodes in the graph, the optimization time remains modest.
For graphs with 7000 nodes (and 350,000 optimization variables), finding a compact representation
that perfectly reconstructs the next-hop matrix completes in under 10 minutes. Since we
compress both the $\dne$ and $\dnw$ components of the routing matrix, the total time to both
preprocess and compress the shortest path information for the full city of Los Angeles is just
over 15 minutes. Lastly, we note that the preprocessing time for each network is small:
orienting the edges and computing all-pairs shortest paths via Dijkstra's algorithm
completes in under 10 seconds.

\begin{table}[t]
  \begin{center}
  \begin{tabular}{l | c | c | c | c | c}
  City & $n$ & $\vcomp$ & $\nu$ & $\tau$  & Compression Factor \\ \hline
  San Francisco & 1830 & 12 & 10 & 20 & 7.63 \\
  Washington,~D.C. & 2490 & 14 & 10 & 19 & 8.89 \\
  Dallas & 4993 & 19 & 12 & 23 & 10.95 \\
  Los Angeles & 7010 & 26 & 12 & 24 & 11.23
  \end{tabular}
  \end{center}
  \caption{Parameters for the compressed representation of the road networks for each city: $n$ is
  the number of nodes in each network, $\vcomp$ and $\nu$ are the number of columns and the precision, respectively, in the
  routing matrices $\matrixind{A}{\dne}$, $\matrixind{B}{\dne}, \matrixind{A}{\dnw}, \matrixind{B}{\dnw}$
  of the compressed representation, and $\tau$ is the maximum number of bits
  needed to represent an element in the products $\matrixind{A}{\dne}(\matrixind{B}{\dne})^T$ and
  $\matrixind{A}{\dnw}(\matrixind{B}{\dnw})^T$. The last
  column gives the compression factor attained for each network (ratio of size of uncompressed
  representation to size of compressed representation).}
  \label{tab:city-parameters}
\end{table}

\begin{figure}[t]
\centering
\iftoggle{fullversion}{\includegraphics[height=6cm]{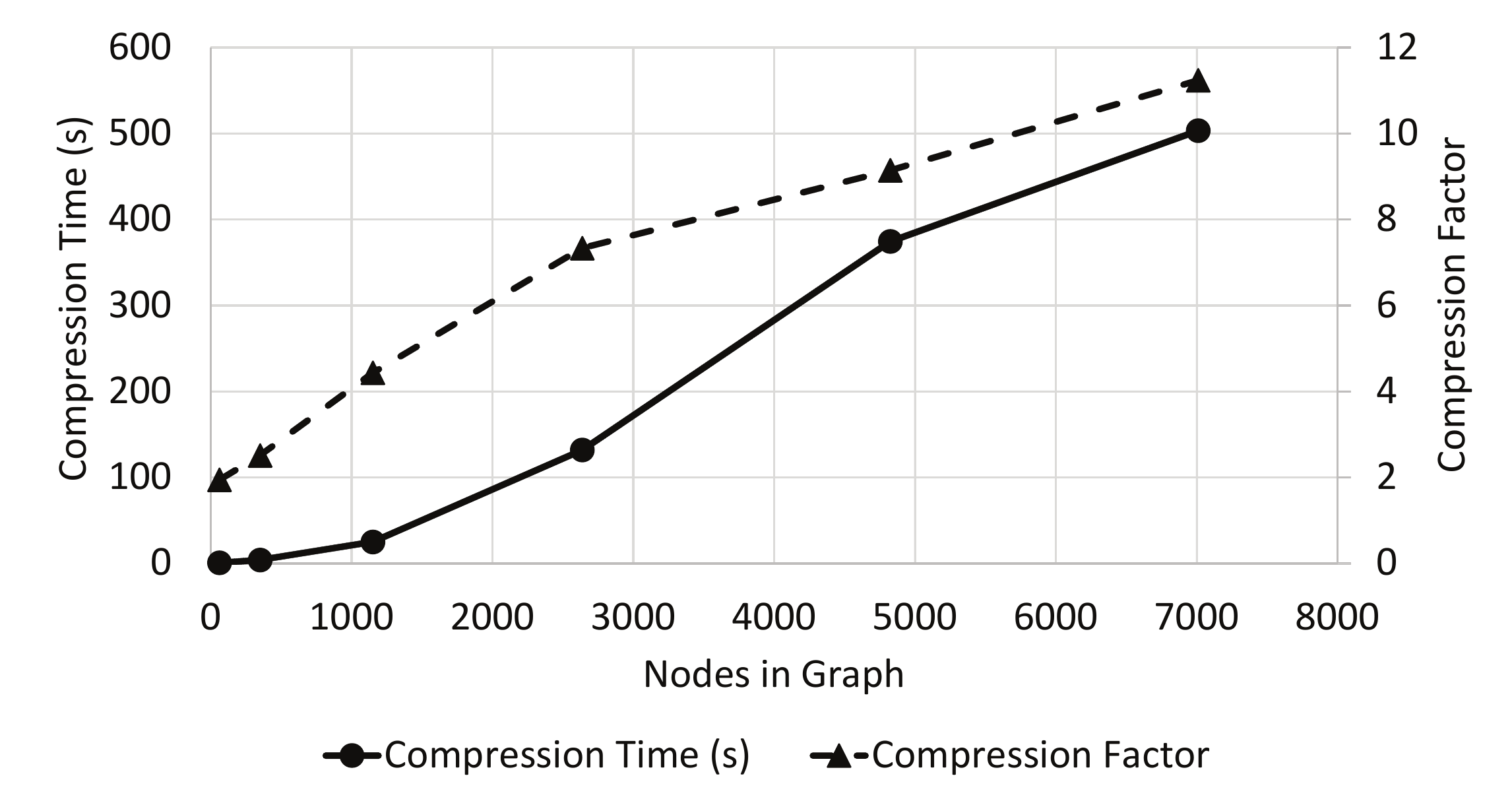}}
                      {\includegraphics[width=8cm]{figures/compression-all}}
\caption{Average time needed to compress the next-hop routing matrix and the resulting
compression factor for networks constructed from subgraphs of the road network of Los Angeles.}
\label{fig:compression}
\end{figure}

\begin{table*}[t]
  \small
  \begin{center}
  \begin{tabular}{l | c | c | c c c | c | c c | c c}
  \multirow{2}{*}{City} & Total Time (s) & \multicolumn{4}{c|}{Client Computation (s)} & 
    \multicolumn{3}{c|}{Server Computation (s)} & \multicolumn{2}{c}{Bandwidth (KB)} \\ \cline{3-11}
  & (Single Round) & Total & PIR & OT & GC & Total & PIR & OT & Upload & Download \\ \hline
  San Francisco   & $1.44 \pm 0.16$ & 0.35 & 0.31 & 0.02 & 0.02 & 0.88 & 0.80 & 0.08 & 51.74 & 36.50 \\
  Washington,~D.C. & $1.64 \pm 0.13$ & 0.38 & 0.34 & 0.02 & 0.02 & 1.07 & 1.00 & 0.08 & 52.49 & 37.51 \\
  Dallas          & $2.91 \pm 0.19$ & 0.45 & 0.41 & 0.02 & 0.02 & 2.19 & 2.11 & 0.08 & 55.50 & 39.52 \\
  Los Angeles     & $4.75 \pm 0.22$ & 0.55 & 0.51 & 0.02 & 0.02 & 3.70 & 3.62 & 0.08 & 57.01 & 43.53
  \end{tabular}
  \end{center}
  \caption{Performance benchmarks (averaged over at least 90 iterations)
  for a single round of the private routing protocol described in
  Figure~\ref{fig:protocol} on road networks for different cities.
  The ``Total Time'' column gives the average runtime and standard deviation
  for a single round of the protocol (including network communication times between a client and
  a server on Amazon EC2). The PIR,
  OT, and GC columns in the table refer to the time to perform the PIR for the affine
  encodings, the time to perform the OT for the garbled circuit encodings, and the time needed
  to evaluate the garbled circuit, respectively. The bandwidth measurements are taken with
  respect to the client (``upload'' refers to communication from the client to the server)}
  \label{tab:city-routing-computation}
\end{table*}

\begin{table*}[t]
  \small
  \begin{center}
  \begin{tabular}{l | c | c | c | c | c | c | c}
  \multirow{2}{*}{City} & \multirow{2}{*}{$R$} & \multicolumn{2}{c|}{Offline Setup} & \multicolumn{2}{c|}{Online Setup} & Total Online & Total Online \\ \cline{3-6}
  & & Time (s) & Band. (MB) & Time (s) & Band. (MB) & Time (s) & Bandwidth (MB) \\ \hline
  San Francisco & 97 & 0.135 & 49.08 & 0.73 & 0.021 & 140.39 & 8.38 \\
  Washington,~D.C. & 120 & 0.170 & 60.72 & 0.76 & 0.023 & 197.48 & 10.57 \\
  Dallas & 126 & 0.174 & 63.76 & 0.92 & 0.027 & 371.44 & 11.72 \\
  Los Angeles & 165 & 0.223 & 83.49 & 1.00 & 0.028 & 784.34 & 16.23 \\
  \end{tabular}
  \end{center}
  \caption{End-to-end performance benchmarks for the private routing protocol in Figure~\ref{fig:protocol}
  on road networks for different cities. For each network, the number of iterations $R$ is set to the
  maximum length of the shortest path between two nodes in the network. The offline computation
  refers to the server preparation and garbling of the $R$ circuits for evaluating the neighbor-computation
  function from Figure~\ref{fig:neighbor-computation}. The offline computation time
  just includes the computational cost and does \emph{not} include the garbled circuit download time.
  The online setup measurements correspond to computation
  and communication in the ``Setup'' phase of the protocol in Figure~\ref{fig:protocol}. 
  The ``Total Online Time'' and ``Total Online Bandwidth'' columns
  give the total end-to-end time (including network communication) and total communication between the client
  and server in the \emph{online} phase (navigation component) of the protocol.
  \label{tab:city-end-to-end}
  }
\end{table*}

\para{Performance on road networks.}
Next, we measure the run-time and bandwidth requirements of our private
routing protocol from Figure~\ref{fig:protocol}.
Table~\ref{tab:city-parameters} gives the
number of columns $\vcomp$, and the precision $\nu$ of the compressed representation of the
networks for the different cities. In addition, we also compute
the maximum number of bits $\tau$ needed to encode an element in 
the products
$\matrixind{A}{\dne}(\matrixind{B}{\dne})^T$ and
$\matrixind{A}{\dnw}(\matrixind{B}{\dnw})^T$.
From Theorem~\ref{thm:malicious-client-security}, a malicious client can
successfully cheat with probability at most $R \cdot 2^{-\mu}$, where
$\mu$ is the statistical security parameter, and $R$ is the total number of
rounds in the protocol. For each network in our experiments, we set the number of rounds
$R$ to be the maximum length over all shortest paths between any
source-destination pair in the network. This ensures both correctness (at the end
of the protocol execution, the client obtains the complete shortest path from its
source to its destination) as well as hides the length of the requested
shortest path from the server (since the number of rounds in the protocol is independent of the
client's input). Next, recall the relation
between $\mu$ and the order $p$ of the finite field for the affine encodings:
$p > 2^{\tau + \mu + 1}$.  In our experiments, we fix $p = 2^{61}-1$, and 
$R$ is at most $2^8 = 256$. This choice of parameters translates to $\mu$ ranging from $36$ to
$41$, or analogously, a failure probability of
$2^{-33}$ for the smaller networks to $2^{-28}$
for larger networks. Using larger fields will reduce this probability, but
at the expense of performance.

To reduce the communication in each round of the protocol in Figure~\ref{fig:protocol}, 
we note that it is not necessary for the server to prepare and send a garbled circuit to the client
on each round of the routing protocol. Since the neighbor-computation circuit is independent
of the state of the protocol execution, the circuits can be generated and stored long before the protocol execution
begins. Thus, in an \emph{offline} phase, the server can prepare and transmit to the client
a large number of garbled circuits.
During the online protocol execution, on the $\ord{r}$ round, the server just sends
the encodings corresponding to its input to the client;
it does \emph{not} send the description of the garbled
circuit. This significantly reduces the communication cost of each round of the online protocol.
We note that even if the routing matrices
$\matrixind{A}{\dne}, \matrixind{B}{\dne}, \matrixind{A}{\dnw}, \matrixind{B}{\dnw}$
changed (for instance, due to updates in traffic or weather conditions in the network)
during the protocol execution, as long as the bound $\tau$ on the bit-length of
entries in the products
$\matrixind{A}{\dne} (\matrixind{B}{\dne})^T$ and
$\matrixind{A}{\dnw} (\matrixind{B}{\dnw})^T$
remain fixed, the client and server do \emph{not} have to redo this offline
setup phase. We describe our extension for supporting live updates to the routing
information in greater detail in Section~\ref{sec:extensions}.

We run the server on a compute-optimized Amazon EC2 instance (running Ubuntu 14.04
with a 32-core 2.7 GHz Intel Xeon E5-2680v2 processor and 60 GB of RAM) to model
the computing resources of a cloud-based map provider.
The throughput of our protocol is bounded by the PIR computation on the server's side.
We use up to 60 threads on the server for the PIR computation.
All other parts of our system are single-threaded. For the client,
we use a laptop running Ubuntu 14.04 with
a 2.3 GHz Intel Core i7 CPU and 16 GB of RAM. The connection speed on the client
is around 50 Mbps. Both client and server support the AES-NI instruction set, which
we leverage in our implementation.

First, we measure the cost of one round of the private navigation
protocol. We assume that the client has
already downloaded the garbled circuits prior to the start of the protocol.
Table~\ref{tab:city-routing-computation}
gives the total computation time and bandwidth per round of the routing protocol.
When measuring the total time, we measure the end-to-end
time on the client, which includes the time for the network round trips. Table~\ref{tab:city-routing-computation}
also gives a breakdown of
the computation in terms of each component of the protocol: PIR for the arithmetic circuit encodings,
OT for the garbled circuit encodings, and garbled-circuit evaluation for computing the next-hop.

We also measure the total end-to-end costs for a single shortest path query.
As noted earlier, we set the
number of rounds $R$ for each network to be the maximum length of any shortest path in the network.
Irrespective of the client's source or destination, the client and server always engage in exactly
$R$ rounds of the private navigation protocol. Table~\ref{tab:city-end-to-end} shows the total computation time and
bandwidth required to complete a shortest-path query in the different networks. In the end-to-end
benchmarks, we also measure the offline costs of the protocol, that is, the
time needed for the server to garble $R$ neighbor-computation circuits
and the amount of communication needed for the client to download the circuits. In addition, we
measure the computation and bandwidth needed in the online setup phase of the routing protocol
(Figure~\ref{fig:protocol}).

In our protocol, the online setup phase of the protocol consists of three rounds of interaction.
First, the server sends the client the public description of the map. Then the client OTs
for the source and destination keys for the first round of the protocol, which requires two rounds
of communication. As shown in Table~\ref{tab:city-end-to-end}, the online setup procedure
completes in at most a second and requires under 30 KB of communication in
our example networks.

Next, we consider the performance of each round of the protocol. From
Table~\ref{tab:city-routing-computation}, the most computationally 
intensive component of our protocol is computing the responses to the PIR
queries.
In our implementation, we use
a Paillier-based PIR, so the server must perform $O(n)$ modular
exponentiations on each round of the protocol.
While it is possible to use a less computationally-intensive PIR such as~\cite{MBFK14},
the bandwidth required is much higher in practice.
Nonetheless, our results demonstrate that
the performance of our protocol is within the realm of practicality for
real-time navigation in cities like San Francisco or
Washington,~D.C.

Lastly, we note that the offline costs are dominated essentially by communication. With
hardware support for AES, garbling 100 neighbor-computation circuits
on the server completes in just a quarter of a second. While garbling is fast, the
size of each garbled circuit is 518.2 KB. For city networks, we typically require 100-150
circuits for each shortest-path query; this corresponds to 50-100
MB of offline download prior to the start of the navigation protocol. The experimental
results, however, indicate that the number of garbled circuits required for an end-to-end
execution grows sublinearly in the size of the graph.
For example, the total number of rounds (and correspondingly, the number of required
garbled circuits) for a graph with 1800 nodes is just under 100, while for a graph
with almost four times more nodes, the number of rounds only increases by a factor
of 1.7. We also note that each
neighbor-computation circuit consists of just under 50,000 non-XOR gates.
In contrast, generic protocols
for private navigation that construct a garbled circuit for Dijkstra's algorithm
yield circuits that contain hundreds of millions
to tens of billions of non-XOR gates~\cite{CMTB13,CLT14,LWNHS15}.

Finally, we see that despite needing to pad the number of rounds to a worst-case
setting, the total cost of the protocol remains modest. For the city of Los Angeles,
which contains over 7000 nodes, a shortest-path query still completes in under 15 minutes
and requires just over 16 MB of total bandwidth. Moreover, since
the path is revealed edge-by-edge rather than only at the end of the
computation, the overall protocol is an efficient solution for fully-private
navigation.

\para{Comparison to other approaches for private navigation.} Many
protocols~\cite{DK05,LLLZ09,XSS14} have been proposed for private navigation, but most
of them rely on heuristics and do not provide strong security
guarantees~\cite{DK05,LLLZ09}, or
guarantee privacy only for the client's location, and not the server's routing
information~\cite{XSS14}. A different approach to fully-private navigation is to leverage
generic multiparty computation techniques~\cite{Yao86,GMW87}. For instance,
a generic protocol for private navigation is to construct a garbled circuit
for a shortest-path algorithm and apply Yao's protocol. This approach is quite
expensive since the entire graph structure must be embedded in the circuit.
For instance, Liu et al.~\cite{LWNHS15} demonstrate that a garbled circuit for
evaluating Dijkstra's algorithm on a graph with just 1024 nodes requires over
10~{\em billion} AND gates. The bandwidth needed to transmit a circuit of this
magnitude quickly grows to the order of GB. In contrast, even for a larger graph
with 1800 nodes, the total online and offline communication required by our protocol
is under 60 MB (and the online communication is under 10 MB).
Carter et al.~\cite{CMTB13,CLT14} describe methods for reducing the computational
and communicational cost of Yao's protocol by introducing a third
(non-colluding) party that facilitates the computation. Even with this
improvement, evaluating a single shortest path on a graph of 100 nodes still
requires over 10 minutes of computation. As a point of comparison, our protocols
complete in around 2-3 minutes for graphs that are 15-20 times larger.
Evidently, while the generic tools
are powerful, they do not currently yield a practical private navigation protocol.
We survey additional related work and techniques in Section~\ref{sec:related-work}.

\section{Extensions}
\label{sec:extensions}

In this section, we describe several extensions to our protocol: supporting
navigation between cities, handling updates to the routing information, and
updating the source node during the protocol execution (for instance, to
accommodate detours and wrong turns).

\para{Navigating between cities.} The most direct method for supporting
navigation across a multi-city region is to construct a network that spans
the entire region and run the protocol directly. However, since the server's
computation in the PIR protocols grows as $O(n \vcomp \log p)$, where $n$ is the number
of nodes in the graph, $\vcomp$ is the number of columns in the compressed
representation, and $p$ is the order of the finite field used for the affine
encodings, this can quickly become computationally infeasible for the server.

An alternative method that provides a performance-privacy trade-off is to
introduce a series of publicly-known waypoints for each city. For example,
suppose a user is navigating from somewhere in Los Angeles to somewhere in San
Diego. In this case, the user would first make a private routing request to
learn the fastest route from her current location to a waypoint in Los Angeles.
Once the user arrives at the waypoint in Los Angeles, she requests the fastest
route to a waypoint in San Diego. This second query is performed entirely in the
clear, so the user reveals to the server that she is traveling from Los Angeles
to San Diego. Once the user arrives at a waypoint in San Diego, she makes a
final private routing request to learn the fastest route to her destination. In
this solution, the server only obtains a macro-view of the user's location: it
learns only the user's source and destination cities, and no information about
the user's particular location within the city. As we have demonstrated, the
protocol in Figure~\ref{fig:protocol} is able to handle real-time navigation for
a single city; thus, using this method of waypoints, we can also apply our
protocol to navigation between cities with limited privacy loss.

\para{Live updates to routing information.} Routing information in road
networks is dynamic, and is influenced by current traffic conditions, weather
conditions, and other external factors. Ideally, the edges revealed in 
an iterative shortest-path protocol should always correspond to the shortest path to the
destination given the current network conditions. It is fairly straightforward
to allow for updates to the routing information in our protocol. Specifically, we
observe that the compressed routing matrices
$\matrixind{A}{\dne}, \matrixind{A}{\dnw}, \matrixind{B}{\dne}, \matrixind{B}{\dnw}$
need not be fixed for the duration of the protocol. As long as the total number
of columns $\vcomp$, the bound on the bit-length $\tau$ of the values in the matrix products
$\matrixind{A}{\dne} \cdot (\matrixind{B}{\dne})^T$ and
$\matrixind{A}{\dnw} \cdot (\matrixind{B}{\dnw})^T$, and the total number of rounds
$R$ in the protocol remain fixed, the server can use a different set of routing
matrices on each round of the protocol. Therefore, we can accommodate
live updates to the routing information during the protocol execution by simply
setting a conservative upper bound on the parameters $\vcomp, \tau, R$. Note that
we can always pad a routing matrix with fewer than $\vcomp$ columns to one with
exactly $\vcomp$ columns by adding columns where all entries are $0$. Since computing
the shortest path information for a city-wide network and compressing the resulting
routing matrices completes in just a few minutes, it is possible to ensure
accurate and up-to-date routing information in practice.

\para{Updating sources and destinations.} Typically, in navigation,
the user might take a detour or a wrong turn. While the protocol is designed
to constrain the client to learn a single contiguous route through the network,
it is possible to provide a functionality-privacy trade-off to accommodate deviations
from the actual shortest path. One method is to introduce an additional parameter
$K$, such that after every $K$ iterations of the protocol, the server chooses fresh
source keys for the next round of the protocol. After every $K$ rounds, the
client would also OT for a new source key.
Effectively, we are resetting the protocol every $K$ rounds and allowing the client to
choose a new source from which to navigate. Correspondingly, we would need to increase the
total number of rounds $R$ in order to support the potential for detours and wrong turns.
Though we cannot directly bound the number of rounds $R$, we can use a conservative estimate.
Of course, a dishonest client can now learn multiple sub-paths to its chosen destination, namely,
one sub-path each time it is allowed to choose a different source. In a similar manner,
we can support updates to the destination.

\section{Related Work}
\label{sec:related-work}

Numerous approaches have been
proposed for private shortest path computation~\cite{DK05,LLLZ09,MY12,Mou13,
XSS14,CMTB13,CLT14,BSA13,WNLCSSH14,KS14,LWNHS15}.
Early works such as~\cite{DK05,LLLZ09}
propose hiding the client's location by either providing approximate locations
to the server~\cite{DK05} or by having the client submit dummy sources and destinations
with each shortest path query~\cite{LLLZ09}. However, these approaches only provide limited
privacy for the client's location. Later works~\cite{Mou13,MY12,XSS14}
describe PIR-based solutions for hiding the client's location.
In~\cite{Mou13,MY12}, the client
first privately retrieves subregions of the graph that are relevant to its query~\cite{Mou13,MY12} and then
locally computes the shortest path over the subgraph. In~\cite{XSS14},
the client privately requests for columns of the next-hop
routing matrix to learn the next hop in the shortest path. While these methods
provide privacy for the client's location, they do not hide the server's
routing information.

There is also work on developing secure protocols
for other graph-theoretic problems and under different models~\cite{BS05,FG06}. For example,
Brickell and Shmatikov~\cite{BS05} consider a model where two parties hold a graph over a common
set of vertices, and the goal is to compute a function over their joint graphs.
Their protocols do not extend to navigation protocols where one party holds the full graph,
and only the client should learn the result of the computation. In~\cite{FG06}, the authors
describe protocols for parties who each hold a subset of a graph to privately reconstruct the
joint graph. Their methods are designed for social network analysis and do not directly apply to
private navigation.

Another line of work has focused on developing data-oblivious algorithms for shortest path
computation~\cite{BSA13} or combining shortest path algorithms such as Dijkstra's with
oblivious data structures or ORAM~\cite{WNLCSSH14,KS14}. In these methods, the routing
data is stored in an ORAM or an oblivious data structure on the server. The client
then executes the shortest-path algorithm on the server
to learn the path between its source and destination.
Since the pattern of memory accesses is hidden from the server, these approaches provide client privacy.
While these protocols can be efficient in practice, they
do not provide security against a malicious client 
trying to learn additional details about the routing information on the server.
Thus, for scenarios where the map data is proprietary (for instance, in the case of real-time
traffic routing), or when the routing information itself is sensitive (for instance,
when providing navigational assistance for a presidential motorcade or
coordinating troop movements in a military scenario~\cite{CMTB13,CLT14}),
the ORAM-based solutions do not provide sufficient security.

Also relevant are the works in secure multiparty computation (MPC)~\cite{Yao86,GMW87}.
While these methods can be successfully used to build private navigation
protocols~\cite{LWNHS15,CMTB13,CLT14}, they do not currently yield a practical
private navigation protocol. A more comprehensive comparison of our protocol to these
generic methods is provided at the end of Section~\ref{sec:experiments}.

There is also a vast literature on graph compression algorithms. For planar graphs,
there are multiple methods based on computing graph separators~\cite{LT79,BBK03,BBK04}.
Other methods based on coding schemes~\cite{HKL00} have also been proposed and shown
to achieve information-theoretically optimal encoding. While these algorithms
are often viable in practice, it is not straightforward to represent them compactly as
a Boolean or an arithmetic circuit. Thus, it is unclear how to
combine them with standard cryptographic primitives
to construct a private shortest path protocol.

Finally, there has also been work on developing
compact representations of graphs for answering \emph{approximate} distance
queries in graphs~\cite{TZ01}. These techniques have been successfully
applied for privacy-preserving
approximate distance computation on graphs~\cite{MKNK15}. However, these distance-oracle-based
methods only provide an estimate on the {\em length} of the shortest path, and do not give
a private navigation protocol.

\section{Conclusion}

In this work, we constructed an efficient protocol for privately computing
shortest paths for navigation. First, we developed a method for compressing the
next-hop matrices for road networks by
formulating the compression problem as that of finding a sign-preserving, 
low-rank matrix decomposition. Not only did this method yield a significant
compression, it also enabled an efficient cryptographic protocol for fully private
shortest-path computation in road networks. By combining affine encodings
with Yao's circuits, we obtained a fully-private navigation protocol
efficient enough to run at a city-scale.

\section*{Acknowledgments}
The authors would like to thank Dan Boneh, Roy Frostig, Hristo Paskov, and
Madeleine Udell for many helpful comments and discussions. While
conducting this work, authors David Wu and Joe Zimmerman were supported by NSF
Graduate Research Fellowships. This work was further supported by the
DARPA PROCEED research program. Opinions, findings and conclusions or
recommendations expressed in this material are those of the authors and do
not necessarily reflect the views of DARPA or NSF.

\iftoggle{fullversion}{
  \bibliographystyle{alpha} 
}{
  \bibliographystyle{IEEEtranS}
}
\bibliography{references}

\newcommand{\etalchar}[1]{$^{#1}$}
\begin{thebibliography}{MKNK15}

\bibitem[ADH{\etalchar{+}}01]{ADHHO01}
David Ascher, Paul~F. Dubois, Konrad Hinsen, Jim Hugunin, and Travis Oliphant.
\newblock Numerical python.
\newblock Technical report, Lawrence Livermore National Laboratory, 2001.

\bibitem[AIK14]{AIK14}
Benny Applebaum, Yuval Ishai, and Eyal Kushilevitz.
\newblock How to garble arithmetic circuits.
\newblock {\em {SIAM} J. Comput.}, 43(2):905--929, 2014.

\bibitem[AVD11]{AV11}
Julia Angwin and Jennifer Valentino-Devries.
\newblock Apple, {G}oogle collect user data.
\newblock The Wall Street Journal, 2011.

\bibitem[BBK03]{BBK03}
Daniel~K. Blandford, Guy~E. Blelloch, and Ian~A. Kash.
\newblock Compact representations of separable graphs.
\newblock In {\em {SODA}}, pages 679--688, 2003.

\bibitem[BBK04]{BBK04}
Daniel~K. Blandford, Guy~E. Blelloch, and Ian~A. Kash.
\newblock An experimental analysis of a compact graph representation.
\newblock In {\em Workshop on Analytic Algorithmics and Combinatorics}, pages
  49--61, 2004.

\bibitem[BCLN14]{BCLN14a}
Joppe Bos, Craig Costello, Patrick Longa, and Michael Naehrig.
\newblock Specification of curve selection and supported curve parameters in
  {MSR ECCLib}.
\newblock Technical Report MSR-TR-2014-92, Microsoft Research, June 2014.

\bibitem[BHKR13]{BHKR13}
Mihir Bellare, Viet~Tung Hoang, Sriram Keelveedhi, and Phillip Rogaway.
\newblock Efficient garbling from a fixed-key blockcipher.
\newblock In {\em {IEEE} Symposium on Security and Privacy}, pages 478--492,
  2013.

\bibitem[BHR12]{BHR12}
Mihir Bellare, Viet~Tung Hoang, and Phillip Rogaway.
\newblock Foundations of garbled circuits.
\newblock Cryptology ePrint Archive, Report 2012/265, 2012.
\newblock \url{http://eprint.iacr.org/}.

\bibitem[BLNZ95]{BLNZ95}
Richard~H. Byrd, Peihuang Lu, Jorge Nocedal, and Ciyou Zhu.
\newblock A limited memory algorithm for bound constrained optimization.
\newblock {\em {SIAM} J. Comput.}, 16(5):1190--1208, 1995.

\bibitem[BS05]{BS05}
Justin Brickell and Vitaly Shmatikov.
\newblock Privacy-preserving graph algorithms in the semi-honest model.
\newblock In {\em {ASIACRYPT}}, pages 236--252, 2005.

\bibitem[BSA13]{BSA13}
Marina Blanton, Aaron Steele, and Mehrdad Aliasgari.
\newblock Data-oblivious graph algorithms for secure computation and
  outsourcing.
\newblock In {\em {ASIA} {CCS}}, pages 207--218, 2013.

\bibitem[Can00]{Can00}
Ran Canetti.
\newblock Security and composition of multiparty cryptographic protocols.
\newblock {\em J. Cryptology}, 13(1):143--202, 2000.

\bibitem[Can06]{Can06}
Ran Canetti.
\newblock Security and composition of cryptographic protocols: a tutorial (part
  {I)}.
\newblock {\em {SIGACT} News}, 37(3):67--92, 2006.

\bibitem[CGKS95]{CGKS95}
Benny Chor, Oded Goldreich, Eyal Kushilevitz, and Madhu Sudan.
\newblock Private information retrieval.
\newblock In {\em {FOCS}}, pages 41--50, 1995.

\bibitem[Cha04]{Cha04}
Yan{-}Cheng Chang.
\newblock Single database private information retrieval with logarithmic
  communication.
\newblock In {\em {ACISP}}, pages 50--61, 2004.

\bibitem[Che11]{Che11}
Jacqui Cheng.
\newblock How {A}pple tracks your location without consent, and why it matters.
\newblock Ars Technica, 2011.

\bibitem[CLT14]{CLT14}
Henry Carter, Charles Lever, and Patrick Traynor.
\newblock Whitewash: outsourcing garbled circuit generation for mobile devices.
\newblock In {\em {ACSAC}}, pages 266--275, 2014.

\bibitem[CMS99]{CMS99}
Christian Cachin, Silvio Micali, and Markus Stadler.
\newblock Computationally private information retrieval with polylogarithmic
  communication.
\newblock In {\em {EUROCRYPT}}, pages 402--414, 1999.

\bibitem[CMTB13]{CMTB13}
Henry Carter, Benjamin Mood, Patrick Traynor, and Kevin R.~B. Butler.
\newblock Secure outsourced garbled circuit evaluation for mobile devices.
\newblock In {\em {USENIX}}, pages 289--304, 2013.

\bibitem[Dij59]{Dij59}
E.~W. Dijkstra.
\newblock A note on two problems in connexion with graphs.
\newblock {\em Numerische Mathematik}, 1(1):269--271, 1959.

\bibitem[DK05]{DK05}
Matt Duckham and Lars Kulik.
\newblock A formal model of obfuscation and negotiation for location privacy.
\newblock In {\em {PERVASIVE}}, pages 152--170, 2005.

\bibitem[FG06]{FG06}
Keith~B. Frikken and Philippe Golle.
\newblock Private social network analysis: how to assemble pieces of a graph
  privately.
\newblock In {\em {WPES}}, pages 89--98, 2006.

\bibitem[FS86]{FS86}
Amos Fiat and Adi Shamir.
\newblock How to prove yourself: Practical solutions to identification and
  signature problems.
\newblock In {\em {CRYPTO}}, pages 186--194, 1986.

\bibitem[GGM86]{GGM86}
Oded Goldreich, Shafi Goldwasser, and Silvio Micali.
\newblock How to construct random functions.
\newblock {\em J. {ACM}}, 33(4):792--807, 1986.

\bibitem[GIKM00]{GIKM00}
Yael Gertner, Yuval Ishai, Eyal Kushilevitz, and Tal Malkin.
\newblock Protecting data privacy in private information retrieval schemes.
\newblock {\em J. Comput. Syst. Sci.}, 60(3):592--629, 2000.

\bibitem[GKP{\etalchar{+}}13]{GKPVZ13}
Shafi Goldwasser, Yael~Tauman Kalai, Raluca~A. Popa, Vinod Vaikuntanathan, and
  Nickolai Zeldovich.
\newblock Reusable garbled circuits and succinct functional encryption.
\newblock In {\em Symposium on Theory of Computing Conference, STOC'13, Palo
  Alto, CA, USA, June 1-4, 2013}, pages 555--564, 2013.

\bibitem[GMW87]{GMW87}
Oded Goldreich, Silvio Micali, and Avi Wigderson.
\newblock How to play any mental game or {A} completeness theorem for protocols
  with honest majority.
\newblock In {\em {STOC}}, pages 218--229, 1987.

\bibitem[GO96]{GO96}
Oded Goldreich and Rafail Ostrovsky.
\newblock Software protection and simulation on oblivious {RAM}s.
\newblock {\em J. {ACM}}, 43(3):431--473, 1996.

\bibitem[GR05]{GR05}
Craig Gentry and Zulfikar Ramzan.
\newblock Single-database private information retrieval with constant
  communication rate.
\newblock In {\em {ICALP}}, pages 803--815, 2005.

\bibitem[Gt12]{GMP}
Torbjörn Granlund and {the GMP development team}.
\newblock {\em {GNU MP}: {T}he {GNU} {M}ultiple {P}recision {A}rithmetic
  {L}ibrary}, 5.0.5 edition, 2012.
\newblock \url{http://gmplib.org/}.

\bibitem[HKL00]{HKL00}
Xin He, Ming{-}Yang Kao, and Hsueh{-}I Lu.
\newblock A fast general methodology for information-theoretically optimal
  encodings of graphs.
\newblock {\em {SIAM} J. Comput.}, 30(3):838--846, 2000.

\bibitem[HL10]{HL10}
Carmit Hazay and Yehuda Lindell.
\newblock {\em Efficient Secure Two-Party Protocols - Techniques and
  Constructions}.
\newblock Information Security and Cryptography. Springer, 2010.

\bibitem[IKK{\etalchar{+}}11]{IKKLP11}
Yuval Ishai, Jonathan Katz, Eyal Kushilevitz, Yehuda Lindell, and Erez Petrank.
\newblock On achieving the ``best of both worlds'' in secure multiparty
  computation.
\newblock {\em {SIAM} J. Comput.}, 40(1):122--141, 2011.

\bibitem[Kil88]{Kil88}
Joe Kilian.
\newblock Founding cryptography on oblivious transfer.
\newblock In {\em {STOC}}, pages 20--31, 1988.

\bibitem[KO97]{KO97}
Eyal Kushilevitz and Rafail Ostrovsky.
\newblock Replication is {NOT} needed: {SINGLE} database,
  computationally-private information retrieval.
\newblock In {\em {FOCS}}, pages 364--373, 1997.

\bibitem[KS08]{KS08}
Vladimir Kolesnikov and Thomas Schneider.
\newblock Improved garbled circuit: Free {XOR} gates and applications.
\newblock In {\em {ICALP}}, pages 486--498, 2008.

\bibitem[KS14]{KS14}
Marcel Keller and Peter Scholl.
\newblock Efficient, oblivious data structures for {MPC}.
\newblock In {\em {ASIACRYPT}}, pages 506--525, 2014.

\bibitem[KSS09]{KSS11}
Vladimir Kolesnikov, Ahmad-Reza Sadeghi, and Thomas Schneider.
\newblock Improved garbled circuit building blocks and applications to auctions
  and computing minima.
\newblock Cryptology ePrint Archive, Report 2009/411, 2009.
\newblock \url{http://eprint.iacr.org/}.

\bibitem[KY55]{KY55}
H.~W. Kuhn and Bryn Yaw.
\newblock The {H}ungarian method for the assignment problem.
\newblock {\em Naval Res. Logist. Quart}, pages 83--97, 1955.

\bibitem[Lip05]{Lip05}
Helger Lipmaa.
\newblock An oblivious transfer protocol with log-squared communication.
\newblock In {\em {ISC}}, pages 314--328, 2005.

\bibitem[LLLZ09]{LLLZ09}
Ken C.~K. Lee, Wang{-}Chien Lee, Hong~Va Leong, and Baihua Zheng.
\newblock Navigational path privacy protection: navigational path privacy
  protection.
\newblock In {\em {CIKM}}, pages 691--700, 2009.

\bibitem[LP09]{LP09}
Yehuda Lindell and Benny Pinkas.
\newblock A proof of security of {Yao's} protocol for two-party computation.
\newblock {\em J. Cryptology}, 22(2):161--188, 2009.

\bibitem[LT79]{LT79}
Richard~J. Lipton and Robert~Endre Tarjan.
\newblock A separator theorem for planar graphs.
\newblock {\em {SIAM} J. Appl. Math}, (2):177--189, 1979.

\bibitem[LWN{\etalchar{+}}15]{LWNHS15}
Chang Liu, Xiao~Shaun Wang, Kartik Nayak, Yan Huang, and Elaine Shi.
\newblock {ObliVM}: A programming framework for secure computation.
\newblock In {\em {IEEE} Symposium on Security and Privacy}, 2015.

\bibitem[MBFK14]{MBFK14}
Carlos~Aguilar Melchor, Joris Barrier, Laurent Fousse, and Marc{-}Olivier
  Killijian.
\newblock Xpire: Private information retrieval for everyone.
\newblock {\em {IACR} Cryptology ePrint Archive}, 2014:1025, 2014.

\bibitem[MKNK15]{MKNK15}
Xianrui Meng, Seny Kamara, Kobbi Nissim, and George Kollios.
\newblock {GRECS:} graph encryption for approximate shortest distance queries.
\newblock {\em {IACR} Cryptology ePrint Archive}, 2015:266, 2015.

\bibitem[Mou13]{Mou13}
Kyriakos Mouratidis.
\newblock Strong location privacy: {A} case study on shortest path queries.
\newblock In {\em {ICDE}}, pages 136--143, 2013.

\bibitem[MY12]{MY12}
Kyriakos Mouratidis and Man~Lung Yiu.
\newblock Shortest path computation with no information leakage.
\newblock {\em {PVLDB}}, 5(8):692--703, 2012.

\bibitem[NP99]{NP99}
Moni Naor and Benny Pinkas.
\newblock Oblivious transfer and polynomial evaluation.
\newblock In {\em STOC}, pages 245--254, 1999.

\bibitem[NP01]{NP01}
Moni Naor and Benny Pinkas.
\newblock Efficient oblivious transfer protocols.
\newblock In {\em SODA}, pages 448--457, 2001.

\bibitem[NP05]{NP05}
Moni Naor and Benny Pinkas.
\newblock Computationally secure oblivious transfer.
\newblock {\em J. Cryptology}, 18(1):1--35, 2005.

\bibitem[OI07]{OS07}
Rafail Ostrovsky and William E.~Skeith III.
\newblock A survey of single-database private information retrieval: Techniques
  and applications.
\newblock In {\em Public Key Cryptography}, pages 393--411, 2007.

\bibitem[{Ope}]{OpenStreetMap}
{OpenStreetMap Contributors}.
\newblock {OpenStreetMap}.
\newblock \url{http://www.openstreetmap.org/}.

\bibitem[Pai99]{Pai99}
Pascal Paillier.
\newblock Public-key cryptosystems based on composite degree residuosity
  classes.
\newblock In {\em {EUROCRYPT}}, pages 223--238. 1999.

\bibitem[PSSW09]{PSSW09}
Benny Pinkas, Thomas Schneider, Nigel~P. Smart, and Stephen~C. Williams.
\newblock Secure two-party computation is practical.
\newblock In {\em {ASIACRYPT}}, pages 250--267, 2009.

\bibitem[Rab05]{Rab05}
Michael~O. Rabin.
\newblock How to exchange secrets with oblivious transfer.
\newblock {\em IACR Cryptology ePrint Archive}, 2005:187, 2005.

\bibitem[RVC{\etalchar{+}}04]{RVCPV04}
Lorenzo Rosasco, Ernesto~De Vito, Andrea Caponnetto, Michele Piana, and
  Alessandro Verri.
\newblock Are loss functions all the same?
\newblock {\em Neural Computation}, 16(5):1063--107, 2004.

\bibitem[Sch89]{Sch89}
Claus-Peter Schnorr.
\newblock Efficient identification and signatures for smart cards.
\newblock In {\em CRYPTO}, pages 239--252, 1989.

\bibitem[Sho]{NTL}
Victor Shoup.
\newblock {NTL}: A library for doing number theory.
\newblock \url{http://www.shoup.net/ntl/}.

\bibitem[Sta04]{Sta04}
Cyrill Stachniss.
\newblock C implementation of the {H}ungarian method.
\newblock \url{http://www2.informatik.uni-freiburg.de/~stachnis/misc.html},
  2004.

\bibitem[SvDS{\etalchar{+}}13]{SDSFRYD13}
Emil Stefanov, Marten van Dijk, Elaine Shi, Christopher~W. Fletcher, Ling Ren,
  Xiangyao Yu, and Srinivas Devadas.
\newblock Path {ORAM:} an extremely simple oblivious {RAM} protocol.
\newblock In {\em {CCS}}, pages 299--310, 2013.

\bibitem[{The}03]{OpenSSL}
{The Open{SSL} Project}.
\newblock {OpenSSL}: The open source toolkit for {SSL/TLS}.
\newblock \url{www.openssl.org}, April 2003.

\bibitem[TZ01]{TZ01}
Mikkel Thorup and Uri Zwick.
\newblock Approximate distance oracles.
\newblock In {\em {STOC}}, pages 183--192, 2001.

\bibitem[WNL{\etalchar{+}}14]{WNLCSSH14}
Xiao~Shaun Wang, Kartik Nayak, Chang Liu, T.{-}H.~Hubert Chan, Elaine Shi, Emil
  Stefanov, and Yan Huang.
\newblock Oblivious data structures.
\newblock In {\em {CCS}}, pages 215--226, 2014.

\bibitem[XSS14]{XSS14}
Yong Xi, Loren Schwiebert, and Weisong Shi.
\newblock Privacy preserving shortest path routing with an application to
  navigation.
\newblock {\em Pervasive and Mobile Computing}, 13:142--149, 2014.

\bibitem[Yao86]{Yao86}
Andrew~Chi{-}Chih Yao.
\newblock How to generate and exchange secrets (extended abstract).
\newblock In {\em {FOCS}}, pages 162--167, 1986.

\bibitem[Zha04]{Zha04}
Tong Zhang.
\newblock Solving large scale linear prediction problems using stochastic
  gradient descent algorithms.
\newblock In {\em {ICML}}, 2004.

\end{thebibliography}

\iftoggle{fullversion}{
  \appendix

\section{Security Proofs}
\label{app:security-proofs}

In this section, we show that the protocol in Figure~\ref{fig:protocol} 
securely computes the shortest paths functionality
in the presence of a malicious client, and provides privacy against a malicious server.
To simplify our proofs, we work in the OT-hybrid model where we assume
the parties have access to an ideal 1-out-of-$n$ OT functionality~\cite{Kil88}.
Specifically, in the real protocol, we replace every OT invocation with
an oracle call to a trusted party that implements the OT functionality:
the sender sends the database of records $(r_1, \ldots, r_n)$ to the trusted party
and the receiver sends an index $i \in [n]$ to the trusted party. The 
trusted party then gives the receiver the record $r_i$.
Security in the standard model then follows by instantiating the
ideal OT functionality with an OT protocol that provides security
against malicious clients~\cite{HL10} and privacy against malicious servers,
and then invoking the sequential composition theorem of~\cite{Can00}.

\subsection{Proof of Theorem~\ref{thm:malicious-server-privacy}}
\label{app:malicious-server-privacy-proof}
At a high level, privacy for the client's location follows from the fact that the server's view in the
protocol execution consists only of its view in the OT and PIR protocols. By assumption,
both the OT protocols and the PIR protocols provide privacy for the client's input, so the claim
follows. We now show this formally. As noted at the beginning of Appendix~\ref{app:security-proofs},
we work in the OT-hybrid model, where we replace each OT invocation with an oracle
call to an ideal OT functionality. First, we state the definition of privacy as
it applies to the PIR protocol.
\begin{definition}[Privacy for PIR]
  \label{def:pir-privacy}
  \normalfont
  Fix a security parameter $\lambda \in \N$, and let $\pi$ be a PIR protocol.
  Let $\mc{A}$ be a non-uniform
  $\ppt$ server for $\pi$. Let $\view_{\pi, \mc{A}}(1^\lambda, \mc{D}, i)$
  denote the view of adversary
  $\mc{A}$ in the PIR protocol on database $\mc{D} \in \zo^*$ (the server's input) and index $i \in \zo^*$ (the
  client's input). Then, $\pi$ is a \emph{private} PIR protocol
  if for all non-uniform $\ppt$ servers $\mc{A}$, databases $\mc{D} \in \zo^*$, indices 
  $i, i' \in \zo^*$ where $\abs{i} = \abs{i'}$, we have that
  \[ \view_{\pi, \mc{A}}(1^\lambda, \mc{D}, i) \appc \view_{\pi, \mc{A}}(1^\lambda, \mc{D}, i'). \]
\end{definition}

Let $\mc{A}$ be a malicious server for the private shortest paths protocol in
Figure~\ref{fig:protocol}. We construct an ideal-world simulator $\mc{S}$
such that the distribution of outputs of $\mc{A}$ in the real protocol
is computationally indistinguishable from the outputs of $\mc{S}$ in the ideal
world. This suffices to prove that the protocol in Figure~\ref{fig:protocol}
provides \emph{privacy} against a malicious server. The simulator
$\mc{S}$ begins by running $\mc{A}$. Below, we describe how $\mc{S}$ simulates
the view for $\mc{A}$ in the protocol execution.

\paragraph{Setup.} In the setup phase of the protocol, the simulator does 
nothing. This is the correct behavior because in the OT-hybrid model,
the adversary $\mc{A}$ does not receive any messages during the setup phase.

\paragraph{Round.} On each round of the protocol, the simulator $\mc{S}$
plays the role of the client in the PIR protocol and requests for record $0$
in both the source database and in the 
destination database. Again, since we are working in the OT-hybrid model,
these are the only messages adversary $\mc{A}$ obtains in the
real protocol.
At the end of the protocol execution, adversary $\mc{A}$ will output some
function of its view of the protocol execution. The simulator $\mc{S}$
echoes this output to the environment.

\paragraph{Correctness of the simulation.}
To conclude the proof, it suffices to show that the view $\mc{S}$ simulates
for $\mc{A}$ is computationally indistinguishable from the view $\mc{A}$
expects in the real protocol. This condition holds vacuously in the setup phase of
the protocol. Let $\viewi{r}{\mc{A}}$ be the adversary's view on the $\ord{r}$
round of the protocol. The view $\viewi{r}{\mc{A}}$ may be written as
$\viewi{r}{\mc{A}} = \set{\viewi{r}{\pir, \mc{A}}(1^\lambda, \mc{D}_\src, i_\src),
                          \viewi{r}{\pir, \mc{A}}(1^\lambda, \mc{D}_\dst, i_\dst)}$,
where $\dsrc, \ddst$ are the encoding databases $\mc{A}$ chooses in the real protocol,
and $i_\src, i_\dst$ are the indices of the records the client chooses in the real
protocol. By privacy of the PIR (Definition~\ref{def:pir-privacy}), it follows
that
\begin{align*}
\viewi{r}{\pir, \mc{A}}(1^\lambda, \mc{D}_\src, i_\src) &\appc 
\viewi{r}{\pir, \mc{A}}(1^\lambda, \mc{D}_\src, 0) \\
\viewi{r}{\pir, \mc{A}}(1^\lambda, \mc{D}_\dst, i_\dst) &\appc
\viewi{r}{\pir, \mc{A}}(1^\lambda, \mc{D}_\dst, 0).
\end{align*}
Since the request for the source encodings and for the destination encodings
constitute two independent instances of the PIR protocol, we
conclude that the view $\mc{S}$ simulates for $\mc{A}$ in each round of the
protocol is computationally indistinguishable from the view $\mc{A}$ expects in the
real protocol. Thus, the output of $\mc{S}$ in the ideal-world execution is computationally
indistinguishable from that of $\mc{A}$ in the real world. \qed

\subsection{Proof of Theorem~\ref{thm:malicious-client-security}}
\label{app:malicious-client-security-proof}

Before we prove Theorem~\ref{thm:malicious-client-security}, we describe the
simulatability requirement we require on the garbled circuit encodings 
used in the protocol in Figure~\ref{fig:protocol}.
Intuitively, we require that the garbled circuit encodings
can be entirely simulated given the output of the computation; that is,
the garbled circuit together with one set of encodings do not reveal any information
about the underlying inputs other than what is explicitly revealed by the output.
Bellare et al. formalize this notion in~\cite{BHR12}. Here, we give a simplified
definition adapted from~\cite{GKPVZ13} and specialized to the case of Yao's
garbling scheme~\cite{Yao86,LP09}.

\begin{definition}[Yao Garbling Scheme]
  \label{def:yao-garb-scheme}
  \normalfont
  A Yao garbling scheme $\piyao$ for a family of circuits $\set{\mc{C}_n}_{n \in \mathbb{N}}$
  (where $\mc{C}_n$ is a set of Boolean circuits on $n$ input bits) consists of three algorithms
  $(\yaogarble, \yaoenc, \allowbreak \yaoeval)$ where
  \begin{itemize}
    \item $\yaogarble(1^\lambda, C)$ is a randomized algorithm that
    takes as input a security parameter $\lambda$ and a
    circuit $C \in \mc{C}_n$ for some $n$ and outputs a garbled circuit
    $\tilde C$ along with a secret key $\sk$ where
    $\sk = \set{L_i^0, L_i^1}_{i \in [n]}$ is a set containing $n$ pairs of encodings
    $L_i^0, L_i^1 \in \zo^*$.

    \item $\yaoenc(\sk, x)$ is a deterministic algorithm that takes the secret key
    $\sk = \set{L_i^0, L_i^1}_{i \in [n]}$ and an input $x = x_1 \cdots x_n \in \zo^n$,
    and outputs an encoding $\tilde x = \set{L_i^{x_i}}_{i \in [n]}$. Specifically,
    the encoding $\tilde x$ of $x$ is the subset of encodings in $\sk$
    associated with the bits of $x$.

    \item $\yaoeval(\tilde C, \tilde x)$ is a deterministic algorithm that takes
    a garbled circuit $\tilde C$ and a set of encodings
    $\tilde x = \set{L_i^{x_i}}_{i \in [n]}$ for some $x \in \zo^n$ and outputs a value $z$.
  \end{itemize}
\end{definition}

\begin{definition}[Correctness]
  \normalfont
  A Yao garbling scheme $\piyao = (\yaogarble, \yaoenc, \yaoeval)$
  for a family of circuits $\set{\mc{C}_n}_{n \in \mathbb{N}}$
  is correct if for all $n = \poly(\lambda)$, $C \in \mc{C}_n$, and $x \in \zo^n$, the following
  holds. Letting $(\tilde C, \sk) \gets \yaogarble(1^\lambda, C)$, then with overwhelming probability
  in $\lambda$, we have
  \[ \yaoeval(\tilde C, \yaoenc(\sk, x)) = C(x), \]
  where the probability is taken over the random coins used in $\yaogarble$.
\end{definition}

\begin{definition}[Input Privacy]
  \label{def:garbling-input-privacy}
  \normalfont
  A Yao garbling scheme $\piyao = (\yaogarble, \yaoenc, \yaoeval)$ for a family of circuits
  $\set{\mc{C}_n}_{n \in \mathbb{N}}$ is
  input-private if there exists a $\ppt$ simulator $\mc{S}_{\yao}$ such that for all
  $n = \poly(\lambda)$, $C \in \mc{C}_n$, $x \in \zo^n$, the following holds:
  \[ \set{(\tilde C, \sk) \gets \yaogarble(1^\lambda, C)\ ;\,(\tilde C, \yaoenc(\sk, x))}
    \appc \mc{S}_{\yao}(1^\lambda, C, C(x)). \]
\end{definition}

\begin{lemma}[{\cite{Yao86,LP09}}]
  \label{lem:yao-simulatable}
  Assuming one-way functions exist, there exists a Yao garbling scheme 
  (Definition~\ref{def:yao-garb-scheme}) that is input-private
  (Definition~\ref{def:garbling-input-privacy}).
\end{lemma}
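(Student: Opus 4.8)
The plan is to instantiate the standard Yao construction and verify correctness and input privacy directly, following the analysis of Lindell and Pinkas. First I would recall the cryptographic chain that one-way functions provide: one-way functions yield a pseudorandom generator, hence (via the GGM construction) a pseudorandom function family, hence a CPA-secure symmetric encryption scheme $(\ms{E},\ms{D})$ with key space $\zo^\lambda$. Standard constructions moreover endow such a scheme with the two technical properties needed below: its ciphertexts have an \emph{elusive, efficiently verifiable range} (one can test, with only negligible error, whether a ciphertext was produced under a given key), and it is secure under \emph{double encryption} (encrypting a message twice, under two independently chosen keys, hides the message even against a distinguisher that later learns one of the two keys). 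All of these implications are classical, and I would simply cite them.

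For $\yaogarble(1^\lambda, C)$: fix a topological ordering of the gates of $C$, and to each wire $w$ assign two uniformly random keys $L_w^0, L_w^1 \in \zo^\lambda$. For each binary gate with input wires $a,b$, output wire $c$, and gate function $\bar g : \zo^2 \to \zo$, build a table whose $(\sigma_a,\sigma_b)$ entry is the double encryption $\ms{E}_{L_a^{\sigma_a}}\bigl(\ms{E}_{L_b^{\sigma_b}}(L_c^{\bar g(\sigma_a,\sigma_b)})\bigr)$, and randomly permute the four entries. The garbled circuit $\tilde C$ consists of all gate tables together with, for each output wire, the map sending its two keys to the bits $0$ and $1$; set $\sk = \set{(L_i^0,L_i^1)}_{i\in[n]}$, let $\yaoenc(\sk,x) = \set{L_i^{x_i}}_{i\in[n]}$, and let $\yaoeval$ process gates in topological order, at each gate using the efficiently verifiable range to identify and decrypt the one table entry openable with the two keys currently held, then reading output bits off the output-wire maps. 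Correctness follows by induction on the topological order: if the evaluator holds exactly $L_w^{v_w}$, where $v_w$ is the value carried by $w$ when computing $C(x)$, then at each gate it recovers exactly $L_c^{v_c}$; the only failure mode is a verifiable-range collision, which occurs with negligible probability.

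For input privacy I would build $\mc{S}_{\yao}(1^\lambda, C, y)$ as follows: choose one \emph{active} key $L_w \getsr \zo^\lambda$ per wire $w$; for each gate, fill the entry that will actually be opened with $\ms{E}_{L_a}(\ms{E}_{L_b}(L_c))$ using the active input and output keys, fill the other three entries with double encryptions of the same active output key $L_c$ under the active input keys paired with fresh random keys, then permute; for each output wire, program the decoding map so that the active key decodes to the corresponding bit of $y$. Output this fake $\tilde C$ together with the encoding $\set{L_i}_{i\in[n]}$; note that the gate function $\bar g$ and the internal wire values are never used, so $\mc{S}_{\yao}$ is well-defined from $(C,y)$ alone. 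To establish $\set{(\tilde C,\sk)\gets\yaogarble(1^\lambda,C);(\tilde C,\yaoenc(\sk,x))} \appc \mc{S}_{\yao}(1^\lambda,C,C(x))$ I would run a hybrid argument over the gates in topological order: in hybrid $\hybrid{j}$ the first $j$ gates are simulated and the rest honestly garbled, so $\hybrid{0}$ is the real distribution and $\hybrid{g}$ (where $g$ is the number of gates) is, up to the information-theoretically invisible relabeling of honest keys as active keys and the observation that honest output maps already send active output keys to the bits of $C(x)$, exactly the simulated distribution. Passing from $\hybrid{j-1}$ to $\hybrid{j}$ uses double-encryption security: the three non-opened entries of gate $j$ are each encrypted under at least one key that never enters the evaluator's view, so their plaintexts can be switched to the dummy value $L_c$ undetectably.

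The main obstacle will be making each hybrid step fully rigorous, because a single wire key is reused — once as the output key of one gate and possibly several times as an input key of later gates — so the reduction to encryption security must consistently embed its challenge everywhere that key appears, and the hybrids must be sequenced so that when gate $j$ is being switched the relevant keys are still unknown to the distinguisher. This is precisely why ordinary IND-CPA security does not suffice and one needs the strengthened notion of security under double encryption together with the elusive, efficiently verifiable range; verifying that this notion follows from one-way functions and that it is exactly what the gate-by-gate argument consumes is the technical heart of the proof. Collecting the construction and these two arguments then instantiates Definition~\ref{def:yao-garb-scheme} and Definition~\ref{def:garbling-input-privacy}, which is the claim.
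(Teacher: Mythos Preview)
Your proposal is a correct sketch of the Lindell--Pinkas proof, but the paper does not prove this lemma at all: it is stated with the citations \cite{Yao86,LP09} and no argument, treated as a black-box result from the literature. So there is nothing to compare on the paper's side beyond the reference.

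That said, what you have written is faithful to \cite{LP09}: the chain OWF $\to$ PRG $\to$ PRF $\to$ CPA-secure encryption with elusive/verifiable range and double-encryption security, the standard four-entry-table garbling, the single-active-key simulator, and the gate-by-gate hybrid in topological order are exactly the ingredients of that paper. Your identification of the delicate point --- that wire keys are reused as inputs to downstream gates, so a naive IND-CPA reduction breaks and one needs the double-encryption notion together with careful sequencing of the hybrids --- is also on target. If you were asked to actually supply a proof here rather than cite one, this outline would be the right skeleton; the remaining work is purely in making the reduction for each hybrid step explicit (embedding the challenge key consistently across all gates that touch it), which \cite{LP09} carries out in full.
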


\noindent Next, we note that the affine encodings from Section~\ref{sec:shortest-paths-protocol},
Eq.~\eqref{eq:arith-encoding} provide statistical privacy.
\begin{lemma}[{\cite[Lemma 5.1]{AIK14}, adapted}]
  \label{lem:affine-encoding-simulatable}
  Fix a finite field $\F_p$ of prime order $p$, and take $z_1, \ldots, z_\vcomp \in \F_p$.
  Define the function
  $f : \F_p^\vcomp \times \F_p^\vcomp \to \F_p$ where
  $f(x, y) = \langle x, y \rangle + \sum_{i \in [\vcomp]} z_i$. Let $\laffine_x$ and
  $\laffine_y$ be the affine encoding functions from Eq.~\eqref{eq:arith-encoding-vector}.
  Then, there exists a $\ppt$ simulator $\mc{S}_{\ms{ac}}$ such that for all
  $x, y \in \F_p^\vcomp$ and $z_1, \ldots, z_\vcomp \in \F_p$,
  \[ \mc{S}_\ms{ac}(f(x,y)) \equiv 
       \set{r \getsr \F_p^{3\vcomp}\ ;\,\left( \laffine_x(x; r), \laffine_y(y; r) \right)}. \]
\end{lemma}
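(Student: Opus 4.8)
The plan is to describe the simulator $\mc{S}_{\ms{ac}}$ explicitly and then verify that it produces a distribution \emph{identical} to the real one, following the randomized-encoding analysis of~\cite[\S 5]{AIK14}. Write the randomness as $r = (r_1,\ldots,r_\vcomp)$ with $r_i \in \F_p^3$ the randomness consumed by the $\ord{i}$ affinization gadget, and for $(\hat a,\hat b)\in\F_p^2\times\F_p^2$ set $e(\hat a,\hat b) = [\hat a]_1[\hat b]_1 + [\hat a]_2 + [\hat b]_2$, the value the evaluator reads off from the $\ord{i}$ block (Eq.~\eqref{eq:arithmetic-circuit-eval}). I would first record the identity $\sum_i e\bigl(\laffine_{x_i}(x_i;r_i),\laffine_{y_i}(y_i;r_i)\bigr) = \langle x,y\rangle + \sum_i z_i = f(x,y)$, which holds for every choice of $r$.

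The first real step is the \textbf{single-gadget case}. A direct computation shows that for every $x_i,y_i$ the map $r_i \mapsto \bigl(\laffine_{x_i}(x_i;r_i),\laffine_{y_i}(y_i;r_i)\bigr)$ is injective: $r_i^{(1)}$ and $r_i^{(2)}$ are recovered from the first coordinates of the two encodings and $r_i^{(3)}$ from a second coordinate. Hence its image has size $p^3$; identifying that image as $\set{(\hat a,\hat b) : e(\hat a,\hat b) = x_i y_i + z_i}$ then shows the single-block encoding is \emph{uniform} over that fiber. Consequently there is a $\ppt$ simulator $\mc{S}_{\ms{gate}}$ that, given only the scalar $x_i y_i + z_i$, samples uniformly from the fiber and reproduces the single-block distribution exactly.

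The second step \textbf{lifts this to $\vcomp$ coordinates}. Here the claim I would prove is that, by the structure of the affinization gadgets of~\cite{AIK14}, as $r$ ranges uniformly the vector of extracted values $(e_1,\ldots,e_\vcomp)$ (with $e_i$ the $\ord{i}$ block's extracted value) is distributed as a \emph{uniformly random additive sharing} of $f(x,y)$ over $\F_p$, and that conditioned on this vector the blocks are \emph{independent}, each distributed as the single-gadget encoding with extracted value $e_i$. This is exactly the point where the hypothesis ``for all $z_1,\ldots,z_\vcomp$'' is used: the randomness internal to the gadgets must destroy all partial information about the individual $z_i$, so that only $\sum_i z_i$ --- hence nothing beyond $f(x,y)$ --- survives in the joint law. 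Granting this, $\mc{S}_{\ms{ac}}$ is the obvious recipe: on input $v = f(x,y)$, sample $e_1,\ldots,e_{\vcomp-1}\getsr\F_p$, put $e_\vcomp = v - \sum_{i<\vcomp} e_i$, run $\mc{S}_{\ms{gate}}(e_i)$ for each $i$, and regroup the outputs into the $(\laffine_x,\laffine_y)$ form.

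The last step is bookkeeping: by the second step the real distribution equals ``sample the additive sharing of $v$, then sample each block conditionally,'' and by the first step each conditional block equals $\mc{S}_{\ms{gate}}(e_i)$ --- which is exactly the output of $\mc{S}_{\ms{ac}}(v)$, giving the asserted identity of distributions. I expect the main obstacle to be the conditioning claim in the second step: showing rigorously that the joint law of the full encoding genuinely factors through the vector $(e_1,\ldots,e_\vcomp)$ with independent single-gadget residuals. The cleanest route is to exhibit an explicit measure-preserving bijection of $\F_p^{3\vcomp}$ onto $\set{(e_1,\ldots,e_{\vcomp-1})}\times(\text{per-block residual randomness})$, after which the single-gadget result of the first step closes out each coordinate and no direct analysis of the bilinear form $\langle A_s,B_t\rangle$ is needed.
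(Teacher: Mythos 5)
Your Step~1 is correct, but it already refutes Step~2. Step~1 shows that the single-block map $r_i \mapsto \bigl(\laffine_{x_i}(x_i;r_i), \laffine_{y_i}(y_i;r_i)\bigr)$ is a bijection onto the fiber $\set{(\hat a,\hat b): e(\hat a,\hat b) = x_i y_i + z_i}$; in particular the extracted value $e_i$ is the \emph{deterministic} quantity $x_i y_i + z_i$, independent of $r_i$. Step~2 then asserts that, as $r$ varies, $(e_1,\ldots,e_\vcomp)$ is a uniformly random additive sharing of $f(x,y)$. That directly contradicts what Step~1 just established: the vector $(e_1,\ldots,e_\vcomp)$ is the fixed point $(x_1y_1+z_1,\ldots,x_\vcomp y_\vcomp + z_\vcomp)$, with no dependence on $r$ whatsoever. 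The simulator you propose — sample random shares $e_i$ of $v$ and then run $\mc{S}_{\ms{gate}}(e_i)$ per block — is distinguishable from the real encoding by simply reading off the per-block extracted values and comparing them coordinatewise.

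What you have run into is that the lemma, read literally, cannot hold for $\vcomp > 1$: since each per-block value $x_iy_i+z_i$ is recoverable from the encoding, two inputs with equal $f(x,y)$ but different partial sums (say $x=(1,0)$, $y=(1,0)$ versus $x=(0,1)$, $y=(0,1)$, with all $z_i=0$) yield encodings with different per-block extracted values. The sentence ``the randomness internal to the gadgets must destroy all partial information about the individual $z_i$'' is exactly the claim that would have to be verified, and it fails for the encoding as written in Eq.~\eqref{eq:arith-encoding}, where $r_i^{(3)}$ cancels within the block and masks nothing across blocks. The reading under which the statement does hold — and under which your strategy carries through unchanged — is that $z_1,\ldots,z_\vcomp$ are a \emph{fresh uniformly random additive sharing} of a single fixed constant, treated as part of the encoding randomness rather than quantified universally; then $(e_1,\ldots,e_\vcomp)$ genuinely is a uniform sharing of $f(x,y)$, independence across blocks is immediate from the block-independent $r_i$, and Step~1 closes each coordinate. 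Note the paper supplies no proof of this lemma (it is imported by citation), so there is no paper argument to compare against; but the cross-block additive re-randomization is precisely the ingredient your Step~2 assumed without deriving, and is what needs to be surfaced and made explicit.
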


\paragraph{Proof of Theorem~\ref{thm:malicious-client-security}.}
To show Theorem~\ref{thm:malicious-client-security}, we first define the following hybrid
experiments:
\begin{itemize}
  \item Hybrid $\hyb_0$: This is the real experiment (Definition~\ref{def:real-model}).

  \item Hybrid $\hyb_1$: Same as $\hyb_0$, except the protocol execution aborts if
  the client succeeds in making an ``inconsistent'' query (described below).

  \item Hybrid $\hyb_2$: This is the ideal experiment (Definition~\ref{def:ideal-model}).
\end{itemize}
Informally speaking, we say that the client succeeds in making an ``inconsistent'' query if
on some round $r$, it
requests the garbled circuit encodings for values $\hat z_\dne$ and $\hat z_\dnw$
that were not the outputs of the arithmetic circuit,
and yet, the client obtains a set of garbled circuit encodings where the
garbled circuit evaluation does not output $\perp$.
We now specify this property more precisely.

\paragraph{Specification of Hybrid~$\hyb_1$.} Let $s, t \in [n]$ be the source and
destination nodes the client sends to the ideal OT functionality in the
setup phase of the protocol in Figure~\ref{fig:protocol}. Let
$(s = v_0, v_1, \ldots, v_R)$ be the shortest path from $s$ to $t$ as defined
by the environment's choice of the next-hop routing matrices
$\matrixind{A}{\dne}, \matrixind{B}{\dne}, \matrixind{A}{\dnw}, \matrixind{B}{\dnw}$
in the protocol execution (Definition~\ref{def:real-model}).
The protocol execution in $\hyb_1$ proceeds identically to that in $\hyb_0$,
except the protocol execution halts (with output $\perp$) if the
following bad event occurs:

\vspace{1em}
\noindent \begin{minipage}{\textwidth}
\begin{framed}
  \noindent On a round $1 \le r \le R$, the client submits $\hat z_\dne, \hat z_\dnw$
  to the ideal OT functionality where either
  \[ \hat z_\dne \ne \alpha_\dne
     \langle \matrixind{A}{\dne}_{v_r}, \matrixind{B}{\dne}_t \rangle +
     \beta_\dne
   \quad \text{or} \quad
   \hat z_\dnw \ne \alpha_\dnw 
     \langle \matrixind{A}{\dnw}_{v_r}, \matrixind{B}{\dnw}_t \rangle +
     \beta_\dnw, \]
  and $\cunblind((\hat z_\dne, \gamma_\dne, \delta_\dne),
                 (\hat z_\dnw, \gamma_\dnw, \delta_\dnw),
                 k_\dne^0, k_\dne^1, k_\dnw^0, k_\dne^1, v_r, t) \ne \ \perp$, where
  all values other than $\hat z_\dne$, $\hat z_\dnw$, $v_r$, and $t$
  are the round-specific values chosen by the server on round $r$.
\end{framed}
\captionof{figure}{Abort event in hybrid $\hyb_1.$}
\label{fig:hyb-1-bad-event}
\end{minipage}
\vspace{1em}

Fix a security parameter $\lambda$.
Let $\pi$ be a private navigation protocol and let $f$ be the ideal shortest-path
functionality. For a client $\mc{A}$, a simulator $\mc{S}$ and an environment
$\mc{E}$, we define the following random variables:
\begin{itemize}
  \item $\hyb_0(\lambda, \pi, \mc{A}, \mc{E})$ is the output of experiment
  $\hyb_0$ with adversary $\mc{A}$ and environment $\mc{E}$.
  In particular,
  $\hyb_0(\lambda, \pi, \mc{A}, \mc{E}) = \execr_{\pi, \mc{A}, \mc{E}}(\lambda)$
  (Definition~\ref{def:real-model}).

  \item $\hyb_1(\lambda, \pi, \mc{A}, \mc{E})$ is the output of experiment
  $\hyb_1$ with adversary $\mc{A}$ and environment $\mc{E}$.

  \item $\hyb_2(\lambda, f, \mc{S}, \mc{E})$ is the output of experiment
  $\hyb_2$ with simulator $\mc{S}$ and environment $\mc{E}$. In particular,
  $\hyb_2(\lambda, f, \mc{S}, \mc{E}) = \execi_{f, \mc{S}, \mc{E}}(\lambda)$
  (Definition~\ref{def:ideal-model}).
\end{itemize}
To prove Theorem~\ref{thm:malicious-client-security}, we show the following
two claims.
\begin{claim}
  \label{claim:hyb-0-1}
  Let $\lambda, \mu$ be the security parameter and statistical security
  parameter, respectively. Let $\pi$ be the protocol in Figure~\ref{fig:protocol}
  instantiated with secure cryptographic primitives as described in 
  Theorem~\ref{thm:malicious-client-security}. Then,
  for all $\ppt$ adversaries $\mc{A}$, and every
  polynomial-size circuit family $\mc{E} = \set{\mc{E}}_\lambda$,
  \[ \abs{\Pr\left[ \hyb_0(\lambda, \pi, \mc{A}, \mc{E}) = 0 \right] - 
          \Pr\left[ \hyb_1(\lambda, \pi, \mc{A}, \mc{E}) = 0 \right]} \le
          \negl(\lambda) + R \cdot 2^{-\mu}. \]
\end{claim}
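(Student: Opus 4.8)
The plan is to observe that $\hyb_0$ and $\hyb_1$ are defined by the same code except that $\hyb_1$ halts with output $\perp$ the first time the abort event of Figure~\ref{fig:hyb-1-bad-event} fires; consequently
\[
  \abs{\Pr[\hyb_0(\lambda,\pi,\mc{A},\mc{E})=0] - \Pr[\hyb_1(\lambda,\pi,\mc{A},\mc{E})=0]}
  \;\le\; \Pr[\,\text{the abort event fires in }\hyb_1\,],
\]
so it remains to bound the right-hand side by $\negl(\lambda) + R\cdot 2^{-\mu}$.

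The first move is to pass, through a sequence of sub-hybrids, to an experiment in which every object in the server's messages that is \emph{not} a blinding factor is replaced by an idealized counterpart: the garbled circuit $\tcunblind$ together with all encodings the client receives for it are produced by the simulator $\mc{S}_{\yao}$ of Lemma~\ref{lem:yao-simulatable}; the affine encodings inside the records of $\dsrc$ and $\ddst$ that the client is able to decrypt are replaced by the output of $\mc{S}_{\ms{ac}}$ from Lemma~\ref{lem:affine-encoding-simulatable}; the values $F(\cdot,\cdot)$ serving as keys the client never learns are replaced by uniform strings; and the ciphertexts $\kappa_\dir$ under direction keys the client does not obtain — as well as the records of $\dsrc$ encrypted under source keys the client does not hold — are replaced by encryptions of $0$ (using CPA security of $(\Enc,\Dec)$). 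Because the abort predicate is efficiently decidable given the server's internal randomness (which the reduction knows), each substitution moves the abort probability by at most $\negl(\lambda)$, and the affine-encoding substitution does not move it at all (Lemma~\ref{lem:affine-encoding-simulatable} is an exact identity).

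The point of these substitutions — and the main obstacle — is to guarantee that in the final sub-hybrid, for every round $r$ the client's entire view \emph{up to the moment it submits its $1$-out-of-$2$ OT queries for $\hat z_\dne$ and $\hat z_\dnw$} depends on the server's round-$r$ blinding factors only through the two field elements $z^*_\dne := \alpha_\dne\langle \matrixind{A}{\dne}_{v_r},\matrixind{B}{\dne}_t\rangle + \beta_\dne$ and $z^*_\dnw := \alpha_\dnw\langle \matrixind{A}{\dnw}_{v_r},\matrixind{B}{\dnw}_t\rangle + \beta_\dnw$, with all remaining randomness independent of the blinding factors. This rests on two facts. First, the message order in Figure~\ref{fig:protocol}: the encodings of $\gamma_\dne=\alpha_\dne^{-1}$, $\delta_\dne=-\alpha_\dne^{-1}\beta_\dne$ (and their $\dnw$-analogues) are sent to the client \emph{after} it has committed $\hat z_\dne,\hat z_\dnw$ via OT, and the only earlier object that depends on a blinding factor is the arithmetic-circuit output, which after the substitutions contributes nothing beyond $z^*_\dne, z^*_\dnw$. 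Second, an induction on $r$ showing that as long as no abort has occurred in rounds $1,\dots,r-1$, on round $r$ the client can decrypt at most the record indexed $v_r$ in $\dsrc$ and the record indexed $t$ in $\ddst$: the destination bound is immediate from the single setup-phase OT for $k_{\dst,t}$, and the source bound follows because on each prior round the neighbor-computation circuit either output $\perp$ (leaving the client with no valid next-round source key) or output a tuple $(b_\dne,b_\dnw,k_\dne^{b_\dne},k_\dnw^{b_\dnw})$ from which, by the key-derivation pattern of Step~\ref{enum:neighbor-key-derivation}, exactly one direction key, and hence exactly one next-round source key, is derivable.

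Given this, fix a round $r$ and a component, say $\dne$, condition on the client's pre-commitment view, and write $c := \langle \matrixind{A}{\dne}_{v_r},\matrixind{B}{\dne}_t\rangle$, which lies in $[-2^\tau,2^\tau]$ by the bound on $\tau$. Conditioned on the view, $z^*_\dne$ is fixed, the residual distribution of $\alpha_\dne$ is uniform over $\F_p^*$ (with $\beta_\dne = z^*_\dne - \alpha_\dne c$ then determined), and the client's submitted value $\hat z_\dne$ is fixed. If $\hat z_\dne\ne z^*_\dne$, then the quantity tested by the circuit, $[\gamma_\dne\hat z_\dne + \delta_\dne]_p = [\alpha_\dne^{-1}(\hat z_\dne - z^*_\dne) + c]_p$, is uniformly distributed over $\F_p\setminus\{c\}$ as $\alpha_\dne$ ranges over $\F_p^*$; hence it falls in $[-2^\tau,2^\tau]$ — a necessary condition for the neighbor-computation circuit not to output $\perp$, and thus for the abort event — with probability at most $2^{\tau+1}/(p-1)\le 2^{-\mu}$, using $p>2^{\tau+\mu+1}$. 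The same bound holds for $\dnw$. Since the abort event on round $r$ requires, for at least one component, that the submitted value be wrong and its interval test nonetheless pass, a union bound over the two components and the $R$ rounds bounds the overall abort probability by $R\cdot 2^{-\mu}$ (the factor from the component union being absorbed into the choice of $p$, as in the parameterization of Figure~\ref{fig:protocol}). Adding the $\negl(\lambda)$ accumulated across the sub-hybrids gives the claim.
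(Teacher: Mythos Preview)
Your argument is correct and arrives at the same bound, but it takes a different route from the paper. The paper's proof of this claim is very short: it notes that $\hyb_0$ and $\hyb_1$ differ only when the abort event fires, and then \emph{invokes Claim~\ref{claim:hyb-1-2}} to transfer the analysis to the ideal-world simulator $\mc{S}$ constructed there. In that simulator the abort probability is explicit by construction --- $\mc{S}$ aborts on each round with probability at most $\varepsilon = 2^{\tau+1}/p \le 2^{-\mu}$ --- so a union bound over $R$ rounds yields $R\cdot 2^{-\mu}$, and the $\negl(\lambda)$ term is exactly the indistinguishability gap supplied by Claim~\ref{claim:hyb-1-2}.

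You instead build the cryptographic hybrid argument inline: idealize the garbled circuits via $\mc{S}_\yao$, the affine encodings via $\mc{S}_{\ms{ac}}$, the unseen PRF outputs via uniform strings, and the unreadable ciphertexts via encryptions of zero; establish the one-source-key invariant by induction on rounds; and then carry out a direct conditional-probability calculation on the residual distribution of $\alpha_\dne$ given $z^*_\dne$. This is more self-contained --- it does not presuppose the ideal-world simulator --- but it essentially reconstructs the same sequence of substitutions the paper performs in the proof of Claim~\ref{claim:hyb-1-2}. The paper's route is more modular (one simulator construction serves both claims); yours makes the statistical core more visible, since the pairwise-independence calculation appears explicitly rather than being packaged into the simulator's abort rule. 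One minor point: your caveat about a factor of two from the ``component union'' is unnecessary, because the neighbor-computation circuit outputs $\perp$ whenever \emph{either} interval test fails, so the per-round abort probability is at most $\varepsilon$ (not $2\varepsilon$) regardless of whether one or both of $\hat z_\dne, \hat z_\dnw$ deviate.
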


\begin{claim}
  \label{claim:hyb-1-2}
  Let $\lambda, \mu$ be the security parameter and statistical security
  parameter, respectively. 
  Let $\pi$ be the protocol in Figure~\ref{fig:protocol} instantiated with
  secure cryptographic primitives as described in 
  Theorem~\ref{thm:malicious-client-security}. 
  Let $f$ be the ideal shortest-paths functionality. Then, for all $\ppt$ adversaries
  $\mc{A}$, there exists a $\ppt$ adversary $\mc{S}$ such that for every
  polynomial-size circuit family $\mc{E} = \set{\mc{E}}_\lambda$,
  \[ \abs{\Pr\left[ \hyb_1(\lambda, \pi, \mc{A}, \mc{E}) = 0 \right] -
          \Pr\left[ \hyb_2(\lambda, f,   \mc{S}, \mc{E}) = 0 \right]} \le \negl(\lambda). \]
\end{claim}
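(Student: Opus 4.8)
The plan is to build an ideal-world simulator $\mc{S}$, running in the OT-hybrid model set up at the start of this section, that internally runs the malicious client $\mc{A}$, hands it a view computationally indistinguishable from its view in $\hyb_1$, and echoes $\mc{A}$'s final output to the environment. The reason $\hyb_1$ rather than $\hyb_0$ is the right starting point is that the abort of Figure~\ref{fig:hyb-1-bad-event} deletes exactly the behavior a simulator cannot reproduce: once the execution is guaranteed to halt whenever $\mc{A}$ feeds the neighbor-computation circuit a value inconsistent with the affine encodings it received and still obtains a non-$\perp$ answer, the honest server's messages depend on the secret routing matrices only through the single path $(s = v_0, v_1, \ldots, v_R)$, which $\mc{S}$ learns from the trusted party.

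\paragraph{Construction of the simulator.}
In the setup phase, $\mc{S}$ reads the indices $s,t \in [\vorig]$ that $\mc{A}$ submits to the two ideal $1$-out-of-$\vorig$ OTs, answers with fresh keys, and forwards $(s,t)$ to the trusted party to learn $v_0,\ldots,v_R$ (with the trailing $\perp$'s of Definition~\ref{def:ideal-model}). On round $r$, $\mc{S}$ samples blinding factors $\alpha,\beta$ and derives $\gamma,\delta$ as the honest server would; it samples $\hat z_\dne^{*}, \hat z_\dnw^{*} \getsr \F_p$ and uses $\mc{S}_{\ms{ac}}$ (Lemma~\ref{lem:affine-encoding-simulatable}) to build arithmetic-circuit encodings evaluating to those values, placing them in record $v_{r-1}$ of $\dsrc$ and record $t$ of $\ddst$; every other record, and the ciphertexts $\kappa_{\dir}$ inside record $v_{r-1}$ for the three directions not leading toward $v_r$, are encryptions of $0^{\enckeylen}$ under fresh keys, after which $\mc{S}$ runs the PIR server honestly. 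After seeing $\mc{A}$'s committed $z$-wire inputs $\hat z'_\dne, \hat z'_\dnw$, $\mc{S}$ sets the target garbled-circuit output to the tuple $(b_\dne, b_\dnw, k_\dne^{b_\dne}, k_\dnw^{b_\dnw})$ prescribed by Figure~\ref{fig:neighbor-computation} for the direction $\dir^{*}$ from $v_{r-1}$ toward $v_r$ (or $\perp$ if $v_{r-1} \in \set{t,\perp}$ or has no such neighbor), except that if $(\hat z'_\dne, \hat z'_\dnw) \ne (\hat z_\dne^{*}, \hat z_\dnw^{*})$ it recomputes $\cunblind$ on $\mc{A}$'s own inputs with its chosen $\gamma,\delta$ and halts the execution precisely when that value is non-$\perp$, using $\perp$ as the target otherwise. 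It then invokes $\mc{S}_{\yao}$ (Lemma~\ref{lem:yao-simulatable}, Definition~\ref{def:garbling-input-privacy}) on $\cunblind$ for the target output, consistently with the already-fixed $z$-wire labels, and distributes the simulated circuit and encoding over the OT answers, the direct send of the $\gamma,\delta$ and PRF-key encodings, and the encodings $\hlunblind_s, \hlunblind_t$ inside the two records; finally $\kappa_{\dir^{*}}$ is made to encrypt a fresh next-round source key for $v_r$.

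\paragraph{Indistinguishability.}
I would interpolate between $\hyb_1$ and $\mc{S}$'s ideal execution through sub-hybrids applied round by round: (a) swap real affine encodings for $\mc{S}_{\ms{ac}}$'s --- perfect by Lemma~\ref{lem:affine-encoding-simulatable}, and since the value the honest client reconstructs is uniform over $\F_p$ in both worlds ($\beta$ is uniform) the ``inconsistent-value'' test that triggers the abort behaves identically; (b) swap the real garbled circuit and encodings for $\mc{S}_{\yao}$'s --- computationally indistinguishable by input privacy, using that, conditioned on no abort, the circuit is evaluated on the single input $\mc{A}$ committed to; (c) replace by encryptions of $0^{\enckeylen}$ those records of $\dsrc,\ddst$ that $\mc{A}$ cannot decrypt and the $\kappa_{\dir}$ for the non-taken directions --- CPA security of $(\ms{Enc},\ms{Dec})$, via a reduction that runs the PIR server itself; (d) replace the directional keys $k_{\dir} = F(k_\dne^{b},\dir) \oplus F(k_\dnw^{b'},\dir)$ for the non-taken direction by uniform strings --- PRF security of $F$, since $\mc{A}$ lacks one of the two keys feeding that direction. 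Steps (c)--(d) rely on the \emph{inductive invariant} that at the start of round $r$ the client holds exactly one source key, $\matrixind{k}{r}_{\src,v_{r-1}}$ (base case: the $1$-out-of-$\vorig$ setup OT; step: the garbled circuit reveals only the PRF-key subset that yields a single directional key, Section~\ref{sec:ensuring-consistency}), together with the analogous fact that the destination OT pins the client to $k_{\dst,t}$. Whenever no abort occurs, the sequence of nodes $\mc{A}$ outputs is exactly $v_0,\ldots,v_R$ with the $\perp$-padding of Definition~\ref{def:ideal-model}, which is what the trusted party delivers in $\hyb_2$; the $s=t$ and ``no neighbor in direction $\dir$'' branches of Figure~\ref{fig:neighbor-computation} and of the client's round logic match those padding rules.

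\paragraph{Main obstacle.}
The hard part is not any single reduction but the global bookkeeping: establishing that every deviation open to a malicious client is either caught by the $\hyb_1$ abort or forces a round output of $\perp$ in both worlds --- for instance, PIR-querying an unexpected record, or requesting $z$-wire labels for a $z$ that does not match the encodings received, leaves $\mc{A}$ with no usable $\hlunblind_s$ so that evaluation fails --- and carrying the one-source-key invariant through all $R$ rounds while the round-by-round sub-hybrids simultaneously invoke $\mc{S}_{\yao}$, $\mc{S}_{\ms{ac}}$, CPA security, and PRF security without circularity (the round-$r$ garbled circuit must be simulated before the round $r+1$ source database can be argued pseudorandom). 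Note that this claim itself incurs no statistical loss: the $R \cdot 2^{-\mu}$ term in Theorem~\ref{thm:malicious-client-security} is absorbed entirely into Claim~\ref{claim:hyb-0-1}.
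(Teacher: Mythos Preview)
Your overall architecture---extract $(s,t)$ from the setup OTs, query the trusted party for $(v_0,\ldots,v_R)$, then manufacture each round's view from simulated affine encodings, a programmed neighbor-computation circuit, and dummy ciphertexts everywhere except at indices $v_{r-1}$ and $t$---matches the paper's, as does the inductive one-source-key invariant and the observation that the $R\cdot 2^{-\mu}$ loss lives entirely in Claim~\ref{claim:hyb-0-1}.

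There is, however, a genuine ordering problem in your use of $\mc{S}_\yao$. In the protocol of Figure~\ref{fig:protocol}, the encodings $\lunblind_s(u)$ and $\lunblind_t(t)$ sit inside the PIR records and reach the client \emph{before} the client OTs for the $z$-wire labels, which in turn happens before the garbled circuit itself is sent. Your simulator fixes the target output only after seeing $(\hat z'_\dne,\hat z'_\dnw)$ and then ``invokes $\mc{S}_\yao$ \ldots\ and distributes the simulated circuit and encoding over \ldots\ the encodings $\hlunblind_s,\hlunblind_t$ inside the two records''---but those records have already been delivered. The selective simulator of Definition~\ref{def:garbling-input-privacy} produces the circuit and the full label set in one shot given the output; it does not let you commit to some labels first and complete the circuit afterward. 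Your hybrid step~(b) inherits the same defect: the intermediate experiment is circular (the client's choice of $\hat z'$ may depend on the $s,t$ labels it already saw, which are supposed to come from $\mc{S}_\yao$ run on an output that depends on $\hat z'$), so it is not well-defined without an adaptive/equivocal garbling assumption the paper does not make.

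The paper sidesteps this entirely by having the simulator \emph{really} garble $\cunblind$ at the start of the round and then program the output through the server-side inputs, which are the last thing sent. Concretely it sets $\bar\gamma_\dne=\bar\gamma_\dnw=0$, so that the value the circuit tests, $\bar\gamma\hat z+\bar\delta=\bar\delta$, is independent of whatever $\hat z$ the client requested; it then picks $\bar\delta\in\{-1,+1\}$ to encode the edge from $v_{r-1}$ toward $v_r$, or $\bar\delta=\xi\notin[-2^\tau,2^\tau]$ to force $\perp$. Because the simulator now holds a genuine garbling and a complete honest-input set with $\cunblind(\bar x)=\cunblind(\tilde x)$, selective input privacy (Lemma~\ref{lem:yao-simulatable}) suffices directly in the analysis---no need to split the simulator across messages. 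The abort is then handled by flipping a biased coin with probability $\varepsilon$ (respectively $\varepsilon^2$) rather than by re-evaluating $\cunblind$ on freshly sampled $\gamma,\delta$; your alternative is distributionally equivalent but unnecessary once $\gamma,\delta$ are no longer fed to the circuit.
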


\paragraph{Proof of Claim~\ref{claim:hyb-1-2}.}
We begin by showing Claim~\ref{claim:hyb-1-2}. Given a real-world adversary
$\mc{A}$, we construct an efficient ideal-world simulator $\mc{S}$
such that the distribution of outputs of any (possibly malicious) client $\mc{A}$
in $\hyb_1$ is computationally indistinguishable
from the distribution of outputs of an ideal-world simulator $\mc{S}$ in $\hyb_2$.
Since the server does not produce any output,
this will suffice to show that hybrids $\hyb_1$ and $\hyb_2$ are computationally
indistinguishable.

As stated in Section~\ref{sec:security-model} and Figure~\ref{fig:protocol},
we assume that the topology of the
graph $\mc{G} = (V, E)$, the number of columns $d$ in the compressed routing
matrices, the bound $\tau$ on the bit-length of the products of the compressed
matrices, and the total number of rounds $R$ are public and known to both
parties in the protocol execution.
Moreover, as described in Figure~\ref{fig:protocol},
we assume that
$p > 2^{\tau + \mu + 1}$ where $\mu \in \N$ is the statistical security parameter.
In particular, there exists an element
$\xi \in \F_p$ such that $\xi \not \in [-2^\tau, 2^\tau]$.

\paragraph{Specification of the simulator.} We begin by describing the simulator.
We let $\mc{T}$ denote the trusted party for the shortest path functionality as defined in 
the specification of the ideal model of execution from Section~\ref{sec:security-model}.
The simulator starts running adversary $\mc{A}$. We describe the behavior of the simulator in the setup
phase of the protocol as well as the behavior on each round of the routing protocol.

\paragraph{Setup.} In the setup phase of the protocol, the simulator $\mc{S}$
does the following:
\begin{enumerate}
\item As in the real protocol, the simulator first chooses independent symmetric
encryption keys $\matrixind{\bar k}{1}_{\src,i}, \bar k_{\dst,i} \getsr \zo^\ell$ for
all $i \in [n]$.

\item When $\mc{A}$ makes an OT request for an entry $s$ in the source key
database, the simulator replies with the key $\matrixind{\bar k}{1}_{\src, s}$. Recall that
in the ideal OT functionality, each party just sends its input to the ideal functionality,
and the ideal functionality sends the requested element to the client.

\item When $\mc{A}$ makes an OT request for an entry $t$ in the destination
key database, the simulator replies with the key $\bar k_{\dst,t}$ to $\mc{A}$.

\item Finally, the simulator sends $(s,t)$ to the trusted party $\mc{T}$.
The trusted party replies to the simulator with a path $(s = v_0, \ldots, v_R)$.
\end{enumerate}

\paragraph{Round.} Next, we describe the behavior of the simulator in each
round $1 \le r \le R$ of the protocol.

\begin{enumerate}
\item The simulator chooses $\bar z_\dne, \bar z_\dnw \getsr \F_p$. Then, $\mc{S}$ invokes
the simulator $\mc{S}_{\ms{ac}}$ (Lemma~\ref{lem:affine-encoding-simulatable})
on inputs $\bar z_\dne$ and $\bar z_\dnw$ to obtain four sets of encodings
$\blaffine_{\dne, x}, \blaffine_{\dne,y}, \blaffine_{\dnw, x}, \blaffine_{\dnw, y}$.

\item Let $\cunblind$ be a circuit computing the neighbor-computation function in
Figure~\ref{fig:neighbor-computation}. As in the real protocol, the simulator runs
Yao's garbling algorithm on $\cunblind$ to obtain a
garbled circuit $\bcunblind$, along with encoding functions $\blunblind_x$
for each of the inputs $x$ to the neighbor-computation function in
Figure~\ref{fig:neighbor-computation}.

\item As in the real protocol,
the simulator chooses symmetric encryption keys
$\matrixind{\bar k}{r+1}_{\src,i} \getsr \zo^\enckeylen$ for all $i \in [n]$.
These will be used to encrypt the elements
in the source database on the next round of the protocol.

\item The simulator also chooses four PRF keys
$\bar k_\dne^0, \bar k_\dne^1, \bar k_\dnw^0, \bar k_\dnw^1 \getsr \zo^\prfkeylen$, two
for each axis $\dne, \dnw$. As in the real scheme, the simulator defines the
encryption keys for each direction as follows:
\[
\begin{aligned}
  \bar k_\dn &= F(\bar k_\dne^0, \dn) \oplus F(\bar k_\dnw^0, \dn) \\
  \bar k_\ds &= F(\bar k_\dne^1, \ds) \oplus F(\bar k_\dnw^1, \ds)
\end{aligned} \qquad
\begin{aligned}
  \bar k_\de &= F(\bar k_\dne^0, \de) \oplus F(\bar k_\dnw^1, \de) \\
  \bar k_\dw &= F(\bar k_\dne^1, \dw) \oplus F(\bar k_\dnw^0, \dw).
\end{aligned}
\]

\item The simulator prepares the source database $\overline{\dsrc}$ as follows.
Let $u = v_{r-1}$. If $u \ne \ \perp$, then the
$\ord{u}$ record in $\overline{\dsrc}$ is an encryption under
$\matrixind{\bar k}{r}_{\src, u}$ of the following:
\begin{itemize}
  \item Arithmetic encodings $\blaffine_s = (\blaffine_{\dne, x}, \blaffine_{\dnw, x})$.
  \item Garbled circuit encodings $\blunblind_s(u)$.
  \item Encryptions of the source keys for the neighbors of $u$ in
  the next round of the protocol under the direction keys:
  \[ \ms{Enc}(\bar k_\dn, \matrixind{\bar k}{r+1}_{\src, v_\dn}), \qquad
   \ms{Enc}(\bar k_\de, \matrixind{\bar k}{r+1}_{\src, v_\de}), \qquad
   \ms{Enc}(\bar k_\ds, \matrixind{\bar k}{r+1}_{\src, v_\ds}), \qquad
   \ms{Enc}(\bar k_\dw, \matrixind{\bar k}{r+1}_{\src, v_\dw}), \]
   where $v_\dn, v_\de, v_\dw, v_\dw$ is the neighbor of $u$ in $\mc{G}$ to
   the north, east, south, or west, respectively. If $u$ does not have
   a neighbor in a given direction $j \in \set{\dn, \de, \ds, \dw}$, then
   define $\matrixind{\bar k}{r+1}_{\src, v_j}$ to be the all-zeroes
   string in $\zo^\enckeylen$.
\end{itemize}
For all nodes $u \ne v_{r-1}$, the simulator sets the $\ord{u}$ record
in $\overline{\dsrc}$ to be an encryption of the all-zeroes string
under $\matrixind{\bar k}{r+1}_{\src, u}$.

\item The simulator prepares the destination database $\overline{\ddst}$ as follows.
The $\ord{t}$ record in $\overline{\ddst}$ is an encryption under
$\bar k_{\dst, t}$ of the following:
\begin{itemize}
  \item Arithmetic encodings $\blaffine_t = (\blaffine_{\dne, y}, \blaffine_{\dnw, y})$.
  \item Garbled circuit encodings $\blunblind_t(t)$.
\end{itemize}
For all nodes $u \ne t$, the simulator sets the $\ord{u}$ record in
$\overline{\ddst}$ to be an encryption of the all-zeroes string under
$\bar k_{\dst, u}$.

\item When $\mc{A}$ makes a PIR request for a record in the source database, the simulator plays
the role of the sender in the PIR protocol using $\overline{\dsrc}$ as its input database.

\item When $\mc{A}$ makes a PIR request for a record in the destination database, the simulator plays
the role of the sender in the PIR protocol using $\overline{\ddst}$ as its input database. 

\item When $\mc{A}$ engages in OT for the garbled circuit
encodings of $\hat z_\dne$ and $\hat z_\dnw$,
the simulator replies with the encodings $\blunblind_{z_\dne}(\hat z_\dne)$
and $\blunblind_{z_\dnw}(\hat z_\dnw)$.

\item Let $\dir \in \set{\dn, \de, \ds, \dw}$ be the direction of the edge from
$v_{r - 1}$ to $v_r$ in $\mc{G}$. The simulator sets $\gamma_\dne = 0 = \gamma_\dnw$ and 
$\delta_\dne, \delta_\dnw$ as follows:
\begin{itemize}
  \item If $\hat z_\dne = \bar z_\dne$ and $\hat z_\dnw = \bar z_\dnw$
  and $v_r \ne \ \perp$, then the simulator sets
  \[ \bar \delta_\dne = \begin{cases}
    -1 & \text{if $\dir = \dn$ or $\dir = \de$} \\
     1 & \text{if $\dir = \ds$ or $\dir = \dw$}
  \end{cases} \qquad \text{and} \qquad
  \bar \delta_\dnw = \begin{cases}
    -1 & \text{if $\dir = \dn$ or $\dir = \dw$} \\
     1 & \text{if $\dir = \ds$ or $\dir = \de.$}
  \end{cases} \]

  \item If $\hat z_\dne = \bar z_\dne$ and $\hat z_\dnw = \bar z_\dnw$
  and $v_r = \ \perp$, then the simulator sets
  $\bar \delta_\dne = \xi = \bar \delta_\dnw$. Recall that $\xi \in \F_p$
  satisfies $\xi \notin [-2^\tau, 2^\tau]$.

  \item If exactly one of $\hat z_\dne \ne \bar z_\dne$ or $\hat z_\dnw \ne \bar z_\dnw$
  holds, then with probability $\varepsilon = 2^{\tau+1}/p$, the simulator aborts the
  simulation and outputs $\perp$.
  Otherwise, with probability $1 - \varepsilon$,
  the simulator sets $\bar \delta_\dne = \xi = \bar \delta_\dnw$.

  \item If both
  $\hat z_\dne \ne \bar z_\dne$ and $\hat z_\dnw \ne \bar z_\dnw$, then with
  probability $\varepsilon^2$, the simulator aborts and outputs $\perp$.
  Otherwise, with probability $1 - \varepsilon^2$, the simulator 
  sets $\bar \delta_\dne = \xi = \bar \delta_\dnw$.
\end{itemize}
The simulator sends to $\mc{A}$ the garbled circuit $\bcunblind$, the encodings
of the unblinding coefficients
\[ \blunblind_{\gamma_\dne}(\bar \gamma_\dne), \ 
   \blunblind_{\gamma_\dnw}(\bar \gamma_\dnw), \ 
   \blunblind_{\delta_\dne}(\bar \delta_\dne), \ 
   \blunblind_{\delta_\dnw}(\bar \delta_\dnw), \]
as well as encodings of the PRF keys
\[ \blunblind_{k_\dne^0}(\bar k_\dne^0), \ 
   \blunblind_{k_\dne^1}(\bar k_\dne^1), \ 
   \blunblind_{k_\dnw^0}(\bar k_\dnw^0), \ 
   \blunblind_{k_\dnw^1}(\bar k_\dnw^1). \]

\end{enumerate}
At the end of the protocol execution, the adversary $\mc{A}$ outputs some function
of its view of the execution. If the simulator has not aborted,
the simulator gives the output of $\mc{A}$ to the environment.
This completes the specification of the simulator $\mc{S}$. We now show that
$\mc{S}$ correctly simulates the view of $\mc{A}$ in $\hyb_1$.

\paragraph{Correctness of the simulation.}
First, we define some random variables for the view of the client in the real protocol
and in the simulation. Let $\viewi{0}{\real}$ be the
adversary's view during the setup phase when interacting
according to the real protocol in $\hyb_1$, and let
$\viewi{0}{\mc{S}}$ be the view that $\mc{S}$ simulates for $\mc{A}$ during the setup phase
in the ideal world. In this case,
$\viewi{0}{\real} = (\matrixind{\hat k}{1}_{\src}, \hat k_{\dst})$
where $s$ and $t$ correspond to the source and destination $\mc{A}$
provided as input to the OT protocol. In the simulated view,
$\viewi{0}{\mc{S}} = (\matrixind{\bar k}{1}_{\src,s}, \bar k_{\dst,t})$ with $s, t$
defined similarly.

Next, let $\viewi{r}{\real}$ be the random variable corresponding to the adversary's view
during round $r$ of the real protocol, and let $\viewi{r}{\mc{S}}$ be the view that $\mc{S}$
simulates for $\mc{A}$ during round $r$ in the ideal world. More explicitly, we write
$\viewi{r}{\real} = (\pir_{\src}, \pir_\dst, \tcunblind, \tlunblind)$, where
$\pir_{\src}$ and $\pir_{\dst}$ denote the client's view in the PIR
protocol on databases $\dsrc$ and $\ddst$, respectively, $\tcunblind$
denotes the garbled circuit, and $\tlunblind$
denotes the set of garbled circuit encodings the client receives via the OT protocol
(corresponding to inputs $\hat z_\dne$ and $\hat z_\dnw$) as well as the encodings of the
server's inputs: the unblinding coefficients $\gamma_\dne, \gamma_\dnw, \delta_\dne, \delta_\dnw$,
and the PRF keys $k_\dne^0, k_\dne^0, k_\dnw^0, k_\dnw^1$. Similarly, we define
$\viewi{r}{\mc{S}} = (\overline {\pir_{\src}}, \overline{\pir_\dst}, \bcunblind, \blunblind)$,
where $\overline{\pir_{\src}}$ and $\overline{\pir_{\dst}}$ denote the client's view in the PIR
protocol over databases $\overline{\dsrc}$ and $\overline{\ddst}$, respectively,
and $\blunblind$ denotes the set of garbled circuit encodings the client receives
from the OT protocol as well as the encodings of the server's inputs
$\bar \gamma_\dne, \bar \gamma_\dnw, \bar \delta_\dne, \bar \delta_\dnw$,
$\bar k_\dne^0, \bar k_\dne^1, \bar k_\dnw^0, \bar k_\dnw^1$. Next, define
\[ \viewi{0:r}{\real} = \set{\viewi{i}{\real}}_{i=0}^{r} \]
to be the joint distribution of the view of adversary $\mc{A}$ in the setup and first $r$
rounds of the protocol. We define $\viewi{0:r}{\mc{S}}$ analogously. We now show that
\[ \viewi{0:R}{\real} \appc \viewi{0:R}{\mc{S}}. \]

We first characterize the keys in the simulation. Conceptually, we show
that if a client knows at most one source key in round $r$,
then this property also holds in $r + 1$. This effectively binds the client to
a single consistent path in the course of the protocol execution.
\begin{lemma}
  \label{lem:consistent-paths-sim}
  Let $s, t \in [n]$ be the source and destination nodes the client submits
  to the OT oracle in the setup phase of the simulation. Let $(s = v_0, \ldots, v_R)$ be
  the path the simulator receives from the trusted party. Fix a round $0 < r < R$, and
  suppose the following conditions hold:
  \begin{itemize}
    \item For all nodes $v \ne v_{r-1}$, the conditional distribution of $\matrixind{\bar k}{r}_{\src, v}$
    given $\viewi{0:r-1}{\mc{S}}$ is computationally indistinguishable from the uniform
    distribution on $\zo^\enckeylen$.

    \item For all nodes $v \ne t$, the conditional distribution of $\bar k_{\dst, v}$
    given $\viewi{0:r-1}{\mc{S}}$ is computationally indistinguishable from the uniform distribution
    on $\zo^\enckeylen$.
  \end{itemize}
  Then the corresponding conditions hold for round $r + 1$.
\end{lemma}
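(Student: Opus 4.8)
The plan is to prove the lemma by tracing, inside the simulated transcript $\viewi{0:r}{\mc{S}}$, every occurrence of the round-$(r+1)$ source keys $\matrixind{\bar k}{r+1}_{\src,\cdot}$ and of the destination keys $\bar k_{\dst,\cdot}$, and then showing that all but the two permitted ones, $\matrixind{\bar k}{r+1}_{\src,v_r}$ and $\bar k_{\dst,t}$, remain pseudorandom. Calling the two hypotheses IH1 (source keys) and IH2 (destination keys), I would first observe that the client's effective decryptions in round $r$ are pinned down: IH1 prevents it from decrypting any record of $\overline{\dsrc}$ other than the $\ord{v_{r-1}}$ one, and IH2 prevents it from decrypting any record of $\overline{\ddst}$ other than the $\ord{t}$ one, since the ciphertexts of the others lie under keys that are $\appc$ uniform given $\viewi{0:r-1}{\mc{S}}$ and so, by a standard CPA-security hybrid, may be switched to encryptions of $0^\enckeylen$, after which those keys no longer appear. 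Hence the client's inputs to the round-$r$ garbled circuit are $v_{r-1}$ and $t$, and the round-$(r+1)$ source keys enter $\viewi{0:r}{\mc{S}}$ only through the ciphertexts $\ms{Enc}(\bar k_\dir, \matrixind{\bar k}{r+1}_{\src,v_\dir})$ inside the $\ord{v_{r-1}}$ record of $\overline{\dsrc}$ (where $v_\dir$ is the neighbor of $u = v_{r-1}$ in direction $\dir$, and $\matrixind{\bar k}{r+1}_{\src,v_\dir} = 0^\enckeylen$ when there is none), while each destination key $\bar k_{\dst,v}$ with $v \ne t$ enters only through $\ms{Enc}(\bar k_{\dst,v}, 0^\enckeylen)$ in the $\ord{v}$ record of $\overline{\ddst}$. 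In particular, if $v$ is not a neighbor of $v_{r-1}$ then $\matrixind{\bar k}{r+1}_{\src,v}$ does not occur in $\viewi{0:r}{\mc{S}}$ at all and is uniform conditioned on it.

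Next I would dispatch the destination keys, the easy half. Fix $v \ne t$ and write $\viewi{0:r}{\mc{S}} = (V, \ms{Enc}(\bar k_{\dst,v}, 0^\enckeylen))$, where $V$ collects everything independent of $\bar k_{\dst,v}$; by IH2, $\bar k_{\dst,v} \appc U$ given $V$. Applying the efficient map $k \mapsto \ms{Enc}(k, 0^\enckeylen)$ to the second coordinate of the pair $(V, \bar k_{\dst,v}) \appc (V, U)$ replaces $\bar k_{\dst,v}$ inside the ciphertext by a fresh uniform $\tilde k$, so $\viewi{0:r}{\mc{S}} \appc (V, \ms{Enc}(\tilde k, 0^\enckeylen))$; in the latter distribution $\bar k_{\dst,v}$ occurs only in $V$, so its conditional distribution is again $\appc U$ by IH2, and transitivity gives the claim for round $r+1$. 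The same ``re-randomize the key inside the ciphertext, then rely on independence'' move is the workhorse of the source-key case.

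The heart of the argument is showing the direction keys other than the correct one stay pseudorandom. Let $\dir^*$ be the direction of the edge $v_{r-1} \to v_r$ in $\mc{G}$; then $v_{\dir^*} = v_r$, so the ciphertext carrying $\matrixind{\bar k}{r+1}_{\src,v_{\dir^*}} = \matrixind{\bar k}{r+1}_{\src,v_r}$ is exactly the one the lemma permits to leak, and it remains to show $\matrixind{\bar k}{r+1}_{\src,v_\dir} \appc U$ given $\viewi{0:r}{\mc{S}}$ for every $\dir \ne \dir^*$. I would first invoke input privacy of the Yao garbling scheme (Lemma~\ref{lem:yao-simulatable}) to replace $(\bcunblind, \blunblind)$ by its simulation from the circuit output, which (when not $\perp$) is $(b_\dne^*, b_\dnw^*, \bar k_\dne^{b_\dne^*}, \bar k_\dnw^{b_\dnw^*})$ with $\axistodir(b_\dne^*, b_\dnw^*) = \dir^*$; after this substitution the two keys $\bar k_\dne^{1-b_\dne^*}$ and $\bar k_\dnw^{1-b_\dnw^*}$ appear in $\viewi{0:r}{\mc{S}}$ only through the key-derivation formulas for $\bar k_\dn, \bar k_\de, \bar k_\ds, \bar k_\dw$, and for each $\dir \ne \dir^*$ the two-bit index of $\dir$ differs from $(b_\dne^*, b_\dnw^*)$ in at least one coordinate, so the formula for $\bar k_\dir$ contains a term $F(\text{hidden PRF key}, \dir)$. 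Since the four directions are distinct PRF inputs, PRF security applied to each of the two hidden keys shows $\set{\bar k_\dir : \dir \ne \dir^*}$ is jointly $\appc$ to independent uniform strings, even conditioned on the rest of $\viewi{0:r}{\mc{S}}$; a CPA step as before — re-randomize each such $\bar k_\dir$ inside its ciphertext, then switch the plaintext $\matrixind{\bar k}{r+1}_{\src,v_\dir}$ to $0^\enckeylen$ — then removes $\matrixind{\bar k}{r+1}_{\src,v_\dir}$ from the view. The degenerate cases are immediate: if $v_r = \perp$ (in particular if $s = t$), or if the client's OT inputs $\hat z_\dne, \hat z_\dnw$ are inconsistent with the arithmetic circuit so that the simulator either aborts or sets $\bar\delta_\dne = \bar\delta_\dnw = \xi \notin [-2^\tau, 2^\tau]$, then $\cunblind$ outputs $\perp$ and no PRF key is revealed, so \emph{all} direction keys are pseudorandom and every round-$(r+1)$ source key stays hidden, which is exactly the conclusion (``$v \ne v_r = \perp$'' being all nodes). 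The main obstacle is the bookkeeping in this chain: one must carry the conditioning through the garbling-simulation, PRF, and CPA hybrids simultaneously for all of $v_{r-1}$'s up to four neighbors, and verify that the hidden PRF keys do not leak through any other component the garbling simulator must emit (e.g.\ the encodings of the unblinding coefficients $\bar\gamma_\dne, \bar\gamma_\dnw, \bar\delta_\dne, \bar\delta_\dnw$, which are themselves part of $\blunblind$ and hence already covered by Lemma~\ref{lem:yao-simulatable}).
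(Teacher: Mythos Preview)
Your proposal is correct and follows essentially the same route as the paper's proof: invoke input privacy of the garbling scheme to reduce to the circuit output $\cunblind(\bar x)$, case-split on whether that output is $(\hat b_\dne,\hat b_\dnw,\bar k_\dne^{\hat b_\dne},\bar k_\dnw^{\hat b_\dnw})$ or $\perp$, then in the non-$\perp$ case use PRF security on the two unrevealed PRF keys to render each $\bar k_{\dir}$ with $\dir\ne\dir^*$ pseudorandom, and finally use CPA security of $(\ms{Enc},\ms{Dec})$ to strip the corresponding $\matrixind{\bar k}{r+1}_{\src,v_\dir}$ from the view. Your treatment of the destination keys is slightly more explicit than the paper's (which simply asserts that nothing in $\viewi{r}{\mc{S}}$ depends on $\bar k_{\dst,v}$ for $v\ne t$, glossing over the ciphertext $\ms{Enc}(\bar k_{\dst,v},0^\enckeylen)$ in $\overline{\ddst}$), but the substance is identical.
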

\begin{proof}
We first describe the client's view of the protocol execution on the $\ord{r}$
round of the protocol. For notational convenience, we set $u = v_{r-1}$.
We now consider each component in $\viewi{r}{\mc{S}}$ separately:
\begin{itemize}
  \item The client's view of the PIR protocol on the source database. We can
  express the view $\pir_\src$ as a (possibly randomized) function
  $f_1(\viewi{0:r-1}{\mc{S}}, \dsrc)$ in the client's view of the protocol
  execution thus far and the server's database $\overline \dsrc$. Each
  record $v \ne u$ in $\overline \dsrc$ is an encryption of the all-zeroes string.
  As long as $u \ne \ \perp$, the $\ord{u}$ record in $\overline \dsrc$
  contains the following:
  \begin{itemize}
    \item Arithmetic circuit encodings
    $\blaffine_s = (\blaffine_{\dne, x}, \blaffine_{\dnw, x})$
    of the source node $u$.
    
    \item Garbled circuit encodings $\blunblind_s$
    of the node $u$.

    \item Encryptions $\bar \kappa_\dn, \bar \kappa_\de, \bar \kappa_\de, \bar \kappa_\dw$
    of the source keys for the neighbors of $u = v_{r-1}$ in the next round
    of the protocol under the direction keys $\bar k_\dn, \bar k_\de, \bar k_\ds, \bar k_\dw$:
    \[ \bar \kappa_\dn = \ms{Enc}(\bar k_\dn, \matrixind{\bar k}{r+1}_{\src, v_\dn}), \quad
       \bar \kappa_\de = \ms{Enc}(\bar k_\de, \matrixind{\bar k}{r+1}_{\src, v_\de}), \quad
       \bar \kappa_\ds = \ms{Enc}(\bar k_\ds, \matrixind{\bar k}{r+1}_{\src, v_\ds}), \quad 
       \bar \kappa_\dw = \ms{Enc}(\bar k_\dw, \matrixind{\bar k}{r+1}_{\src, v_\dw}), \]
    where $v_\dn, v_\de, v_\ds, v_\dw$ is the neighbor of $u$ in $\mc{G}$ to
    the north, east, south, or west, respectively. If $u$ does not have a
    neighbor in a given direction $\diri \in \set{\dn, \de, \ds, \dw}$, then
    $\matrixind{\bar k}{r+1}_{\src, v_\diri} = \zo^\enckeylen$. 
  \end{itemize}

  \item The client's view of the PIR protocol on the destination database.
  Similar to the previous case, we can express the view $\overline \pir_\dst$ as a
  (possibly randomized) function $f_2(\viewi{0:r-1}{\mc{S}}, \overline{\pir_\src}, \overline \ddst$).
  In hybrid $\hybrid{2}$, every record $v \ne t$ is an encryption of the 
  all-zeroes string. The $\ord{t}$ record consists of the following:
  \begin{itemize}
    \item Arithmetic circuit encodings
    $\blaffine_t = (\blaffine_{\dne, y}, \blaffine_{\dnw, y})$
    of the destination node $t$.

    \item Garbled circuit encodings $\blunblind_t$ of the destination node $t$.
  \end{itemize}

  \item The garbled circuit $\bcunblind$.

  \item The set of garbled circuit encodings from the
  OT protocol. In the OT hybrid model, each OT is replaced
  by an oracle call to the ideal OT functionality.
  Thus, $\blunblind$ consists of a set of garbled circuit encodings
  $\blunblind_{z_\dne}, \blunblind_{z_{\dnw}}$ from the OT protocol, encodings
  $\blunblind_{\gamma_\dne}, \blunblind_{\gamma_\dnw},
   \blunblind_{\delta_\dne}, \blunblind_{\delta_\dnw}$
  of the server's unblinding coefficients and encodings
  $\blunblind_{k_\dne^0}, \blunblind_{k_\dne^1}, \blunblind_{k_\dnw^0}, \blunblind_{k_\dnw^1}$
  of the server's PRF keys.
\end{itemize}
We can express the adversary's view as
\begin{equation}
  \label{eq:view-simulation}
  \viewi{r}{\mc{S}} = \set{ \bar f \left( \viewi{0:r-1}{\mc{S}},
                                         \bar \kappa_\dn, \bar \kappa_\de, \bar \kappa_\ds, \bar \kappa_\dw,
                                         \overline{\affine},
                                         \blunblind_s, \blunblind_t \right),
                                         \bcunblind, \blunblind },
\end{equation}
where $\bar f$ is a (possibly randomized) function, and
$\overline{\ms{affine}} = (\blaffine_s, \blaffine_t)$ are the affine encodings of the source
and destination vectors.

By construction, the set of encodings $\set{\blunblind_s, \blunblind_t, \blunblind}$
constitute a complete set of
encodings for a single unique input $\bar x$ to the neighbor-computation function, and so invoking
Lemma~\ref{lem:yao-simulatable}, we conclude that there exists a $\ppt$
algorithm $\bar{\mc{S}}_1$ such that
\[ \viewi{r}{\mc{S}} \appc \bar{\mc{S}}_1 \left( 
      1^\lambda, \viewi{0:r-1}{\mc{S}},
      \bar \kappa_\dn, \bar \kappa_\de, \bar \kappa_\ds, \bar \kappa_\dw,
      \overline{\affine},
      \cunblind, \cunblind(\bar x).
  \right). \]
To conclude the proof, we condition on the possible outputs of $\cunblind(\bar x)$. There are
two cases:
\begin{itemize}
  \item Suppose $\cunblind(\bar x) = (\hat b_\dne, \hat b_\dnw, \hat k_\dne, \hat k_\dnw)$.
  By definition, this means that $\hat k_\dne = \bar k_\dne^{\hat b_\dne}$ and
  $\hat k_\dnw = \bar k_\dnw^{\hat b_\dnw}$. Let $\dir \in \set{\dn, \de, \ds, \dw}$
  be the direction of the edge from $v_{r-1}$ to $v_r$ in $\mc{G}$. The simulator chooses
  the garbled circuit encodings such that $\dir = \axistodir(\hat b_\dne, \hat b_\dnw)$.
  Thus, by construction of the direction keys $\bar k_\dn, \bar k_\de, \bar k_\ds, \bar k_\dw$,
  and PRF security, we conclude that the conditional distribution of $\bar k_{\dir'}$ given
  $\viewi{r}{\mc{S}}$ is computationally indistinguishable from uniform for all directions
  $\dir' \ne \dir$. Invoking semantic security of $(\ms{Enc}, \ms{Dec})$, we conclude
  that there exists a $\ppt$ algorithm $\bar{\mc{S}}_2$ such that
  \begin{equation}
    \label{eq:simulation-view-char}
    \viewi{r}{\mc{S}} \appc 
      \bar{\mc{S}}_2 \left( 1^\lambda, 
                            \viewi{0:r-1}{\mc{S}},
                            \bar \kappa_\dir,
                            \overline \affine,
                            \cunblind, \cunblind(\bar x) \right).
  \end{equation}
  By definition, $v_r$ is the node in direction $\dir$ with respect to $v_{r-1}$, so 
  \begin{equation}
    \label{eq:sim-direction-key}
    \bar \kappa_\dir = \ms{Enc}(\bar k_\dir, \matrixind{\bar k}{r+1}_{\src, v_\dir})
                      = \ms{Enc}({\bar k_\dir, \matrixind{\bar k}{r+1}_{\src, v_r}}).
  \end{equation}
  Since for all $v \in [n]$, the simulator chooses $\matrixind{\bar k}{r+1}_{\src, v}$
  uniformly and independently (of all other quantities),
  we conclude from the characterization
  in Eq.~\eqref{eq:simulation-view-char} and~\eqref{eq:sim-direction-key}
  that for all nodes $v \ne v_r$, the conditional distribution
  of $\matrixind{\bar k}{r + 1}_{\src, v}$ given $\viewi{0:r}{\mc{S}}$
  is computationally indistinguishable from uniform. The first condition follows.

  \item Suppose $\cunblind(\bar x) = \ \perp$. Using an argument similar to that made for
  the previous case, the conditional distribution
  of the direction keys $\bar k_\dn, \bar k_\de, \bar k_\ds, \bar k_\dw$
  given $\viewi{0:r}{\mc{S}}$
  is computationally indistinguishable from uniform. By semantic
  security of $(\ms{Enc}, \ms{Dec})$, we conclude that the adversary's view
  can be entirely simulated independently of $\matrixind{\bar k}{r+1}_{\src, v}$
  for all $v \in [n]$. Once again, the first condition holds.
\end{itemize}
To conclude the proof, we note that in the characterization
from Eq.~\eqref{eq:view-simulation}, all quantities in $\viewi{r}{\mc{S}}$ are
independent of $\bar k_{\dst, v}$ for $v \ne t$. Thus, if the conditional
distribution of $\bar k_{\dst, v}$ given $\viewi{0:r-1}{\mc{S}}$ is computationally
indistinguishable from the uniform distribution, then the corresponding condition
continues to hold at the end of round $r$.
\end{proof}

\noindent With this preparation, we now argue inductively that
$\viewi{0:R}{\real} \appc \viewi{0:R}{\mc{S}}$. We begin with the base case.

\begin{claim}
\label{claim:security-base-case}
Let $s, t \in [n]$ be the indices of the source and destination nodes,
respectively, that the client submits to the OT oracle in the setup
phase of the protocol. The following conditions hold at the end of the setup
phase of the protocol:
\begin{itemize}
  \item The simulator perfectly simulates the view of the $\mc{A}$:
  $\viewi{0}{\real} \equiv \viewi{0}{\mc{S}}$.

  \item For all nodes $v \ne s$, the conditional
  distributions $\matrixind{k}{1}_{\src, v}$ given $\viewi{0}{\real}$,
  and $\matrixind{\bar k}{1}_{\src, v}$ given $\viewi{0}{\mc{S}}$ are both uniform.

  \item For all nodes $v \ne t$, the conditional distributions
  $k_{\dst, v}$ given $\viewi{0}{\real}$ and $\bar k_{\dst, v}$ given 
  $\viewi{0}{\mc{S}}$ are both uniform.
\end{itemize}
\end{claim}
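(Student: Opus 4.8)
The plan is to observe that in the OT-hybrid model the entire setup phase consists of nothing more than two oracle calls to the ideal $1$-out-of-$n$ OT functionality, and to track exactly what each call reveals. First I would note that in the real protocol the server samples the $2n$ keys $\matrixind{k}{1}_{\src,1},\ldots,\matrixind{k}{1}_{\src,n}, k_{\dst,1},\ldots,k_{\dst,n}$ independently and uniformly from $\zo^{\enckeylen}$, and that the client's only incoming messages are the two records $\matrixind{\hat k}{1}_{\src} = \matrixind{k}{1}_{\src,s}$ and $\hat k_{\dst} = k_{\dst,t}$ returned by the OT oracle, where $s$ and $t$ are the (possibly adversarially chosen) indices the client submitted. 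Hence $\viewi{0}{\real} = (\matrixind{k}{1}_{\src,s}, k_{\dst,t})$, a pair of independent uniform strings over $\zo^{\enckeylen}$.

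Next I would observe that the simulator's behavior in the setup phase is syntactically identical: it samples $\matrixind{\bar k}{1}_{\src,i}, \bar k_{\dst,i} \getsr \zo^{\enckeylen}$ for all $i \in [n]$, and on the OT requests for indices $s$ and $t$ it returns $\matrixind{\bar k}{1}_{\src,s}$ and $\bar k_{\dst,t}$. Thus $\viewi{0}{\mc{S}} = (\matrixind{\bar k}{1}_{\src,s}, \bar k_{\dst,t})$ is drawn from exactly the same distribution as $\viewi{0}{\real}$, so $\viewi{0}{\real} \equiv \viewi{0}{\mc{S}}$ --- in fact an equality of distributions, not merely computational indistinguishability. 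This establishes the first bullet.

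For the remaining two conditions I would invoke the independence of the sampled keys: since the $2n$ keys are mutually independent and uniform, and $\viewi{0}{\real}$ is a (deterministic) function of only $\matrixind{k}{1}_{\src,s}$ and $k_{\dst,t}$, conditioning on $\viewi{0}{\real}$ leaves each $\matrixind{k}{1}_{\src,v}$ with $v \ne s$ and each $k_{\dst,v}$ with $v \ne t$ uniform on $\zo^{\enckeylen}$ (indeed independent of the view); the identical argument applies in the simulated world. The only point requiring any care is that in the OT-hybrid model the OT oracle reveals exactly one record per database even to a malicious client, so no additional source or destination key can leak through the OT protocol messages, and a corrupt client that submits an out-of-range index simply learns nothing from that call, which only strengthens the conclusion. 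There is no genuine obstacle in this base case; the substantive work is confined to the inductive step carried out via Lemma~\ref{lem:consistent-paths-sim} and the subsequent claims.
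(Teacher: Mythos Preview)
Your proposal is correct and follows essentially the same approach as the paper: both arguments identify $\viewi{0}{\real}$ and $\viewi{0}{\mc{S}}$ explicitly as the pair of returned keys, observe that all $2n$ keys are sampled independently and uniformly in both worlds, and deduce the three bullets directly from this independence. Your additional remarks about the OT-hybrid model revealing exactly one record and about out-of-range indices are not needed for the claim but do no harm.
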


\begin{proof}
We consider each claim separately:
\begin{itemize}
  \item In the real scheme, $\matrixind{k}{1}_{\src, s}, k_{\dst, t}$ are chosen
  uniformly and independently from $\zo^\enckeylen$. The same is true of
  the keys $\matrixind{\bar k}{1}_{\src, s}, \bar k_{\dst, t}$ in the simulation.
  Since $\viewi{0}{\real} = (\matrixind{k}{1}_{\src, s}, k_{\dst, t})$ and
  $\viewi{0}{\mc{S}} = (\matrixind{\bar k}{1}_{\src, s}, \bar k_{\dst, t})$, we
  conclude that $\viewi{0}{\real} \equiv \viewi{0}{\mc{S}}$.

  \item In the real protocol (resp., the simulation), for all $v \in [n]$, 
  the encryption key $\matrixind{k}{1}_{\src, v}$ (resp., $\matrixind{\bar k}{1}_{\src, v}$)
  is chosen uniformly and independently of all other quantities in the protocol.
  Thus, for $v \ne s$, the distribution of $\matrixind{k}{1}_{\src, v}$ is independent
  of $\viewi{0}{\real} = (\matrixind{k}{1}_{\src, s}, k_{\dst, t})$. Similarly,
  the distribution of $\matrixind{\bar k}{1}_{\src, v}$ is independent of
  $\viewi{0}{\mc{S}}$. The claim follows.

  \item Similar to the previous statement, for all $v \in [n]$, the encryption keys
  $k_{\dst, v}$ and $\bar k_{\dst, v}$ are chosen uniformly and independently of all
  other quantities in the protocol, which proves the claim. \qedhere
\end{itemize}
\end{proof}

\begin{claim}
\label{claim:security-inductive-step}
Fix $0 < r < R$. Suppose the following conditions hold in round $r$:
\begin{itemize}
  \item The view $\viewi{0:r - 1}{\real}$ of the adversary interacting in the real
  protocol is computationally indistinguishable from the view 
  $\viewi{0:r - 1}{\mc{S}}$ of the adversary interacting with the simulator.

  \item For all nodes $v \in [n]$, the conditional distribution of
  $\matrixind{k}{r}_{\src, v}$ given $\viewi{0:r-1}{\real}$ is computationally
  indistinguishable from the uniform distribution over $\zo^\enckeylen$
  if and only if the conditional distribution
  of $\matrixind{\bar k}{r}_{\src, v}$ given $\viewi{0:r-1}{\mc{S}}$ is
  computationally indistinguishable from the uniform distribution over
  $\zo^\enckeylen$.

  \item For all nodes $v \in [n]$, the conditional distribution of
  $k_{\dst, v}$ given $\viewi{0:r-1}{\real}$ is computationally
  indistinguishable from the uniform distribution over $\zo^\enckeylen$
  if and only if the conditional distribution
  of $\bar k_{\dst, v}$ given $\viewi{0:r-1}{\mc{S}}$ is
  computationally indistinguishable from the uniform distribution over
  $\zo^\enckeylen$.
\end{itemize}
Then, the conditions also hold in round $r + 1$.
\end{claim}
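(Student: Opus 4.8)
The plan is to prove the inductive step in two stages: first establish that the round-$r$ view of the malicious client is computationally indistinguishable between the real-with-abort experiment $\hyb_1$ and the simulation in $\hyb_2$ (conditioned on the inductive hypothesis), and then verify that the two key-uniformity invariants carry over from round $r$ to round $r+1$. Since $\viewi{0:r}{\real} = (\viewi{0:r-1}{\real}, \viewi{r}{\real})$ and $\viewi{r}{\real}$ is a (possibly randomized) function of the history together with the server's round-$r$ messages, it suffices to walk a hybrid over the components of $\viewi{r}{\cdot}$: the client's PIR view on $\dsrc$, its PIR view on $\ddst$, the garbled circuit $\tcunblind$, and the encodings obtained from the OT oracle.

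First I would handle the two PIR views. By the inductive hypothesis, every source key $\matrixind{k}{r}_{\src, v}$ with $v \ne v_{r-1}$ and every destination key $k_{\dst, v}$ with $v \ne t$ is computationally indistinguishable from uniform given the history; since the PIR sender's only input is the database, CPA security of $(\ms{Enc},\ms{Dec})$ lets me replace each such record with an encryption of $0^{\enckeylen}$, exactly as the simulator does. For the two active records --- record $v_{r-1}$ of $\dsrc$ and record $t$ of $\ddst$ --- the arithmetic encodings are handled by Lemma~\ref{lem:affine-encoding-simulatable}: the real encodings of $\matrixind{A}{\dne}_{v_{r-1}}, \matrixind{B}{\dne}_t$ (and likewise for $\dnw$) are distributed identically to $\mc{S}_{\ms{ac}}(f_\dne(\matrixind{A}{\dne}_{v_{r-1}}, \matrixind{B}{\dne}_t))$, and because $\beta_\dne$ is uniform the value $f_\dne(\cdot,\cdot) = \alpha_\dne\langle\cdot,\cdot\rangle + \beta_\dne$ is uniform over $\F_p$, matching the simulator's use of $\mc{S}_{\ms{ac}}(\bar z_\dne)$ with $\bar z_\dne \getsr \F_p$. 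The encryptions $\kappa_\dir$ of the neighbor source keys carried by record $v_{r-1}$ are dealt with below together with the garbled circuit.

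Next I would treat the garbled circuit and its input encodings jointly, since $\{\blunblind_s, \blunblind_t, \blunblind\}$ (and their real counterparts) form a complete set of encodings for a single unique input $\bar x$ to $\cunblind$; by Lemma~\ref{lem:yao-simulatable} both worlds can be simulated from $\cunblind(\bar x)$, so it remains to match the distribution of $\cunblind(\bar x)$. If the client submits the honest $\hat z_\dne, \hat z_\dnw$, then $\gamma_\dne = \alpha_\dne^{-1}$, $\delta_\dne = -\alpha_\dne^{-1}\beta_\dne$ make $\cunblind$ recover $\sign\langle \matrixind{A}{\dne}_{v_{r-1}}, \matrixind{B}{\dne}_t\rangle$ (and similarly for $\dnw$), hence the direction $\dir$ of the edge $v_{r-1}\to v_r$, or $\perp$ when $v_r = \perp$ or $v_{r-1}=t$; the simulator hard-codes $\gamma = 0$ together with $\bar\delta_\dne, \bar\delta_\dnw \in \{-1,1,\xi\}$ so that $\cunblind(\bar x)$ has the same output. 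If the client deviates on at least one of $\hat z_\dne, \hat z_\dnw$, then by definition of the abort event (Figure~\ref{fig:hyb-1-bad-event}) $\hyb_1$ halts with $\perp$ precisely when $\cunblind(\bar x) \ne \perp$, which --- using that for any fixed $z'$ the map $z' \mapsto \gamma z' + \delta$ with $\gamma \getsr \F_p^*$, $\delta \getsr \F_p$ is uniform over $\F_p$ --- occurs independently for each deviating coordinate with probability $\varepsilon = 2^{\tau+1}/p$; this is exactly the probability ($\varepsilon$ or $\varepsilon^2$) with which the simulator aborts, and otherwise it sets $\bar\delta_\dne = \xi = \bar\delta_\dnw$ to force $\cunblind(\bar x) = \perp$, so the two conditional distributions agree. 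Finally, PRF security makes $k_{\dir'}$ uniform for every $\dir' \ne \dir$ (the garbled circuit leaks only one PRF key per axis), so CPA security lets me replace all $\kappa_{\dir'}$ with $\dir' \ne \dir$ as the simulator does, while $\kappa_\dir = \ms{Enc}(k_\dir, \matrixind{k}{r+1}_{\src, v_r})$ in both worlds.

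Combining these steps yields $\viewi{0:r}{\real} \appc \viewi{0:r}{\mc{S}}$. Propagation of the invariants to round $r+1$ is then immediate: on the simulation side this is exactly Lemma~\ref{lem:consistent-paths-sim} (only $\matrixind{\bar k}{r+1}_{\src, v_r}$ appears in the view, and the destination keys $\bar k_{\dst, v}$, $v \ne t$, never enter it), and on the real side the identical argument shows that only $\matrixind{k}{r+1}_{\src, v_r}$ is exposed because $\cunblind$ reveals only one directional key; since the structure is mirrored, the biconditionals hold node by node. I expect the main obstacle to be the deviating-client case: making rigorous that the $\hyb_1$ abort event and the simulator's $\varepsilon/\varepsilon^2$ coin flips induce the same conditional distribution on $\cunblind(\bar x)$, and hence on the directional keys and the source keys used in the next round. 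Because this claim compares $\hyb_1$ to $\hyb_2$ rather than $\hyb_0$ to $\hyb_1$, all of this matching must be \emph{exact} --- the $R \cdot 2^{-\mu}$ statistical loss is confined to Claim~\ref{claim:hyb-0-1} and must not reappear here.
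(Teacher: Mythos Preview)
Your proposal is correct and follows essentially the same route as the paper: a hybrid over the PIR databases using CPA security and the key-uniformity hypothesis, Lemma~\ref{lem:affine-encoding-simulatable} for the arithmetic encodings in the two active records, Lemma~\ref{lem:yao-simulatable} to reduce the garbled-circuit step to matching the distribution of $\cunblind(\bar x)$, the case split on honest versus deviating client with the pairwise-independence calculation producing the $\varepsilon$ and $\varepsilon^2$ abort probabilities, and finally Lemma~\ref{lem:consistent-paths-sim} (and its real-world analogue) for the key invariants.

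One small correction worth flagging: the paper's simulator does \emph{not} replace the ciphertexts $\kappa_{\dir'}$ for $\dir' \ne \dir$ in the active source record --- it constructs all four neighbor-key encryptions exactly as the honest server does, so record $v_{r-1}$ in $\overline{\dsrc}$ is \emph{identically} distributed to its real counterpart and no PRF/CPA hybrid is needed at that point to match the round-$r$ views. The PRF-then-CPA reasoning you describe is deployed only afterward, once input privacy of Yao has hidden the unrevealed PRF keys, to argue that $\matrixind{k}{r+1}_{\src,v}$ remains computationally uniform for $v \ne v_r$ (the second invariant). Your ordering still works, but phrasing it as ``the simulator does this'' is inaccurate and risks a circularity if written carelessly, since knowing which $\dir$ is revealed requires the garbled-circuit analysis you have not yet performed at the database stage.
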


\begin{proof}
Let $s, t \in [n]$ be the source and destination nodes the client submits to the
OT oracle in the setup phase of the protocol, and let $(s = v_0, \ldots, v_R)$ be the path
the simulator receives from the trusted party. Consider the view of $\mc{A}$
in the simulation. By Claim~\ref{claim:security-base-case},
for all $v \ne s$, the conditional distribution of the keys $\matrixind{\bar k}{1}_{\src, v}$
given $\viewi{0}{\mc{S}}$ is uniform. Similarly, for all $v \ne t$, the conditional distribution
of the keys $\bar k_{\dst, v}$ given $\viewi{0}{\mc{S}}$ is uniform.
By iteratively applying Lemma~\ref{lem:consistent-paths-sim},
we conclude that for all nodes $v \ne v_{r-1}$, the conditional distribution of
$\matrixind{\bar k}{r}_{\src, v}$ given $\viewi{0:r-1}{\mc{S}}$ 
is computationally indistinguishable from uniform. Similarly, for all
$v \ne t$, the conditional distribution of $\bar k_{\dst, t}$ given
$\viewi{0:r-1}{\mc{S}}$ is computationally indistinguishable from uniform.

Invoking the inductive hypothesis, we have that for all nodes $v \ne v_{r - 1}$,
the conditional distribution of $\matrixind{k}{r}_{\src, v}$ given
$\viewi{0:r-1}{\real}$ is computationally
indistinguishable from uniform. Similarly, for all nodes $v \ne t$,
the conditional distribution of $k_{\dst, v}$ given $\viewi{0:r-1}{\real}$
is computationally indistinguishable from uniform.
We now show that on round $r$,
$(\pir_{\src}, \pir_{\dst}) \appc (\overline{\pir_{\src}}, \overline{\pir_{\dst}})$.
The client's view in the PIR protocol can be regarded as a (possibly randomized) function
of the client's view in the first $r-1$ rounds of the protocol and the server's input database $\mc{D}$.
Since $\viewi{0:r-1}{\real} \appc \viewi{0:r-1}{\mc{S}}$, it suffices to argue 
that the conditional distribution of $(\dsrc, \ddst)$ given $\viewi{0:r-1}{\real}$
is computationally indistinguishable
from the conditional distribution of $(\overline{\dsrc}, \overline{\ddst})$ given
$\viewi{0:r-1}{\mc{S}}$.

For notational convenience, let $u = v_{r-1}$. We use a hybrid argument.
We define the following hybrid experiments:
\begin{itemize}
  \item Hybrid $\hybrid{0}$ is the real game where the server prepares $\dsrc$
  and $\ddst$ as described in Figure~\ref{fig:protocol}.

  \item Hybrid $\hybrid{1}$ is identical to $\hybrid{0}$, except
  the server substitutes an encryption of the all-zeroes string under
  $\matrixind{k}{r}_{\src, v}$ for all records
  $v \ne u$ in $\dsrc$. 

  \item Hybrid $\hybrid{2}$ is identical to $\hybrid{1}$,
  except the server substitutes an encryption of the
  all-zeroes string under $k_{\dst, v}$ for all records
  $v \ne t$ in $\ddst$.
\end{itemize}
Since for all $v \ne u$, the conditional distribution of $\matrixind{k}{r}_{\src, v}$
given $\viewi{0:r-1}{\real}$ is computationally indistinguishable from uniform,
we can appeal to the semantic security of $(\ms{Enc}, \ms{Dec})$ to conclude
that hybrid experiments $\hybrid{0}$ and $\hybrid{1}$ are computationally
indistinguishable. Similarly, since for all $v \ne t$, the conditional distribution of
$k_{\dst, v}$ given $\viewi{0:r-1}{\real}$ is computationally indistinguishable from
uniform, we have that $\hybrid{1}$ and $\hybrid{2}$ are computationally
indistinguishable.

We now show that the joint distribution of
$(\dsrc, \ddst)$ in $\hybrid{2}$ is computationally indistinguishable from
the joint distribution of $(\overline \dsrc, \overline \ddst)$ in the simulation.
In $\hybrid{2}$,
every record $v \ne u$ in $\dsrc$ is an encryption of the all-zeroes string
under a key $\matrixind{k}{r}_{\src, v}$
that is computationally indistinguishable from uniform given the
adversary's view of the protocol thus far; the same is true in the
simulation. Similarly, every record $v \ne t$ in
$\ddst$ is an encryption of the all-zeroes string under a key
$k_{\dst, v}$ that looks computationally indistinguishable from uniform
to the adversary. This is the case in the simulation.
Let $r_{\src, u}$ be the $\ord{u}$ record in $\dsrc$ and let
$r_{\dst, t}$ be the $\ord{t}$ record in $\ddst$. Similarly, let
$\bar r_{\src, u}$ be the $\ord{u}$ record in $\overline \dsrc$ and let
$\bar r_{\dst, t}$ be the $\ord{t}$ record in $\overline \ddst$. It suffices now
to show that $(r_{\src, u}, r_{\dst, t})$ is computationally indistinguishable
from $(\bar r_{\src, u}, \bar r_{\dst, t})$.
In the real scheme, the record $r_{\src, u}$ contains the following components:
\begin{itemize}
    \item Arithmetic circuit encodings
    $\tlaffine_{\dne, x}(\matrixind{A}{\dne}_u), \tlaffine_{\dnw, x}(\matrixind{A}{\dnw}_u)$
    for the source $u$.
    
    \item Garbled circuit encodings $\tlunblind_s(u)$
    for the source $u$.

    \item Encryptions $\kappa_\dn, \kappa_\de, \kappa_\ds, \kappa_\dw$
    of the source keys for the neighbors of $u$ in the next round
    of the protocol under the direction keys $k_\dn, k_\de, k_\ds, k_\dw$:
    \[ \kappa_\dn = \ms{Enc}(k_\dn, \matrixind{k}{r+1}_{\src, v_\dn}), \quad
       \kappa_\de = \ms{Enc}(k_\de, \matrixind{k}{r+1}_{\src, v_\de}), \quad
       \kappa_\ds = \ms{Enc}(k_\ds, \matrixind{k}{r+1}_{\src, v_\ds}), \quad 
       \kappa_\dw = \ms{Enc}(k_\dw, \matrixind{k}{r+1}_{\src, v_\dw}), \]
    where $v_\dn, v_\de, v_\ds, v_\dw$ is the neighbor of $u$ in $\mc{G}$ to
    the north, east, south, or west, respectively. If $u$ does not have a
    neighbor in a given direction $\diri \in \set{\dn, \de, \ds, \dw}$, then
    $\matrixind{k}{r+1}_{\src, v_\diri} = \zo^\enckeylen$.
\end{itemize}
The record $r_{\dst, t}$ contains the following components:
\begin{itemize}
  \item Arithmetic circuit encodings
  $\tlaffine_{\dne, y}(\matrixind{B}{\dne}_t), \tlaffine_{\dnw, y}(\matrixind{B}{\dnw}_t)$
  for the destination $t$.

  \item Garbled circuit encodings $\tlunblind_t(t)$ for the destination $t$.
\end{itemize}
In the real protocol, the arithmetic circuit encodings are constructed
independently of the
garbled circuit for the neighbor-computation function. The neighbor keys
$k_\diri$ for $\diri \in \set{\dn, \de, \ds, \dw}$
and source keys $\matrixind{k}{r+1}_{\src, v_\diri}$ for the subsequent round of the
protocol are also generated independently of both the arithmetic circuit
encodings and the garbled circuit. Thus, the joint distribution decomposes into
three product distributions over the arithmetic circuit encodings,
the garbled circuit encodings, and the encryptions of the source keys for the
next round. We reason about each distribution separately:
\begin{itemize}
  \item The simulator constructs the garbled circuit for the neighbor-computation
  function exactly as in the real scheme. Thus, the garbled circuit encodings
  in $r_{\src, u}, \bar r_{\src, u}$ and 
  $r_{\dst, t}, \bar r_{\dst, t}$ are identically distributed.

  \item The neighbor keys $\bar k_\diri$ for $\dir \in \set{\dn, \de, \ds, \dw}$
  in the simulation are generated exactly as the keys $k_\diri$ in the real scheme.

  \item In the real scheme, the affine encodings $\tlaffine_{\dne, x}(\matrixind{A}{\dne}_u)$
  and $\tlaffine_{\dne, y}(\matrixind{B}{\dne}_t)$ evaluate to 
  \[ z_\dne = \alpha_\dne \langle \matrixind{A}{\dne}_u, \matrixind{B}{\dne}_t \rangle + \beta_\dne, \]
  where $\alpha_\dne$ is uniform in $\F_p^*$ and $\beta_\dne$ is uniform in $\F_p$.
  In particular, this means that $z_\dne$ is distributed uniformly over $\F_p$. This is
  precisely the same distribution from which the simulator samples $\bar z_\dne$.
  Together with Lemma~\ref{lem:affine-encoding-simulatable}, we conclude that
  \[ \mc{S}_{\ms{ac}}(\bar z_\dne) \equiv 
     \ms{S}_{\ms{ac}}(z_\dne) \equiv 
     (\tlaffine_{\dne, x}(\matrixind{A}{\dne}_u), \tlaffine_{\dne, y}(\matrixind{B}{\dne}_t)). \]
  An analogous argument shows that
  \[ \mc{S}_{\ms{ac}}(\bar z_\dnw) \equiv 
     \ms{S}_{\ms{ac}}(z_\dnw) \equiv 
     (\tlaffine_{\dnw, x}(\matrixind{A}{\dnw}_u), \tlaffine_{\dnw, y}(\matrixind{B}{\dnw}_t)), \]
  where $z_\dnw = \alpha_\dnw \langle \matrixind{A}{\dnw}_u, \matrixind{B}{\dnw}_t \rangle + \beta_\dnw$.
  We conclude that the arithmetic circuit encodings are identically distributed in
  both the real scheme and the simulation.
\end{itemize}
We conclude
that $(r_{\src, u}, r_{\dst, t}) \equiv (\bar r_{\src, u}, \bar r_{\dst, t})$,
and correspondingly,
$(\dsrc, \ddst) \appc (\overline{\dsrc}, \overline{\ddst})$.
In the real protocol, the client's view $\pir_\src, \pir_\dst$ in the PIR
protocols can be expressed as an efficiently-computable and possibly randomized
function $f$ of its view $\viewi{0:r-1}{\real}$ in the first $r-1$ rounds 
of the protocol and the server's databases $\dsrc, \ddst$:
\[ (\pir_\src, \pir_\dst) \equiv f(\viewi{0:r-1}{\real}, r_{\src,u}, r_{\dst, t}). \]
In the simulation, the simulator synthesizes databases $\overline{\dsrc}, \overline{\ddst}$,
and then plays the role of the server in the PIR protocol. Thus, we have that
\[ (\overline{\pir_\src}, \overline{\pir_\dst}) \equiv
     f(\viewi{0:r-1}{\mc{S}}, \bar r_{\src,u}, \bar r_{\dst, t}). \]
Moreover, we note that the records $r_{\src, u}, r_{\dst, t}, \bar r_{\src, u}, \bar r_{\dst, t}$
are constructed independently of variables from the previous round of the protocol.
Thus, using the inductive hypothesis, $\viewi{0:r-1}{\real} \appc \viewi{0:r-1}{\mc{S}}$,
and the fact that $(r_{\src, u}, r_{\dst, t}) \equiv (\bar r_{\src, u}, \bar r_{\dst, t})$,
we conclude that
\begin{equation}
  \label{eq:pir-views-indis}
  (\pir_{\src}, \pir_{\dst}) \equiv 
     f(\viewi{0:r-1}{\real}, r_{\src,u}, r_{\dst, t}) \appc
     f(\viewi{0:r-1}{\mc{S}}, \bar r_{\src,u}, \bar r_{\dst, t}) \equiv
     (\overline{\pir_{\src}}, \overline{\pir_{\dst}}).
\end{equation}
Using this characterization, we can write
\begin{equation}
  \label{eq:real-view}
  \viewi{0:r}{\real} = \set{\viewi{0:r-1}{\real}, \,
   f(\viewi{0:r-1}{\real}, \, r_{\src,u}, r_{\dst, t}), \,
   \tcunblind, \tlunblind},
\end{equation}
and similarly,
\begin{equation}
  \label{eq:simulation-view}
  \viewi{0:r}{\mc{S}} = \set{\viewi{0:r-1}{\mc{S}}, \,
    f(\viewi{0:r-1}{\mc{S}}, \, \bar r_{\src,u}, \bar r_{\dst, t}), \,
    \bcunblind, \blunblind}.
\end{equation}
From this, we express $\viewi{0:r}{\real}$ as an efficiently-computable
and possibly randomized function $f'$ in the garbled circuit components and the auxiliary
components:
\begin{equation}
 \label{eq:view-real-gc}
 \viewi{r}{\real} \equiv f'\left( \tlunblind_s(u), \, \tlunblind_t(t), \, \tlunblind, \,
   \tcunblind, \, \ms{aux} \right),
\end{equation}
where $\ms{aux}$ contains the additional variables $\viewi{r}{\real}$ depends on
that are independent of the garbled circuit encodings: $\viewi{0:r-1}{\real}$, the
arithmetic circuit encodings, and the encryptions of the source keys for the next
round of the protocol.

Since the simulator prepares the garbled circuit exactly as in the real protocol,
from the characterization of $\viewi{0:r}{\mc{S}}$ in Eq.~\eqref{eq:simulation-view},
we can similarly write
\begin{equation}
 \label{eq:view-sim-gc}
 \viewi{r}{\mc{S}} \equiv f'\left( \blunblind_s(u), \, \blunblind_t(t), \, \blunblind, \,
   \bcunblind, \, \overline{\ms{aux}} \right),
\end{equation}
where $\overline{\ms{aux}}$ contains the same auxiliary variables as $\ms{aux}$.
From Eq.~\eqref{eq:pir-views-indis}, we have in particular that
$\ms{aux} \appc \overline{\ms{aux}}$.

By construction, the encodings $\tlunblind_s(u), \, \tlunblind_t(t), \, \tlunblind$
in the real scheme constitutes a complete set of encodings for the garbled circuit
$\tcunblind$. Let $\tilde x$ be the associated input to the neighbor-computation
circuit $\cunblind$. In particular,
$\tilde x = (u, t, \hat z_\dne, \hat z_\dnw, \gamma_\dne, \gamma_\dnw,
             \delta_\dne, \delta_\dnw, k_\dne^0, k_\dne^1, k_\dnw^0, k_\dnw^1)$,
where $\hat z_\dne$ and $\hat z_\dnw$ are the encodings the client OTs for in the
protocol execution. Similarly, the encodings
$\blunblind_s(u), \, \blunblind_t(t), \, \blunblind$ constitutes a complete set of 
encodings for the garbled circuit $\bcunblind$. Let
$\bar x = (u, t, \hat z_\dne, \hat z_\dnw, \bar \gamma_\dne, \bar \gamma_\dnw,
           \bar \delta_\dne, \bar \delta_\dnw, \bar k_\dne^0, \bar k_\dne^1,
           \bar k_\dnw^0, \bar k_\dnw^1)$ be the associated input
to $\cunblind$ in the simulation. Moreover, by the characterization of the client's
view in Eq.~\eqref{eq:real-view} and~\eqref{eq:simulation-view}, on each
round of the protocol execution, we can associate two unique sets of affine encodings
with the client's view, one for each axis. Let $z_\dne$ and $z_\dnw$ denote the values
to which these two sets of affine encodings evaluate. In the real execution, we thus have
$z_\dne = \alpha_\dne \langle \matrixind{A}{\dne}_u, \matrixind{B}{\dne}_t \rangle + \beta_\dne$
and $z_\dnw = \alpha_\dnw \langle \matrixind{A}{\dnw}_u, \matrixind{B}{\dnw}_t \rangle + \beta_\dnw$.
In the simulation, we have $z_\dne = \bar z_\dne$ and $z_\dnw = \bar z_\dnw$.
We now consider three cases:

\begin{itemize}
  \item Suppose that the client OTs for the encodings of inputs consistent with
  the outputs of the arithmetic circuit. In other words, $\hat z_\dne = z_\dne$
  and $\hat z_\dnw = z_\dnw$. By definition, if
  $u = t$, then $\cunblind(\tilde x) = \ \perp$. Otherwise, by correctness of the
  affine encodings,
  $\cunblind(\tilde x) = (\hat b_\dne, \hat b_\dnw, k_\dne^{\hat b_\dne}, k_\dnw^{\hat b_\dnw})$,
  where $\dir = \axistodir(\hat b_\dne, \hat b_\dnw)$ is the direction of travel from
  $u$ to $t$, as determined by the next-hop routing matrices
  $\matrixind{A}{\dne}, \matrixind{B}{\dne}, \matrixind{A}{\dnw}, \matrixind{B}{\dnw}$.
  In particular, $\dir$ is the direction of the edge from $u = v_{r-1}$ to $v_r$.
  In the simulation, when $\hat z_\dne = \bar z_\dne$ and
  $\hat z_\dnw = \bar z_\dnw$, the simulator chooses unblinding factors
  $\bar \gamma_\dne, \bar \gamma_\dnw, \bar \delta_\dne, \bar \delta_\dnw$
  such that $\cunblind(\bar x) = \cunblind(\tilde x)$. Now, invoking
  the input-privacy of the garbling scheme (Definition~\ref{def:garbling-input-privacy}),
  we conclude that
  \begin{align*}
    \set{ \tlunblind_s(u), \tlunblind_t(t), \tlunblind, \tcunblind}
     & \appc \mc{S}_\yao(1^\lambda, \cunblind, \cunblind(\bar x)) \\
     & \appc \set{ \blunblind_s(u), \blunblind_t(t), \blunblind, \bcunblind}.
  \end{align*}
  Since the garbled circuit components of $\viewi{r}{\real}$ are computationally
  indistinguishable from $\viewi{r}{\mc{S}}$, we conclude that
  $\viewi{r}{\real} \appc \viewi{r}{\mc{S}}$.
  \\\\
  To see the second condition of Claim~\ref{claim:security-inductive-step} holds,
  we appeal to input privacy of the garbling scheme to argue that the conditional
  distribution of the keys $k_\dne^{1-\hat b_\dne}$ and $k_\dnw^{1-\hat b_\dnw}$
  is computationally indistinguishable from uniform given $\viewi{0:r}{\real}$.
  As in the proof of Lemma~\ref{lem:consistent-paths-sim}, the conditional
  distribution of $k_\dir'$ for all directions $\dir' \ne \dir \in \set{\dn, \de, \ds, \dw}$
  is computationally indistinguishable from uniform given $\viewi{0:r}{\real}$. Finally,
  invoking semantic security of $(\ms{Enc}, \ms{Dec})$, we conclude that
  for all $v \ne v_r$, the conditional distribution of $\matrixind{k}{r+1}_{\src, v}$
  is computationally indistinguishable from uniform. As shown in the proof of
  Lemma~\ref{lem:consistent-paths-sim}, this is precisely the
  case in the simulation.
  \\\\
  The third condition of Claim~\ref{claim:security-inductive-step} holds
  since the components in the client's view in round $r$ of the real scheme
  (Eq.~\ref{eq:real-view}) are independent of the destination keys
  $k_{\dst, v}$ for all $v \ne t$. Since the conditional distribution of $k_{\dst, v}$
  for $v \ne t$ given $\viewi{0:r-1}{\real}$ is computationally indistinguishable from
  uniform, the conditional distribution remains uniform conditioned on
  $\viewi{0:r}{\real}$. This precisely matches the distribution in the simulation
  (Lemma~\ref{lem:consistent-paths-sim}).
 
  \item Suppose that $u = t$. Then, $\cunblind(\tilde x) = \ \perp \ = \cunblind(\bar x)$.
    As in the previous case, input privacy of the garbling scheme yields
    $\viewi{r}{\real} \appc \viewi{r}{\mc{S}}$.
    \\\\
    It is not difficult to see that the second condition of
    Claim~\ref{claim:security-inductive-step} holds. Since $\cunblind(\tilde x) = \ \perp$,
    the conditional distribution of the keys $k_\dne^0, k_\dne^1, k_\dnw^0, k_\dnw^1$ is
    computationally indistinguishable from uniform given $\viewi{0:r}{\real}$. By semantic security
    of $(\ms{Enc}, \ms{Dec})$, we conclude that the conditional distribution of
    $\matrixind{k}{r+1}_{\src, v}$ given $\viewi{0:r}{\real}$ is computationally indistinguishable
    from uniform for all $v \in [n]$. By the case analysis in the proof of
    Lemma~\ref{lem:consistent-paths-sim}, this is also the case in the simulation.
    \\\\
    The third condition follows as in the previous case.

  \item Suppose that $u \ne t$ and moreover,
  the client OTs for encodings of inputs that are inconsistent
  with the outputs of the arithmetic circuit. We consider two possibilities.
  \begin{itemize}
    \item Suppose that exactly one of $\hat z_\dne \ne z_\dne$ and $\hat z_\dnw \ne z_\dnw$
    hold. Without loss of generality, suppose that $\hat z_\dne \ne z_\dne$.
    In the real scheme, this means that
    $\hat z_\dne \ne \alpha_\dne \langle \matrixind{A}{\dne}_u, \matrixind{B}{\dne}_t \rangle + \beta_\dne$.
    Next, since $\gamma_\dne = \alpha_\dne^{-1}$ and
    $\delta_\dne = \alpha_\dne^{-1}\beta_\dne$,
    where $\alpha_\dne$ is uniform in $\F_p^*$ and $\beta_\dne$ is uniform in $\F_p$. Thus,
    $\gamma_\dne$ and $\delta_\dne$ are also distributed uniformly over
    $\F_p^*$ and $\F_p$, respectively. Since the family of functions
    $\set{h_{\gamma, \delta}(z) = 
    \gamma z + \delta \pmod p \mid \gamma \in \F_p^*, \delta \in \F_p}$ is pairwise independent,
    it follows that for any distinct $z_\dne, z_\dne^\star \in \F_p$, and all $a, b \in \F_p$,
    \[ \Pr \left[ h_{\gamma_\dne, \delta_\dne}(z_\dne) = a 
      \wedge h_{\gamma_\dne, \delta_\dne}(z_\dne^\star) = b \right] = \frac{1}{p^2}, \]
    where the probability is taken over the randomness in $\gamma_\dne$ and $\delta_\dne$.
    The client's choice of $\hat z_\dne$ and $\hat z_\dnw$ depends only on its view
    $\viewi{0:r-1}{\real}$ of the protocol execution in the previous rounds
    of the protocol, as well as its view
    $\pir_\src$ and $\pir_\dst$ in the PIR protocol. The quantities $\gamma_\dne$ and
    $\delta_\dne$ are sampled independently of $\viewi{0:r-1}{\real}$.
    By the above characterization of $\pir_\src$ and $\pir_\dst$,
    the joint distribution of $(\pir_\src, \pir_\dst)$ is entirely simulatable given
    only $z_\dne$ and variables that are independent of $\gamma_\dne$ and
    $\delta_\dne$ (by invoking the simulator for the affine encodings). Thus, by
    pairwise independence, we conclude that
    \[ \Pr \left[ h_{\gamma_\dne, \delta_\dne}(\hat z_\dne) \in [-2^\tau, 2^\tau] \right] =
       \frac{2^{\tau+1}}{p} = \varepsilon. \]
    If $\hat z_\dne \ne z_\dne$, then with probability $1 - \varepsilon$,
    $\cunblind(\tilde x) = \ \perp$. With probabilty $\varepsilon$,
    $\cunblind(\tilde x) \ne \ \perp$, but this precisely corresponds to
    the first abort condition
    in experiment $\hyb_1$. Thus, in $\hyb_1$, with probability
    $1 - \varepsilon$, $\cunblind(\tilde x) = \ \perp$ and with probability $\varepsilon$,
    the protocol execution aborts.
    \\\\
    In the simulation, the simulator chooses the unblinding factors
    $\bar \gamma_\dne, \bar \gamma_\dnw, \bar \delta_\dne, \bar \delta_\dnw$,
    such that $\cunblind(\bar x) = \ \perp$ with probability
    $1 - \varepsilon$. With probability $\varepsilon$, the simulator aborts.
    We conclude by input privacy of the garbling scheme that the simulation
    is correct. The analysis for the case where $\hat z_\dnw \ne z_\dnw$, but
    $\hat z_\dne = z_\dne$ is entirely analogous.

    \item Suppose that both $\hat z_\dne \ne z_\dne$ and $\hat z_\dnw \ne z_\dnw$.
    By the same analysis as in the first case, we have that
    \[ \Pr \left[ h_{\gamma_\dne, \delta_\dne}(\hat z_\dne) \in [-2^\tau, 2^\tau] \right] =
       \varepsilon = 
       \Pr \left[ h_{\gamma_\dnw, \delta_\dnw}(\hat z_\dnw) \in [-2^\tau, 2^\tau] \right]. \]
    Since the two events are independent, $\cunblind(\tilde x) \ne \ \perp$
    with probability $\varepsilon^2$, and the experiment aborts in $\hyb_1$. With probability
    $1 - \varepsilon^2$, $\cunblind(\tilde x) = \ \perp$. In the simulation, the simulator
    chooses the unblinding factors such that $\cunblind(\bar x) = \ \perp$ with
    probability $1 - \varepsilon^2$, and aborts with probability $\varepsilon^2$. Correctness
    of the simulation follows by input privacy of the garbling scheme.
  \end{itemize}
  Since $\cunblind(\tilde x)$ is either equal to $\perp$ or the experiment aborts, 
  the proof of the second and third statements of Claim~\ref{claim:security-inductive-step}
  follows exactly as in the previous case.
\end{itemize}

Combining Claim~\ref{claim:security-base-case} and
Claim~\ref{claim:security-inductive-step}, we conclude by induction on $r$ that
$\viewi{0:R}{\real} \appc \viewi{0:R}{\mc{S}}$. Thus, the view of the protocol
execution simulated by $\mc{S}$ for $\mc{A}$ is computationally indistinguishable
from the view of $\mc{A}$ interacting with the server in $\hyb_1$.
Correctness of the simulation follows.
\end{proof}

\paragraph{Proof of Claim~\ref{claim:hyb-0-1}.} To conclude the proof of
Theorem~\ref{thm:malicious-client-security}, we show Claim~\ref{claim:hyb-0-1},
or equivalently, that no efficient adversary is able to 
distinguish $\hyb_0$ from $\hyb_1$ except with advantage
$\negl(\lambda) + R \cdot 2^{-\mu}$. Let $\mc{A}$ be
a distinguisher between $\hyb_0$ and $\hyb_1$ with distinguishing
advantage $\adv$:
\[ \adv = \abs{\Pr\left[ \hyb_0(\lambda, \pi, \mc{A}, \mc{E}) \right] - 
          \Pr\left[ \hyb_1(\lambda, \pi, \mc{A}, \mc{E}) \right]}. \]
By construction, $\hyb_0$ and $\hyb_1$
are identical experiments, except experiment $\hyb_1$ terminates if the
abort event in Figure~\ref{fig:hyb-1-bad-event} occurs. Thus, it must
be the case that $\mc{A}$ is able to cause the bad event to occur
with probability $\adv$ in the real experiment. But by
Claim~\ref{claim:hyb-1-2}, the real protocol execution experiment with
the abort event is computationally indistinguishable from the ideal-world
execution with the simulator $\mc{S}$ described in the proof of
Claim~\ref{claim:hyb-1-2}. Thus, if $\mc{A}$ is able to trigger the abort
event in the real protocol execution with probability $\adv$, it is able to
trigger the abort event when interacting with the simulator $\mc{S}$
with probability that is negligibly close to $\adv$. It suffices now to bound
the probability that the simulator $\mc{S}$ aborts the protocol execution.
On each round $r$, the simulator $\mc{S}$ aborts with probability at most
$\varepsilon = 2^{\tau + 1} / p \le 2^{-\mu}$ since
$p > 2^{\mu + \tau + 1}$ (irrespective of the computational power of
the adversary). By a union bound over the total number of rounds $R$,
we conclude that $\mc{S}$
aborts with probability at most $R \cdot 2^{-\mu}$ in the protocol execution.
Thus, the probability that $\mc{A}$ can trigger the abort event in the 
real protocol must be negligibly close to $R \cdot 2^{-\mu}$.
We conclude that $\adv \le \negl(\lambda) + R \cdot 2^{-\mu}$, which proves
the claim. \qed

Claims~\ref{claim:hyb-0-1} and~\ref{claim:hyb-1-2} show that no
no efficient environment can distinguish the real-world execution ($\hyb_0$)
from the ideal-world execution ($\hyb_2$), except with advantage negligibly
close to $R \cdot 2^{\mu}$. This proves
Theorem~\ref{thm:malicious-client-security}. \qed

}{}

\end{document}